\documentclass[pra,twocolumn,aps,floatfix,showpacs,tightenlines,superscriptaddress,amsmath,amssymb,showkeys,10pt,nofootinbib]{revtex4-2}
\pdfoutput=1
\usepackage{comment}
\usepackage{cancel}
\usepackage{graphicx} % Required for inserting images
\usepackage{amsmath,mathtools}
\usepackage{xcolor}
\usepackage{amssymb}
\usepackage{amsfonts, amsthm}
\usepackage{dsfont}
\usepackage{simplewick}
\usepackage[colorlinks=true, citecolor=blue, urlcolor=blue]{hyperref}  

\newtheorem{proposition}{\bf Proposition}[section]
\newtheorem{theorem}{\bf Theorem}[section]

\newtheorem{corollary}{\bf Corollary}[section]
\newtheorem{definition}{\bf Definition}[section]
\newtheorem{lemma}{\bf Lemma}[section]

\makeatletter
\def\smalloverbrace#1{\mathop{\vbox{\m@th\ialign{##\crcr\noalign{\kern3\p@}%
  \tiny\downbracefill\crcr\noalign{\kern3\p@\nointerlineskip}
  $\hfil\displaystyle{#1}\hfil$\crcr}}}\limits}
\makeatother

\def\choi{\mathsf{Ch}}

\def\Tr{\operatorname{Tr}}

\def\dim{\operatorname{dim}}
\def\supp{\operatorname{supp}}
 
\def\>{\rangle} \def\<{\langle} 
 
\def\Ch{\mathsf{Ch}}

\def\T+{\mathsf{T}_+} \def\ETy{\set{EleTypes}}
\def\Ty{\set{Types}} 
\def\Ev#1{\set{T}_{\mathbb R}(#1)}
\def\Evd#1{\set{T}_1(#1)}

\def\para{\parallel}

\mathchardef\mhyphen="2D

\newcommand{\ket}[1]{|#1\rangle}

\newcommand{\Bra}[1]{\langle \! \langle#1|}
\newcommand{\Ket}[1]{|#1\rangle \! \rangle}

\newcommand{\KetBra}[2]{{\Ket{#1}\Bra{#2}}}
\newcommand{\hilb}[1]{\mathcal{#1}}
\newcommand{\set}[1]{{\sf #1}}

\newcommand{\dket}[1]{| #1 \rangle\!\rangle} \newcommand{\dbra}[1]{\langle\!\langle #1 |}

\newcommand{\bigast}{\mathop{\scalebox{1.5}{\raisebox{-0.2ex}{$\ast$}}}}%

\begin{document}

\title{Higher-order transformations of bidirectional quantum processes}

\author{Luca Apadula}
\email{luca.apadula@cea.fr}
\affiliation{Universit\'e Paris-Saclay, 3 rue Joliot Curie, Breguet Building, 91190 Gif-sur-Yvette, France}
\affiliation{Commissariat \`a l'\'energie atomique et aux \'energies alternatives (CEA), DSM/LARISM, Orme des Merisiers, 91190 Gif-sur-Yvette, France}

\author{Alessandro Bisio}
\email{alessandro.bisio@unipv.it}
\affiliation{Dipartimento di Fisica, Universit\`a di Pavia, via Bassi 6, 27100 Pavia, Italy}
\affiliation{Istituto Nazionale di Fisica  Nucleare, Sezione di Pavia, Italy}

\author{Giulio Chiribella}
\email{giulio@cs.hku.hk}
\affiliation{QICI Quantum Information and Computation Initiative, Department of Computer Science, The University of Hong Kong, Pokfulam Road, Hong Kong}
\affiliation{Quantum Group, Department of Computer Science, University of Oxford, Wolfson Building, Parks Road, Oxford, OX1 3QD, United Kingdom}
\affiliation{Perimeter Institute for Theoretical Physics, 31 Caroline Street North, Waterloo, Ontario, Canada}

\author{Paolo Perinotti}
\email{paolo.perinotti@unipv.it}
\affiliation{Dipartimento di Fisica, Universit\`a di Pavia, via Bassi 6, 27100 Pavia, Italy}
\affiliation{Istituto Nazionale di Fisica Nucleare, Sezione di Pavia, Italy}

\author{Kyrylo Simonov}
\email{kyrylo.simonov@univie.ac.at}
\affiliation{Fakultät für Mathematik, Universität Wien, Oskar-Morgenstern-Platz 1, 1090 Vienna, Austria}

\begin{abstract}
Bidirectional devices are devices for which the roles of the input and output ports can be exchanged. Mathematically, these devices are described by bistochastic quantum channels, namely completely positive linear maps that are both trace-preserving and identity-preserving. Recently, it has been shown that bidirectional quantum devices can, in principle, be used in ways that are incompatible with a definite input-output direction, giving rise to a new phenomenon called input-output indefiniteness. Here we characterize the most general forms of input-output indefiniteness, associated with a hierarchy of higher-order transformations built from transformations of bistochastic quantum channels. Some levels of the hierarchy correspond to transformations that combine bistochastic channels in a definite causal order, while generally using each channel in an indefinite input-output direction. For other levels of the hierarchy, the indefiniteness can involve both the local input-output direction of each process and the global causal order among the processes. On the foundational side, the hierarchy of higher-order transformations characterized here can be regarded as the largest set of physical processes compatible with a time-symmetric variant of quantum theory, where the possible state transformations are restricted to bistochastic channels.
\end{abstract}
\pacs{11.10.-z}

\maketitle

% --------------------------------------------------------------------------
% Sec. I
% --------------------------------------------------------------------------

\section{Introduction}

An important class of quantum devices is bidirectional, meaning that the roles of the input port (through which the system enters) and the output port (from which the system exits) are interchangeable. For example, optical crystals such as half-wave plates and quarter-wave plates can be traversed in two opposite directions, generally giving rise to different rotations of the photon's polarization. Mathematically, bidirectional devices correspond to bistochastic quantum channels \cite{landau1993birkhoff}, that is, completely positive linear maps that are both trace-preserving and identity-preserving. These channels play a central role in quantum thermodynamics \cite{gour2015resource, chiribella2017microcanonical} and in the theory of majorization \cite{gour2018quantum}. Recently, they have attracted increasing interest in the foundations of quantum mechanics, due to their connection with the problem of time symmetry \cite{di2021arrow, hardy2021time, chiribella2021symmetries, chiribella2022quantum}.  

By definition, a bidirectional quantum device can be used in two alternative ways, depending on which port is treated as the input port. Recently, it was shown that bidirectional quantum devices can also be used in new ways, in which it is impossible to determine which port is the input port and which port is the output port \cite{chiribella2022quantum}. For example, one can construct interferometric setups in which a quantum system travels through the device in a coherent superposition of two opposite directions, giving rise to a new primitive called the ``quantum time flip'' \cite{chiribella2022quantum}. These setups give rise to a new quantum feature known as input-output indefiniteness \cite{Guo2024}, with applications to quantum channel discrimination \cite{chiribella2022quantum}, quantum communication \cite{Liu_2023}, quantum metrology \cite{Guo2024, xia2024nanoradian, agrawal2025indefinite}, quantum thermodynamics \cite{rubino2021quantum}, quantum memory effects \cite{Karpat2024}, and causal inequalities \cite{Liu2025}. Input-output indefiniteness has also been experimentally demonstrated in recent photonic experiments \cite{Guo2024, stromberg2024experimental}. 

The basic framework for quantum operations with indefinite input-output direction was developed in Ref.~\cite{chiribella2022quantum}. Mathematically, these operations correspond to transformations of bistochastic channels; their structure was characterized in \cite{chiribella2022quantum} in the case of transformations mapping $N$ bistochastic channels into a single (not necessarily bistochastic) channel. However, the case of general higher-order transformations with arbitrary numbers of channels in the output has remained open until now. These transformations represent deeper layers in the manipulation of bistochastic devices and can potentially provide new insights into time symmetry in quantum mechanics. 

In this paper, we provide a complete characterization of all admissible transformations with indefinite input-output direction. We define an infinite hierarchy of higher-order transformations, in which transformations of bistochastic channels provide the first level, and transformations at lower levels serve as inputs for transformations at higher levels. By characterizing the various levels of the hierarchy, we identify a variety of processes with interesting causal structures. For example, one of the levels of the hierarchy contains transformations that combine bistochastic channels in a definite causal order, while using each individual channel in an indefinite input-output direction. Physically, these transformations can be thought of as ways of connecting the operations performed in a sequence of local laboratories, in which the input-output direction is not fixed, but the causal order of the laboratories is. For these transformations, we prove a realization theorem, showing that they can be decomposed into a causally ordered sequence of transformations of individual bistochastic channels.

Technically, our results build on the notion of quantum supermap~\cite{chiribella2008transforming, chiribella2009theoretical, chiribella2013quantum} and on the hierarchy theory of higher order quantum transformations developed in~\cite{doi:10.1098/rspa.2018.0706, Apadula2024} (see also~\cite{bisioacta, taranto2025higherorderquantumoperations} for a broad overview). These concepts and methods have broad applicability to quantum information,  where they provide a flexible toolbox for the optimization of quantum protocols \cite{chiribella2008quantum, bisio2014optimal, chiribella2016optimal}. Notable applications include the study of  quantum channels with memory \cite{kretschmann2005quantum}, interactive quantum protocols~\cite{Gutoski:2007:TGT:1250790.1250873}, quantum metrology~\cite{chiribella2008memory, chiribella2012optimal, liu2023optimal}, quantum channel discrimination~\cite{chiribella2008memory, chiribella2012perfect, bavaresco2022unitary}, channel tomography~\cite{PhysRevLett.102.010404, li2025quantum}, cloning of transformations and measurements~\cite{chiribella2008optimal, bisio2011cloning, dur2015deterministic, PhysRevLett.114.120504}, storage and retrieval of transformations~\cite{bisio2010optimal, mozrzymas2018optimal, sedlak2019optimal, PhysRevA.102.032618}, as well as inversion, conjugation, and transposition of unitary operations~\cite{bisio2014optimal, Quintino2022, PhysRevLett.131.120602, chen2024quantum}. Moreover, higher order maps are central to the study of causality in quantum mechanics, where they provide a framework for quantum causal networks~\cite{chiribella2009theoretical,costa2016quantum, allen2017quantum}, quantum causal discovery~\cite{ried2015quantum, chiribella2019quantum, bai2022quantum}, as well as the stepping stone for the study of indefinite causal order~\cite{chiribella2009beyond, oreshkov2012quantum, chiribella2013quantum, AraujoLugano2014, Baumeler_2016, Wechs2021, Milz_2024, steffinlongo2025}.  
 
Quantum supermaps were originally defined as transformations of quantum channels, that is, completely positive trace-preserving maps. The corresponding hierarchy of higher-order transformations was then built by recursively defining higher-level transformations starting from quantum channels at the basic level \cite{Perinotti2017, doi:10.1098/rspa.2018.0706}. This hierarchy represents the largest set of processes that are logically consistent with the standard operational formulation of quantum theory.

The key difference here is that we consider supermaps defined on bistochastic channels, a proper subset of the set of all quantum channels. From a foundational point of view, bistochastic channels can be viewed as the dynamics in a time-symmetric variant of quantum theory, where the allowed state preparations are ensemble decompositions of the maximally mixed state \cite{chiribella2021symmetries}. The hierarchy characterized in this paper then represents the largest set of logically conceivable processes that are, in principle, compatible with this time-symmetric variant of quantum theory.

The remainder of the paper is organized as follows. In Section~\ref{sec:Choi}, we review the Choi isomorphism and define bistochastic transformations. In Section~\ref{sec:types}, we review the type system of quantum higher-order transformations. In Section~\ref{sec:axioms}, we present the operational axioms of higher-order quantum theory with indefinite input-output direction and provide a full characterization of deterministic admissible higher-order maps. In Section~\ref{sec:processes}, we present several applications of the developed theory. In particular, we introduce hierarchies of networks of higher-order maps and generalized process matrices, with bistochastic generalizations of quantum combs and process matrices arising as particular cases. In Section~\ref{sec:conclusions}, we draw conclusions and outlook.

% --------------------------------------------------------------------------
% Sec. II
% --------------------------------------------------------------------------

\section{Quantum theory and the Choi isomorphism}
\label{sec:Choi}

We denote quantum systems by capital letters $A, B, \dots, Z$, each associated with a Hilbert space $\hilb{H}_A, \hilb{H}_B, \dots, \hilb{H}_Z$. All systems considered are finite-dimensional, i.e., $d_A \coloneqq \dim(\hilb{H}_A) < \infty$. The one-dimensional (trivial) system is denoted by $I$, meaning $\hilb{H}_I = \mathbb{C}$. The identity operator on $\hilb{H}_A$ is denoted by $\mathds{1}_A$.

Given two systems $A$ and $B$, their parallel composition is denoted by $AB$, with joint space $\hilb{H}_{AB} =
\hilb{H}_{A} \otimes \hilb{H}_{B}$. We denote:
\begin{align*}
\mathcal{L}(\hilb{H}_A) &\equiv \text{linear operators on } \hilb{H}_A, \\
\mathsf{Hrm}(\hilb{H}_A) &\equiv \text{Hermitian operators on } \hilb{H}_A, \\
\mathsf{Trl}(\hilb{H}_A) &\equiv \text{Hermitian traceless operators on } \hilb{H}_A,
\end{align*}
and $\mathcal{L}(\mathcal{L}(\hilb{H}_A), \mathcal{L}(\hilb{H}_B))$ the set of linear maps from $\mathcal{L}(\hilb{H}_A)$ to $\mathcal{L}(\hilb{H}_B)$.

A \emph{state} of a quantum system $A$ is a positive operator $ 0 \leq \rho \in \mathcal{L}(\hilb{H}_A)$ such that $\Tr[\rho] \leq 1$. We notice that our definition includes \emph{probabilistic} states, i.e., elements of a quantum ensemble. If $\Tr[\rho] = 1$ for a quantum state $\rho$, we say that $\rho$ is a deterministic state. We denote by $T(A)$ the set of probabilistic states of system $A$ and by $\mathsf{T}_1(A)$ the set of deterministic states of system $A$.

Following Kraus' axiomatic approach \cite{Kraus}, a transformation $\mathcal{M} \in \mathcal{L}(\mathcal{L}(\hilb{H}_A), \mathcal{L}(\hilb{H}_B))$ from system $A$ to system $B$ is \emph{admissible} if it is a completely positive and trace-non-increasing map from quantum states to quantum states, and we say that $\mathcal{M}$ is a \emph{quantum operation}. A quantum operation that is trace preserving is called a \emph{quantum channel}. We denote by $T(A \to B)$ the set of quantum operations from system $A$ to system $B$ and by $\mathsf{T}_1(A \to B)$ the set of quantum channels from $A$ to $B$.

A set $\{ \mathcal{M}_i\}_{i\in \set S}$ (where $\set S$ is a set of indices) of quantum operations from system $A$ to system $B$ such that $\mathcal{M} \coloneqq \sum_{i \in \set S} \mathcal{M}_i$ is trace preserving is called a \emph{quantum instrument}. A special instance of an instrument is given by positive-operator-valued measures (POVMs), which map states into probabilities and are described by a collection of positive operators that sum to the identity. Moreover, states of a quantum system $A$ can be considered as a special case of completely positive maps from the trivial system $I$ to $A$.

A convenient way to represent maps between operator spaces is given by the Choi isomorphism \cite{CHOI1975285}.

\begin{theorem}[Choi isomorphism]\label{thm:choi}
Consider the map $\choi:\mathcal{L}(\mathcal{L}(\hilb{H}_A),\mathcal{L}(\hilb{H}_B)) \to \mathcal{L}(\hilb{H}_B \otimes \hilb{H}_A)$ defined as 
\begin{align} 
    \choi:\mathcal{M} \mapsto M, \qquad M\coloneqq(\mathcal{I}_{A}\otimes \mathcal{M}) [\KetBra{\mathds{1}}{\mathds{1}}],
\end{align} 
where $\mathcal{I}_{A}$ is the identity map on $\mathcal{L}(\hilb{H}_A)$ and $\Ket{\mathds{1}}\coloneqq\sum_{n=1}^{d_A} \ket{n}\ket{n}$, $\{ \ket{n} \}_{n=1}^{d_A}$ denoting an orthonormal basis of $\hilb{H}_A$. Then $\choi$ defines an isomorphism between $\mathcal{L}(\mathcal{L}(\hilb{H}_A),\mathcal{L}(\hilb{H}_B))$ and $\mathcal{L}(\hilb{H}_A \otimes \hilb{H}_B)$. The operator $M = \choi(\mathcal{M})$ is called the \emph{Choi operator} of $\mathcal{M}$. Moreover, one has\footnote{$\Tr_B$
denotes the partial trace on system $B$ and $\mathds{1}_A$
is the identity operator on system $A$.}:
\begin{align*} 
    &\Tr[\mathcal{M}(X)] = \Tr[X] \quad \forall X \in \mathcal{L}(\hilb{H}_A) \, \Leftrightarrow \, \Tr_B[M] = \mathds{1}_A , \\
    &\mathcal{M}[X]^\dagger = \mathcal{M}[X^\dagger] \, \Leftrightarrow \, M^\dagger = M, \\ 
    & \mathcal{M} \mbox{ is completely positive} \; \Leftrightarrow \; M \geq 0.
\end{align*}
\end{theorem}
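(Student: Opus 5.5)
We prove the three equivalences and the bijectivity of $\choi$ directly in the matrix-unit basis. Write $E_{nm}\coloneqq\ket{n}\!\bra{m}$, so that $\KetBra{\mathds{1}}{\mathds{1}}=\sum_{n,m}E_{nm}\otimes E_{nm}$ and
\begin{align*}
M=\sum_{n,m=1}^{d_A}E_{nm}\otimes\mathcal{M}(E_{nm}).
\end{align*}
(The ordering of factors is immaterial; we take the $A$ factor first, consistent with $\mathcal{I}_A\otimes\mathcal{M}$.)

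\textbf{Bijectivity.} Linearity of $\choi$ is immediate. The $E_{nm}$ form a basis of $\mathcal{L}(\hilb{H}_A)$, so $\mathcal{M}$ is fixed by the blocks $\mathcal{M}(E_{nm})=(\bra{n}\otimes\mathds{1}_B)M(\ket{m}\otimes\mathds{1}_B)$. Hence $\choi$ is injective. Both spaces have dimension $d_A^2d_B^2$, so $\choi$ is an isomorphism. Equivalently, the inverse is explicit: $\mathcal{M}(X)=\Tr_A[(X^T\otimes\mathds{1}_B)M]$, which one checks on the matrix units.

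\textbf{Trace preservation.} Taking the partial trace gives $\Tr_B[M]=\sum_{n,m}\Tr[\mathcal{M}(E_{nm})]\,E_{nm}$. This equals $\mathds{1}_A=\sum_{n,m}\delta_{nm}E_{nm}$ if and only if $\Tr[\mathcal{M}(E_{nm})]=\delta_{nm}=\Tr[E_{nm}]$ for all $n,m$. By linearity this holds if and only if $\Tr[\mathcal{M}(X)]=\Tr[X]$ for every $X$.

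\textbf{Hermiticity preservation.} We have $M^\dagger=\sum_{n,m}E_{mn}\otimes\mathcal{M}(E_{nm})^\dagger$. Relabelling indices, $M^\dagger=M$ is equivalent to $\mathcal{M}(E_{nm})^\dagger=\mathcal{M}(E_{mn})=\mathcal{M}(E_{nm}^\dagger)$ for all $n,m$. Because $X\mapsto\mathcal{M}(X^\dagger)^\dagger$ is linear, agreement with $\mathcal{M}$ on this basis is equivalent to $\mathcal{M}(X)^\dagger=\mathcal{M}(X^\dagger)$ for all $X$.

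\textbf{Complete positivity.} For ($\Rightarrow$): $\KetBra{\mathds{1}}{\mathds{1}}\ge0$ and $\mathcal{I}_A\otimes\mathcal{M}$ is positive, so $M\ge0$. The converse is the only nontrivial step. Given $M\ge0$, take its spectral decomposition $M=\sum_k\KetBra{K_k}{K_k}$, where each vector of $\hilb{H}_A\otimes\hilb{H}_B$ is identified with an operator $K_k:\hilb{H}_A\to\hilb{H}_B$ through $\Ket{K}=(\mathds{1}_A\otimes K)\Ket{\mathds{1}}$. Then $\mathcal{M}'(X)\coloneqq\sum_kK_kXK_k^\dagger$ has Choi operator $\sum_k(\mathds{1}\otimes K_k)\KetBra{\mathds{1}}{\mathds{1}}(\mathds{1}\otimes K_k)^\dagger=M$. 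By injectivity $\mathcal{M}=\mathcal{M}'$, which is in Kraus form. Finally, $(\mathcal{I}_R\otimes\mathcal{M})(\rho)=\sum_k(\mathds{1}\otimes K_k)\rho(\mathds{1}\otimes K_k)^\dagger\ge0$ for every $\rho\ge0$ on any ancilla $R$, so $\mathcal{M}$ is completely positive.

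The main point needing care is the vector–operator correspondence $\Ket{K}\leftrightarrow K$: we must check that every vector of $\hilb{H}_A\otimes\hilb{H}_B$ arises from a unique operator $K$, and that the resulting ordering of tensor factors matches the convention used for $\choi$. Both follow by expanding in the basis $\{\ket{n}\}$.
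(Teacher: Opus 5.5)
Your proof is correct and complete. The paper does not prove this theorem itself. It quotes it as a standard result with a citation to Choi's original work, and records only the inverse map in Eq.~\eqref{eq:invchoi}, which coincides with your formula $\mathcal{M}(X)=\Tr_A[(X^T\otimes\mathds{1}_B)M]$.

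Your argument is the standard one from that reference: matrix-unit expansion for bijectivity and for trace and Hermiticity preservation, and the spectral decomposition of $M$ to produce Kraus operators for the nontrivial direction of complete positivity.

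Putting the $A$ factor first, as $\mathcal{I}_A\otimes\mathcal{M}$ dictates, is the consistent reading of the statement. The statement itself switches between $\mathcal{L}(\hilb{H}_B\otimes\hilb{H}_A)$ and $\mathcal{L}(\hilb{H}_A\otimes\hilb{H}_B)$.
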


The inverse of the map $\choi$ is given by the following expression:
\begin{align}\label{eq:invchoi} 
    & (\choi^{-1}(M))[O] = \Tr_A[(O^T\otimes \mathds{1}_B)M], \\
    &O \in \mathcal{L}(\mathcal{H}_A), \quad M\in \mathcal{L}(\mathcal{H}_A\otimes \mathcal{H}_B), \nonumber
\end{align}
where $O^T$ denotes the transpose operator with respect to the orthonormal basis we used to define $\Ket{\mathds{1}}$ in Theorem \ref{thm:choi}.

A linear operator $M \in \mathcal{L}(\hilb{H}_B \otimes \hilb{H}_A)$ is the Choi operator of a quantum channel $\mathcal{M} : \mathcal{L}(\hilb{H}_A) \to \mathcal{L}(\hilb{H}_B)$, i.e., $M = \choi (\mathcal{M})$, if and only if
\begin{equation}\label{eq:chanChoi}
    M \geq 0, \qquad \Tr_B[M]=\mathds{1}_A.
\end{equation}
Similarly, $O \in \mathcal{L}(\hilb{H}_B \otimes \hilb{H}_A)$ is the Choi operator of a quantum operation $\mathcal{O} : \mathcal{L}(\hilb{H}_A) \to \mathcal{L}(\hilb{H}_B)$ if and only if $O \geq 0$ and there exists a quantum channel $\mathcal{D} : \mathcal{L}(\hilb{H}_A) \to \mathcal{L}(\hilb{H}_B)$ such that $O \leq D$, where $D \coloneqq \choi (\mathcal{D})$.

Given two maps $\mathcal{M} \in \mathcal{L}(\mathcal{L}(\hilb{H}_A), \mathcal{L}(\hilb{H}_B))$ and $\mathcal{N} \in \mathcal{L}(\mathcal{L}(\hilb{H}_B), \mathcal{L}(\hilb{H}_C))$, we can consider their composition $\mathcal{N} \circ \mathcal{M} \in \mathcal{L}(\mathcal{L}(\hilb{H}_A), \mathcal{L}(\hilb{H}_C))$, corresponding to their sequential application. On Choi operators, this physical composition corresponds to the link product, an associative operation~\cite{chiribella2008quantum, chiribella2009theoretical, bisioacta}.
\begin{definition}\label{def:linkProd}
    Let $M \in \mathcal{L}(\mathcal{H}_{\mathsf{I}})$ and $N \in \mathcal{L}(\mathcal{H}_{\mathsf{J}})$, where $\mathsf{I}$ and $\mathsf{J}$ are two finite sets of indices, and $\mathcal{H}_{\mathsf{A}}\coloneqq\otimes_{i \in \mathsf{A}} \mathcal{H}_i$. Then the link product $N * M$ is an operator in $\mathcal{L}(\mathcal{H}_{\mathsf{I}\setminus \mathsf{J}} \otimes \mathcal{H}_{\mathsf{J}\setminus \mathsf{I}})$ defined as
    \begin{equation}
        N * M = \Tr_{\mathsf{I}\cap \mathsf{J}}[(\mathds{1}_{\mathsf{J}\setminus\mathsf{I}}\otimes M^{T_{\mathsf{I}\cap \mathsf{J}}})(\mathds{1}_{\mathsf{I}\setminus\mathsf{J}}\otimes N)],
    \end{equation}
    where $\mathsf{A} \setminus \mathsf{B} \coloneqq \{i \in \mathsf{A}\mid i \notin \mathsf{B}\}$. In particular, if $\mathsf{I} \cap \mathsf{J} = \emptyset$, then $N * M = N \otimes M$, while if $\mathsf{I} = \mathsf{J}$, then $N * M = \Tr[N^T M]$.
\end{definition}

We will use a calligraphic font to denote a map between operator spaces and a roman font for the corresponding Choi operator. For example, given the linear map $\mathcal{M} : \mathcal{L}(\hilb{H}_A) \to \mathcal{L}(\hilb{H}_B)$, $M$ will denote its Choi operator. We will make an implicit use of the Choi isomorphism several times throughout the paper. For example, we will write ``(the linear operator) $R$ is a quantum channel'' instead of ``$R$ is the Choi operator of a quantum channel''. Similarly, the expressions $T(A\to B)$ and $\mathsf{T}_1(A \to B)$ will be used to denote the set of Choi operators of quantum operations and the set of Choi operators of quantum channels, respectively.

% --------------------------------------------------------------------------
\subsection{Bistochastic quantum transformations}

\begin{definition} (Partially bistochastic channel)
    Let ${R} \in \mathsf{T}_1(AB \to CD)$ be a quantum channel from the bipartite system $AB$ to the bipartite system $CD$. We say that ${R}$ is \emph{$A \mhyphen C$ bistochastic} if
    \begin{align}\label{eq:1}
        R \geq 0,\quad \Tr_{CD}[R] = \mathds{1}_{AB} \ \mbox{and}\  \Tr_{AD}[R] = \mathds{1}_{CB}.
    \end{align}
    We denote by ${T}_1(\hat{A}B \to \hat{C}D)$ the set of quantum channels from $AB$ to $CD$ that are $A \mhyphen C$ bistochastic.

    Let ${S} \in {T}(AB \to CD)$ be a quantum operation from the bipartite system $AB$ to the bipartite system $CD$. We say that ${S}$ is $A \mhyphen C$ bistochastic if there exists $R\in {T}_1(\hat{A}B \to \hat{C}D)$ such that $0 \leq S \leq R$. We denote by ${T}(\hat{A}B \to \hat{C}D)$ the set of quantum operations from $AB$ to $CD$ that are $A \mhyphen C$ bistochastic.
\end{definition}

In an \emph{$A \mhyphen C$} bistochastic channel, we can use $A$ as an input and $C$ as an output, or $C$ as an input and $A$ as an output; that is, exchanging $A$ and $C$ leads to a well-defined channel. Therefore,
\begin{align}
  \label{eq:bistochT}
  \begin{aligned}
    T_1&(\hat{A}B \to \hat{C}D) = \\
    &\mathsf{T}_1(AB \to CD) \cap \mathsf{T}_1(CB \to AD).
  \end{aligned}
\end{align}
By condition~\eqref{eq:1}, one can easily prove that the systems $A$ and $C$ must be isomorphic, i.e., $d_A = d_C$. An $A\mhyphen C$ bistochastic channel $R$ will be graphically represented as
\begin{equation*}
 R=\begin{array}{l}
      \includegraphics[width=60pt]{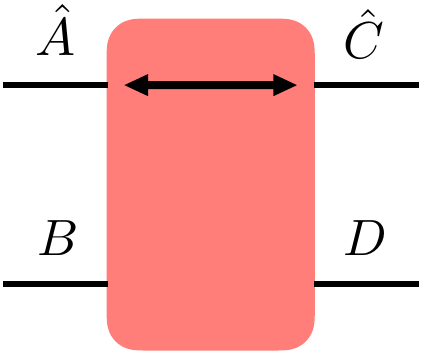},
 \end{array}   
\end{equation*}
where the bidirectional arrow reflects the bistochasticity 
of the corresponding quantum systems. The definition of $A \mhyphen C$ bistochastic channels (and $A \mhyphen C$ bistochastic quantum operations) can be generalized to the case in which the systems $A$ and $C$ are multipartite. For example, imposing bistochasticity on bipartite systems yields
\begin{align}
  \label{eq:6}
  \begin{aligned}
    T_1&(\widehat{AB}C \to \widehat{DE}F) \coloneqq \\
    &\mathsf{T}_1(ABC \to DEF) \cap \mathsf{T}_1(DEC \to ABF),
  \end{aligned}
\end{align}
while enforcing input-output inversion at the level of individual subsystems yields further constraints on the set of admissible maps, as
\begin{align}
  \label{eq:7}
  \begin{aligned}
    T_1&(\hat{A}\hat{\hat{B}}C \to \hat{D}\hat{\hat{E}}F) \coloneqq\\
    & \mathsf{T}_1(\hat{A}BC \to \hat{D}EF) \cap \mathsf{T}_1(A\hat{B}C \to D\hat{E}F).   
  \end{aligned}
 \end{align}
In the first case, if we exchange the bipartite system $AB$ with the bipartite system $DE$, we still obtain a well-defined channel. In the second case, we can exchange $A$ with $D$ and $B$ with $E$ independently and still obtain a well-defined channel. Clearly, $\mathsf{T}_1(\hat{A}\hat{\hat{B}}C \to \hat{D}\hat{\hat{E}}F) \subset \mathsf{T}_1(\widehat{AB}C \to \widehat{CD}E)$. The higher-order theory that we will develop in the following sections could easily include these multipartite generalizations. However, we decide not to do so in order to avoid cumbersome notation.

% --------------------------------------------------------------------------
% Sec. III
% --------------------------------------------------------------------------

\section{Type system}
\label{sec:types}
In this section, we lay the theoretical groundwork for studying the hierarchy of higher-order maps with indefinite temporal direction. We introduce a notion of elementary types, from which all higher-order types in the hierarchy can be defined recursively. Let us recall from the previous section that each finite-dimensional quantum system is labelled with roman letters, i.e., $A$, $B$, \dots, $Z$, while the trivial system is labelled by the letter $I$. Let us denote by $\set{A}$ the alphabet corresponding to the set of all these labels, and let us define the alphabet made of the symbols with a circumflex:
\begin{align}
  \label{eq:2}
  \hat{\set{A}} \coloneqq \{\hat{A} \mbox{ s.t. } A \in
  \set{A}, A \neq I  \} .
\end{align}
We exclude the symbol $\hat{I}$ from the set $\hat{\set{A}}$. For a given $\hat{X} \in \hat{\set{A}}$, we define
\begin{align}
  \label{eq:5}
  \set{E}_{X} \coloneqq \{\hat{X}w , w \in \set{A}^* \},
\end{align}
where the Kleene star $\set{A}^*$ is the set of all strings of symbols in $\set{A}$, including the empty string. The set $\set{E}_{X}$ is made of strings that begin with the symbol $\hat X\in\hat{\set{A}}$, followed by a string of symbols in $\set{A}$, e.g.\ $\hat{A}, \hat{A}BC , \hat{A}CDF \in \set{E}_{A}$ and $\hat{B}, \hat{B}C , \hat{B}DF \in \set{E}_{B}$. Let us now introduce the alphabet
\begin{align}
  \label{eq:8}
  \set{E} \coloneqq
  \set{A} \cup \hat{\set{A}} \cup \{ \to ,(,) \},
\end{align}
which contains:
\begin{enumerate}
    \item[(i)] all the symbols that label quantum systems (e.g., $A, B, \dots$),
    \item[(ii)] the symbols with a circumflex accent (e.g., $\hat{A}, \hat{B}$), except $\hat{I}$,
    \item[(iii)] the arrow symbol ``$\to$'' and the round brackets ``$)$'' and ``$($''.
\end{enumerate}
For a given pair of symbols $\hat{X},\hat{Y} \in \hat{\set{A}}$, we define the set $\set{B}_{X,Y} \subset \set{E}^*$ as follows:
\begin{align}\label{eq:4}
    \set{B}_{X,Y} \coloneqq \{w \in \set{E}^*, w = (\hat{X}u \to \hat{Y}v), \; u,v \in \set{A}^*\},
\end{align}
which contains strings of the form $(\hat{X}\to \hat{Y}), (\hat{X}A\to \hat{Y}), (\hat{X}ABCD \to \hat{Y} GHTU), \dots$. We now define the set of \emph{elementary types}
\begin{align} \label{eq:9}
    \begin{aligned}
        \ETy &\coloneqq \set{A}^{+} \cup \set{B}, \\
        \set{B} &\coloneqq \bigcup_{X,Y \in \set{A}} \set{B}_{X,Y} ,  
    \end{aligned}
\end{align}
with $\set{A}^+ := \set{A}^* \setminus \lambda$, where $\lambda$ is the empty string. The set $\ETy$ contains all the strings of the form $A,AB,\dots, (\hat{A} \to \hat{B}), (\hat{A}B\to \hat{C}), (\hat{A}CBD \to \hat{E} GHTU), \dots$, and so on.

Given the set of elementary types, we can recursively generate the whole hierarchy of types as follows.
\begin{definition}
    Let $\set{E}$ be the alphabet defined in Equation~(\ref{eq:8}). The set of \emph{types} of higher-order maps with time inversion is the smallest subset $\Ty \subset \set{E}^*$ such that
  \begin{align}
    \label{eq:10}
    \begin{aligned}
&       \ETy \subset \Ty, \\
&   \mbox{if } x,y \in \Ty \mbox{ then } (x\to y) \in \Ty. 
    \end{aligned}
   \end{align}
\end{definition}
A type $x \in \Ty$ is a string such as, e.g., $x =((( A\to (\hat{B}C \to \hat{D}E ) ) \to ((\hat{F} \to \hat{G})\to H))) \to (KL \to I)$.

We can notice an element of distinction if we compare the current definition of $\ETy$ with the one adopted for studying standard higher-order quantum maps~\cite{Perinotti2017, doi:10.1098/rspa.2018.0706}, where the set $\ETy$ was simply taken to be $\set A^{+}$. It follows that the set of elementary types is now strictly larger, as it contains the strings referring to partial bistochastic maps, which are therefore already cast at the ground level of the hierarchy.

\begin{definition} (Extension with an elementary type).
\label{def:extensionelementtype}
Let $x \in \Ty$ be a type and let $E \in \set{A}$ be an elementary type. The extension $x || E$ of $x$ by the elementary type $E$ is defined recursively as follows:
\begin{itemize}
\item if $x = A_1A_2\dots A_n \in \set{A}^*$ then
  $x || E \coloneqq A_1A_2\dots A_nE$;
\item if $x= \hat{X}B_1\dots B_n \to
    \hat{Y}C_1\dots C_m $
    then
     $x||E \coloneqq \hat{X}B_1\dots B_n \to
     \hat{Y}C_1\dots C_mE $;
\item if $x = y \to z$, $y,z \in \Ty$ then
     $x ||E \coloneqq y \to z||E$.
\end{itemize}
\end{definition}

\begin{definition} (Partial ordering $\preceq$). We say that type $x$ is a parent of type $y$ and we write $x \preceq_p y$ if there exists a type $z$ such that either $y = (x \rightarrow z)$ or $y = (z \rightarrow x)$. The relation $x \preceq y$ is defined as the transitive closure of the binary relation $\preceq_p$. 
\end{definition}

% --------------------------------------------------------------------------
% Sec. IV
% --------------------------------------------------------------------------

\section{Higher-order quantum maps with indefinite input-output direction}
\label{sec:axioms}

In the following section, we adopt the line of reasoning of~\cite{Perinotti2017, doi:10.1098/rspa.2018.0706} to provide a characterization of admissible events (maps) of a generic type. We begin by defining the set of admissible events of type $\ETy$. In contrast to Ref.~\cite{Perinotti2017, doi:10.1098/rspa.2018.0706}, this set encompasses not only quantum states but also bistochastic quantum transformations.

The linear sector of the framework is defined as follows:
\begin{definition} (Generalized events). If $x$ is a type in $\Ty$, the set of generalized events of type $x$, denoted by $T_{\mathbb{R}}(x)$, is defined by the following recursive definition.
\begin{itemize}
    \item for any $A_1A_2\dots A_n \in \set{A}^*$ we have $T_\mathbb{R}(A_1A_2\dots A_n) \coloneqq \mathcal{L}(\mathcal{H}_{A_1A_2\dots A_n})$.
    \item for any $x=(\hat{X}A_1\dots A_n\to \hat{Y} B_1\dots B_n) \in \set{B}$ we have $T_{\mathbb{R}}(x) \coloneqq \mathcal{L} (\mathcal{H}_{XA_1\dots A_nY B_1\dots B_n})$.
    \item if $x$ and $y$ are two types, then every Choi operator of a linear map $M: T_\mathbb{R}(x) \rightarrow T_\mathbb{R}(y)$ is a generalized event $M$ of type $(x \rightarrow y)$.
\end{itemize}
\end{definition}

From the previous definition, it is straightforward to prove the following lemma.

\begin{lemma} (Characterization of generalized events). Let
  $x$ be a type and let us consider the set of all the elementary types $A_i \neq I$ such that either $A_i$ or $\hat{A}_i$ occurs in the expression of $x$, namely
\begin{align} \label{eq:13}
    \set{A}_x  \coloneqq \{A_i \in \set{A}, A_i \neq I \mbox{ s.t. } A_i\in x  \lor \hat{A}_i \in x \}, 
\end{align}
where we define
\begin{align*}
    &    A_i\in x\quad\mbox{if }A_i\preceq x,\\
    &    \hat A_i\in x\quad\mbox{if }\exists B_i\in\set A, y,z\in\set A^*\quad\mbox{s.t. } (\hat A_iy\to\hat B_iz)\preceq x.
\end{align*}
Then $T_\mathbb{R}(x) = \mathcal{L}(\mathcal{H}_x)$, where $\mathcal{H}_x \coloneqq \otimes_{A_i \in \set{A}_x} \mathcal{H}_{A_i}$.
\end{lemma}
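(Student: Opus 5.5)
We argue by structural induction on the type $x$, following the recursive definition of $\Ty$ in Eq.~\eqref{eq:10}. Throughout, we identify operator spaces up to the canonical isomorphisms that reorder tensor factors and drop trivial factors $\mathcal{H}_I=\mathbb{C}$. This is why $\set{A}_x$ excludes $I$ and why the tensor product $\mathcal{H}_x$ is taken over a set of labels, with no attention to order. We also adopt the convention, implicit in the statement, that distinct occurrences of system labels in a type refer to distinct tensor factors. The set $\set{A}_x$ is then effectively the multiset of nontrivial labels appearing in the string $x$, with hats stripped.

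The base cases are the elementary types. If $x=A_1\dots A_n\in\set{A}^+$, the definition gives $T_\mathbb{R}(x)=\mathcal{L}(\mathcal{H}_{A_1\dots A_n})$. Removing trivial factors yields exactly $\mathcal{L}(\mathcal{H}_x)$, because the labels $A_i\preceq x$ are precisely the letters of $x$. If $x=(\hat X A_1\dots A_n\to\hat Y B_1\dots B_m)\in\set{B}$, the definition gives $\mathcal{L}(\mathcal{H}_{XA_1\dots A_nYB_1\dots B_m})$. Here $\set{A}_x$ collects $X$ and $Y$ through the hatted clause, which is satisfied with $(\hat X y\to\hat Y z)=x\preceq x$ read reflexively. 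It collects the remaining letters through the unhatted clause. Once more the two spaces coincide.

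For the inductive step, let $x=(y\to z)$ and assume $T_\mathbb{R}(y)=\mathcal{L}(\mathcal{H}_y)$ and $T_\mathbb{R}(z)=\mathcal{L}(\mathcal{H}_z)$. By definition, $T_\mathbb{R}(x)$ is the set of Choi operators of linear maps $\mathcal{L}(\mathcal{H}_y)\to\mathcal{L}(\mathcal{H}_z)$. By Theorem~\ref{thm:choi} this set is all of $\mathcal{L}(\mathcal{H}_y\otimes\mathcal{H}_z)$. It remains to check that $\set{A}_x=\set{A}_y\sqcup\set{A}_z$. Since $y\preceq_p x$ and $z\preceq_p x$, transitivity of $\preceq$ puts every label of $y$ or $z$ in $\set{A}_x$. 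Conversely, any $w\preceq x$ with $w\neq x$ satisfies $w\preceq y$ or $w\preceq z$, because the only parents of $x$ are $y$ and $z$. Moreover, $x$ itself is neither an elementary system nor of the form in $\set{B}$, since it is built from the arrow rule. Hence $\mathcal{H}_x=\mathcal{H}_y\otimes\mathcal{H}_z$, which closes the induction.

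The step most likely to cause difficulty is the bookkeeping in the definition of $\set{A}_x$, rather than any real mathematics. Two points need care. First, the reflexive reading of $\preceq$ is required so that elementary types count as occurring in themselves. Second, the unique-readability of types must be justified: each string in $\Ty$ decomposes in exactly one way as an elementary type or as $(y\to z)$. We would establish this by the standard bracket-balancing argument for the parenthesized strings generated by Eq.~\eqref{eq:10}, which ensures that the parents of $x$ are exactly $y$ and $z$. With these conventions fixed, the remainder is a direct application of the Choi isomorphism.
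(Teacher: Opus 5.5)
Your structural induction is correct and is exactly the ``straightforward'' argument the paper alludes to without writing it out: the base cases are read off the definition, and in the inductive step the Choi isomorphism gives $T_\mathbb{R}(y\to z)=\mathcal{L}(\mathcal{H}_y\otimes\mathcal{H}_z)$, with $\set{A}_{y\to z}=\set{A}_y\sqcup\set{A}_z$. The one caveat concerns your base cases. Even read reflexively, the literal definitions never place a single letter $\preceq$ a multi-letter string $A_1\dots A_n$, and they capture neither the unhatted letters nor the codomain label $\hat Y$ of a type in $\set{B}$; those steps therefore rest entirely on the reading you adopted at the outset, namely the multiset of labels occurring in the string, which is also the reading the paper's own example presupposes ($C,D,E,G\in\set{A}_x$).
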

Let us clarify the previous definition with an example: if $x = ( A\to (\hat{B}C \to \hat{D}E ) ) \to ((\hat{F} \to \hat{G})\to H)$, then $\set{A}_x = \{A,B,C,D,E,F,G,H \}$.

The next definition will play a crucial role in defining the hierarchy of admissible maps.

\begin{definition} (Extended event). Let $x = x_1 \to x_2$ be a non-elementary type, $E \in \set{A}$, and $M \in T_\mathbb{R}(x)$. The \emph{extension} of $M$ by $E$ is the operator $M_E \in T_\mathbb{R}(x_1 || E \rightarrow x_2 || E)$, which is the Choi operator of the map $\mathcal{M} \otimes \mathcal{I}_E : T_\mathbb{R}(x || E) \rightarrow T_\mathbb{R}(y|| E)$, where $\mathcal{I}_E : \mathcal{L}(\mathcal{H}_E) \rightarrow \mathcal{L}(\mathcal{H}_E)$ is the identity map.
\end{definition}

We are now ready to introduce the notion of admissible maps. We proceed in two steps. First, the notion of an admissible map is defined for the elementary types, which include not only quantum systems but also bistochastic channels.

\begin{definition} (Elementary deterministic event). Let $A_1\dots A_n \in \set{A}^*$ and $M \in T_\mathbb{R}(A_1\dots A_n)$. We say that $M$ is a \emph{deterministic event} of type $A_1\dots A_n$ if $M \geq 0$ and $\operatorname{Tr}[M] = 1$. $\mathsf{T}_1(A_1\dots A_n)$ denotes the set of deterministic events of type $A_1\dots A_n$.

Let $x=(\hat{X}A_1\dots A_n\to \hat{Y} B_1\dots B_n) \in \set{B}$ and $M \in T_{\mathbb{R}}(x)$. We say that $M$ is a \emph{deterministic event} of type $(\hat{X}A_1\dots A_n\to \hat{Y} B_1\dots B_n)$ if $M$ is an $X \mhyphen Y$ bistochastic channel from $XA_1\dots A_n$ to $YB_1\dots B_n$. $\mathsf{T}_1(\hat{X}A_1\dots A_n\to \hat{Y} B_1\dots B_n)$ denotes the set of deterministic events of type $(\hat{X}A_1\dots A_n\to \hat{Y} B_1\dots B_n)$.
\end{definition}

\begin{definition}
\label{def:admis_elemen_aevent}
  (Elementary admissible event). We say that $M$ is an \emph{admissible event} of type $A_1\dots A_n$ if $M \geq 0$ and there exists $N \in \mathsf{T}_1(A_1\dots A_n)$ such that $M \leq N$. $T(A_1\dots A_n)$ denotes the set of admissible events of type $A_1\dots A_n$.

We say that $M$ is an \emph{admissible event} of type $(\hat{X}A_1\dots A_n\to \hat{Y} B_1\dots B_n)$ if $M$ is an $X \mhyphen Y$ bistochastic quantum operation from $XA_1\dots A_n$ to $YB_1\dots B_n$. $T(\hat{X}A_1\dots A_n\to \hat{Y} B_1\dots B_n)$ denotes the set of admissible events of type $(\hat{X}A_1\dots A_n\to \hat{Y} B_1\dots B_n)$.
\end{definition}

Once the notion of admissibility is given for the elementary types, we can define the admissible events for an arbitrary type in the hierarchy. As one should expect, the definition is recursive: if one knows what the admissible and deterministic events are for the types $x$ and $y$, the following definition tells what the admissible and deterministic events are for the type $x \to y$.

\begin{definition}\label{def:AdmEvent} (Admissible event). Let $x, y \in \Ty$ be two types, $M \in T_\mathbb{R}(x \rightarrow y)$ be an event of type $x \rightarrow y$, and $M_E \in T_\mathbb{R}(x || E \rightarrow y || E)$ be the extension of $M$ by $E$. Let $\mathcal{M}: T_\mathbb{R}(x) \rightarrow T_\mathbb{R}(y)$ and $\mathcal{M} \otimes \mathcal{I}_E : T_\mathbb{R}(x || E) \rightarrow T_\mathbb{R}(y || E)$ be the linear maps whose Choi operators are $M$ and $M_E$, respectively. Let us denote by $\mathsf{T}_1(x)$ the set of \emph{deterministic} events of type $x$ and by $T(x)$ the set of \emph{admissible} events of type $x$.

We say that $M$ is a deterministic event of type $x \to y$ if $(\mathcal{M} \otimes \mathcal{I}_E)[T(x \parallel E)] \subseteq T(y \parallel E)$ and $(\mathcal{M} \otimes \mathcal{I}_E)[\mathsf{T}_1(x \parallel E)] \subseteq \mathsf{T}_1(y \parallel E)$ for any $E \in \set{A}$.

We say that $M$ is an admissible event of type $x \to y$ if
\begin{enumerate}
    \item[(i)] $(\mathcal{M} \otimes \mathcal{I}_E)[T(x \parallel E)] \subseteq T(y \parallel E)$ for any $E \in \set{A}$, \label{item:1}
    \item[(ii)] there exists a collection $\{N_i\}_{i=1}^n \subseteq T_\mathbb{R}(x \rightarrow y)$, $0 \leq n < \infty$, such that
    \begin{enumerate}
        \item for all $1 \leq i \leq n$, the map $\mathcal{N}_i$ satisfies item~(i),
        \item $((\mathcal{M} + \sum_{i=1}^n \mathcal{N}_i) \otimes \mathcal{I}_E)[T(x || E)] \subseteq T(y || E)$,
        \item $((\mathcal{M} + \sum_{i=1}^n \mathcal{N}_i) \otimes \mathcal{I}_E)[\mathsf{T}_1(x || E)] \subseteq \mathsf{T}_1(y || E)$.
    \end{enumerate}
\end{enumerate}
\end{definition}
It is clear from Definition~\ref{def:AdmEvent} that $\mathsf{T}_1(x) \subseteq T(x)$.

We now proceed to characterize the set of admissible events for an arbitrary type, thereby covering the entire hierarchy of higher-order maps.

\begin{theorem} (Characterization of admissible events) \label{thm:char_adm_events}
Let $x$ be a type and $M \in T_\mathbb{R}(x)$. Then we have
\begin{equation}
\label{eq:20}
  M \in T(x) \Leftrightarrow M \geq 0 \wedge \exists D \in \mathsf{T}_1(x) \; \mbox{s.t.} \; M \leq D.
\end{equation}
Let $x,y$ be two types, $M \in T(x\to y)$ be an event of type $x \to y$, and let $\mathcal{M}:T_{\mathbb{R}}(x) \to T_{\mathbb{R}}(y)$ be the linear map whose Choi operator is $M$. Then we have:
  \begin{align}
    \label{eq:charadetprelimin}
    M \in \mathsf{T}_1(x \to y) \! \iff \! \! \left  \{ \!  
    \begin{array}{l}
    M \geq 0, \\
    \mathcal{M} [\mathsf{T}_1(x)] \subseteq  \mathsf{T}_1({y})
    \end{array}
\right.
  \end{align}
\end{theorem}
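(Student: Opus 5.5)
We prove both statements simultaneously by structural induction on the type, following the scheme of the standard higher-order hierarchy. In the base case $x$ is elementary. For $x = A_1\dots A_n$, Eq.~\eqref{eq:20} is Definition~\ref{def:admis_elemen_aevent}. For $x\in\set B$, Eq.~\eqref{eq:20} is the definition of an $X\mhyphen Y$ bistochastic quantum operation, namely $0\le S\le R$ with $R\in\mathsf T_1(x)$. For the inductive step we assume that Eq.~\eqref{eq:20} holds for $x$, $y$ and for all their extensions $x\|E$, $y\|E$; the extensions are again types of the same shape, so the induction is on the structure of the type and not on dimensions. We then prove Eq.~\eqref{eq:charadetprelimin} for $x\to y$, and finally Eq.~\eqref{eq:20} for $x\to y$.

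\emph{Deterministic events.} Suppose first that $M\in\mathsf T_1(x\to y)$. Taking $E=I$ immediately gives $\mathcal M[\mathsf T_1(x)]\subseteq\mathsf T_1(y)$. To show $M\ge0$, I would use that $T(x\|E)$ contains all positive operators below deterministic ones, by the inductive hypothesis. Choose $E\cong\mathcal H_x$ and feed in a suitably rescaled maximally entangled operator $\lambda\KetBra{\mathds 1}{\mathds 1}$ that lies in $T(x\|E)$. Its image is proportional to $M$ and must lie in $T(y\|E)$, hence is $\ge0$.

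This needs a lemma: every deterministic set $\mathsf T_1(z)$ contains a full-rank element. I would build it inductively as a normalized identity operator $\mathds 1/c_z$. For bistochastic elementary types $\mathds 1/d$ is the completely depolarizing channel, which is bistochastic. The identity map then sends deterministic inputs to deterministic outputs because affine constraints are preserved.

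Conversely, suppose $M\ge0$ and $\mathcal M[\mathsf T_1(x)]\subseteq\mathsf T_1(y)$. We must show that $\mathcal M\otimes\mathcal I_E$ maps $\mathsf T_1(x\|E)$ into $\mathsf T_1(y\|E)$ and $T(x\|E)$ into $T(y\|E)$. Positivity follows from the link product of positive operators. For the second inclusion, take $N\le D$ with $D$ deterministic; then $(\mathcal M\otimes\mathcal I)(N)\le(\mathcal M\otimes\mathcal I)(D)$ by positivity, and Eq.~\eqref{eq:20} applies. What remains is the first inclusion, which needs a key structural lemma: $\mathsf T_1(z)$ is the intersection of the positive cone with an affine subspace $\Delta_z$, and $\mathsf T_1(z\|E)$ corresponds to $\Delta_z\otimes$ (suitable constraints on $E$). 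Concretely, $\Delta_{z\|E}$ is spanned by $\Delta_z\otimes\mathcal L(\mathcal H_E)$ intersected with the right normalization. With this, linearity of $\mathcal M$ on $\mathrm{span}\,\Delta_x$ extends to the tensor product with $E$.

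\emph{Main obstacle.} I expect the hard step to be this affine characterization for the new bistochastic elementary types, together with their extensions. $\mathsf T_1(\hat XA\to\hat YB)$ is cut out by two partial-trace conditions, and we must verify that they are compatible with $\|E$. Here I would use that appending $E$ lands on the output side, so $\|E$ only adds an identity factor on the input side. I would then carry the affine-subspace description up the hierarchy, using the recursive dual-cone description: $\Delta_{x\to y}$ is determined by $\Delta_x$ and $\Delta_y$.

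\emph{Admissible events.} For $M\in T(x\to y)$, item (ii) of Definition~\ref{def:AdmEvent} gives $D=M+\sum_i N_i$. Each $N_i$ satisfies (i), and hence is positive by the argument above; so $D$ is deterministic, and $M\le D$. Conversely, given $0\le M\le D$ with $D$ deterministic, set $N_1=D-M\ge0$. It satisfies (i) by the positivity argument above, so (ii) holds with $n=1$.
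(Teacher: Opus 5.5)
Your overall architecture matches the paper's, which adapts the proofs for the standard hierarchy. The shared ingredients are:
\begin{itemize}
\item positivity of anything satisfying item~(i), via a maximally entangled input dominated by a full-rank deterministic element;
\item reduction of the admissible inclusion to the deterministic one, through $0\le N\le D$ and \eqref{eq:20} for $y\parallel E$;
\item the complement $D-M$ for admissible events.
\end{itemize}

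\textbf{The gap.} It sits in the step you flag as the main obstacle: the structural lemma you propose for $(\mathcal{M}\otimes\mathcal{I}_E)[\mathsf{T}_1(x\parallel E)]\subseteq\mathsf{T}_1(y\parallel E)$ is false as stated. Extending by $E$ does not confine the $z$-factor to $\Delta_z$. The correct relation is $\Delta_{z\parallel E}=\mathsf{Hrm}(\mathcal{H}_z)\otimes\mathsf{Trl}(\mathcal{H}_E)\oplus\Delta_z\otimes\mathsf{I}_E$, i.e.\ the $z$-factor is arbitrary whenever the $E$-factor is traceless.

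For instance, take the bistochastic type $z=\hat{X}\to\hat{Y}$, where $\Delta_z=\mathsf{Trl}(\mathcal{H}_X)\otimes\mathsf{Trl}(\mathcal{H}_Y)$. The channel that measures $X$ in a basis, records the outcome in $E$ and outputs $\mathds{1}_Y/d$ has Choi operator $\frac{1}{d}\sum_i|i\rangle\langle i|_X\otimes\mathds{1}_Y\otimes|i\rangle\langle i|_E\in\mathsf{T}_1(\hat{X}\to\hat{Y}E)$. Its nonzero component in $\mathsf{Trl}(\mathcal{H}_X)\otimes\mathsf{I}_Y\otimes\mathsf{Trl}(\mathcal{H}_E)$ lies outside $(\mathsf{I}_{XY}\oplus\Delta_z)\otimes\mathsf{Hrm}(\mathcal{H}_E)$.

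Such inputs have components along $\overline{\Delta}_x\otimes\mathsf{Trl}(\mathcal{H}_E)$. These are precisely the directions on which the hypothesis $\mathcal{M}[\mathsf{T}_1(x)]\subseteq\mathsf{T}_1(y)$ says nothing about $\mathcal{M}$. An argument based on linearity of $\mathcal{M}$ on the span of $\mathsf{T}_1(x)$ therefore cannot reach them.

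\textbf{The missing ingredient.} What the step needs is the paper's partial-trace lemma (Lemma~\ref{lem:necesuffpart}), in the form: for positive $R$,
\[
R\in\mathsf{T}_1(z\parallel EE')\iff\Tr_E[R]\in\mathsf{T}_1(z\parallel E').
\]
With it the inclusion is one line and needs no affine-hull argument. For $X\in\mathsf{T}_1(x\parallel E)$ the image $(\mathcal{M}\otimes\mathcal{I}_E)(X)$ is positive, and $\Tr_E[(\mathcal{M}\otimes\mathcal{I}_E)(X)]=\mathcal{M}(\Tr_E[X])\in\mathsf{T}_1(y)$ because $\Tr_E[X]\in\mathsf{T}_1(x)$.

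The lemma fits into your structural induction:
\begin{itemize}
\item For $\hat{X}A\to\hat{Y}B$ it holds because both conditions $\Tr_{YBE}[R]=\mathds{1}_{XA}$ and $\Tr_{XBE}[R]=\mathds{1}_{YA}$ trace out the appended system. This is what your remark about $E$ landing on the output side should amount to.
\item For $z=u\to v$ it follows from the lemma for $v$ together with \eqref{eq:charadetprelimin} for $u\to v$ and $u\to v\parallel E$, since $\Tr_E\circ\mathcal{R}$ has Choi operator $\Tr_E[R]$.
\end{itemize}

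Your full-rank sublemma also rests on this lemma. For $z=u\to v$, $\mathds{1}/c_z$ is the Choi operator of $X\mapsto\Tr[X]\,\mathds{1}_v/c_z$, not of an identity map. Verifying that its extension maps $\mathsf{T}_1(u\parallel E)$ into $\mathsf{T}_1(v\parallel E)$ requires knowing the marginals of deterministic extended events, plus constancy of the trace on $\mathsf{T}_1(u)$. The phrase ``affine constraints are preserved'' does not supply either.
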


\begin{theorem}\label{theorem:T1}
  (Characterization of $\mathsf{T}_1(x)$). Let $x$ be a type. Then we have:
\begin{align} 
    & R \in \mathsf{T}_1(x) \Leftrightarrow R = \lambda_x \mathds{1}_x + X_x, \\ 
    & X_x \in \Delta_x \subseteq \mathsf{Trl}(\mathcal{H}_x), \quad R \geq 0,
\end{align} 
where the real positive coefficient $\lambda_x$ and the linear subspaces $\Delta_x$ are defined recursively as follows:
\begin{align}
    \lambda_E = \frac{1}{d_E}, &\quad \lambda_{\hat{X}A\to\hat{Y} B} = \frac{1}{d_Yd_B}, \quad \lambda_{x\rightarrow y} = \frac{\lambda_y}{d_x\lambda_x}, \\
    \label{eq:19} \Delta_{A_1\dots A_n} &= \mathsf{Trl}(\mathcal{H}_{A_1\dots A_n}), \\
    \nonumber \Delta_{\hat{X}A\rightarrow \hat{Y} B } &= \mathsf{Hrm}(\mathcal{H}_{XAY}) \otimes \mathsf{Trl}(\mathcal{H}_{B})\\
    \nonumber &\qquad \oplus \big( \mathsf{Trl}(\mathcal{H}_{X}) \otimes \mathsf{Hrm}(\mathcal{H}_A) \\
    \label{eq:21} &\qquad \otimes \mathsf{Trl}(\mathcal{H}_Y) \otimes \mathsf{I}_B \big), \\
    \Delta_{x\rightarrow y} &= \big(\mathsf{Hrm}(\mathcal{H}_x) \otimes \Delta_y \big)\oplus \big(\overline{\Delta}_x \otimes \Delta^\perp_y \big), \label{eq:11}
\end{align}
where, for a given Hilbert space $\mathcal{H}_A$, $\mathsf{I}_A = \operatorname{span}(\mathds{1}_A)$ is the one-dimensional subspace generated by the identity operator on $\mathcal{H}_A$, $\mathsf{Trl}(\mathcal{H}_A) = \{ X \mid \operatorname{Tr}[X] = 0, X^\dagger = X\}$ is the subspace of traceless Hermitian operators on $\mathcal{H}_A$, and $\mathsf{Hrm}(\mathcal{H}_A) = \mathsf{I}_A \oplus \mathsf{Trl}(\mathcal{H}_A)$ is the space of all Hermitian operators on $\mathcal{H}_A$. In turn, $\overline{\Delta}_x$ and $\Delta^\perp_x$ are the orthogonal complements of $\Delta_x$ in $\mathsf{Trl}(\mathcal{H}_x)$ and $\mathsf{Hrm}(\mathcal{H}_x)$, respectively.
\begin{proof}
    The proofs of Theorems~\ref{thm:char_adm_events} and~\ref{theorem:T1} closely follow those of Ref.~\cite{doi:10.1098/rspa.2018.0706}. The main difference is that the ground level of the hierarchy is larger, since it includes the partially bistochastic channels. Therefore, in order to adapt the proofs by induction of Ref.~\cite{doi:10.1098/rspa.2018.0706}, we need to verify that all the statements hold true for the elementary types $x \in \set{B}$. See Appendix~\ref{sec:proof-theorem-5.1} for details.
\end{proof}
\end{theorem}

% --------------------------------------------------------------------------
% Sec. V
% --------------------------------------------------------------------------

\section{Applications}
\label{sec:processes}

% --------------------------------------------------------------------------
\subsection{Basic higher-order operations and bistochastic supermaps}
In order to derive some relevant instances of higher-order maps for bistochastic devices, we still need to introduce the notion of a linear functional on a type $x$, expressed by the type $\overline{x} = x \rightarrow I$. This expression is particularly relevant, as it relates to the notion of parallel composition of types:
\begin{definition} \label{def:func}
    Given a type $x$, we define the type of functionals on events of type $x$ as follows:
    \begin{equation}
        \overline{x} = x \rightarrow I.
    \end{equation}
    Given two types $x$ and $y$, their tensor product is defined as
    \begin{equation}
        x \otimes y = \overline{x \rightarrow \overline{y}}.
    \end{equation}
\end{definition}

The deterministic events of types $\overline{x}$ and $x \otimes y$ are characterized as follows.
\begin{lemma}
    For types $x$ and $y$,
    \begin{align}
        \Delta_{\overline{x}} &= \overline{\Delta}_x, \quad \lambda_{\overline{x}} = \frac{1}{\lambda_x d_x}, \\
         \Delta_{x\otimes y} &= (\mathsf{I}_y \otimes \Delta_x) \oplus (\Delta_y \otimes \Delta_x) \oplus (\Delta_y \otimes \mathsf{I}_x), \\
        \lambda_{x\otimes y} &= \lambda_x \lambda_y.
    \end{align}
    \begin{proof}
        See Lemma~2 and Lemma~5 of \cite{doi:10.1098/rspa.2018.0706}, respectively.
    \end{proof}
\end{lemma}
As a relevant example, functionals on bistochastic channels, corresponding to higher-order maps transforming the latter into unit probability, are characterized by
\begin{align}
  \label{eq:17}
  \begin{aligned}
    &R \in\mathsf{T}_1((\hat{A} \rightarrow \hat{B}) \rightarrow I)  \iff \\
  &R = \frac{1}{d} \mathds{1}_{AB} + X_A \otimes \mathds{1}_B + \mathds{1}_A \otimes X_B, \\
  &R \geq 0,
  \quad X_A \in \mathsf{Trl}(\mathcal{H}_A),
  \quad X_B \in \mathsf{Trl}(\mathcal{H}_B).
  \end{aligned}
\end{align}

The following lemma shows that such functionals correspond operationally to choosing one of the two input-output directions probabilistically, preparing a state at the chosen input, and discarding the chosen output.

\begin{lemma}\label{lem:funcState}
    Let $R \in\mathsf{T}_1((\hat{A} \rightarrow \hat{B}) \rightarrow I)$. Then there exist $p \in [0,1]$ and density operators $\rho_A$ on $A$ and $\sigma_B$ on $B$ such that
    \begin{equation}
        R = p \rho_A \otimes \mathds{1}_B + (1-p) \mathds{1}_A \otimes \sigma_B.
    \end{equation}
    \begin{proof}
        See Appendix~\ref{app:funcState}.
    \end{proof}
\end{lemma}

This result shows that any deterministic functional on a bistochastic channel yields probabilities by classically selecting whether the device is used in the $A\rightarrow B$ or $B\rightarrow A$ direction \cite{Guo2024}.

Now, we assume that a bistochastic channel is transformed into a non-trivial quantum output and introduce the following basic prototype of higher-order transformations on bistochastic operations.

\begin{definition}[Bistochastic supermap]\label{def:bisupermap}
A bistochastic supermap is a higher-order map of type
\begin{equation}
   (\hat{A} \rightarrow \hat{B}) \rightarrow (P \rightarrow F),
\end{equation}
transforming a bistochastic operation into a quantum operation:
\begin{equation}
    \includegraphics[width=0.5\columnwidth]{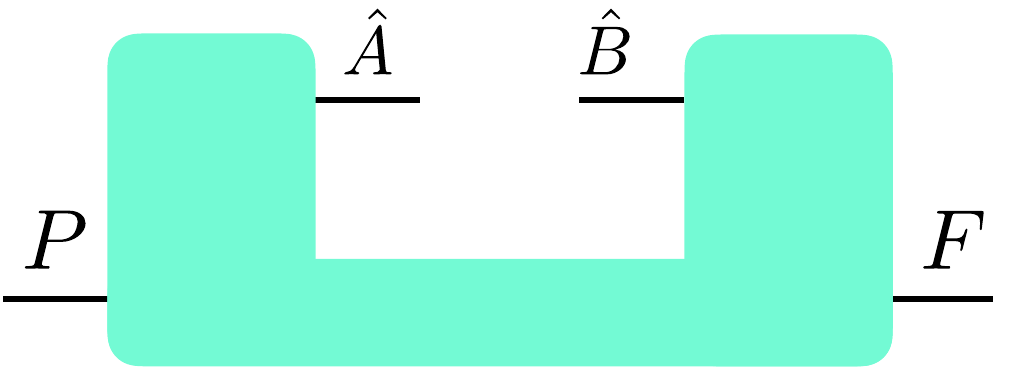}\nonumber
\end{equation}
\end{definition}

This notion provides a basic example of higher-order transformations acting on bistochastic devices and appears as a primitive in some of the constructions introduced later. A prominent example of a bistochastic supermap is the \textit{quantum time flip}~\cite{chiribella2022quantum} (see Appendix~\ref{app:QTF}), which implements a coherent control of the input-output direction of a bistochastic operation. Possible experimental implementations have been discussed in~\cite{Guo2024, stromberg2024experimental}.

% --------------------------------------------------------------------------
\subsection{Networks of higher-order maps and bistochastic generalization of quantum combs}

The notion of indefinite input-output direction captures the fact that in such a generalisation of quantum theory a physical device may not possess a predetermined orientation between its input and output systems: the same operation may be used in either direction. While a single bistochastic device already allows for such local bidirectional use, one may consider scenarios involving multiple such devices arranged in a definite order. To formalize this setting, we first recall the standard hierarchy of quantum combs, which describes higher-order maps where each inserted operation has a fixed input-output orientation and where all operations are composed in a definite causal order.

\begin{definition}\label{def:combnetwork}
For any $n\in\mathbb{N}$, an $n$-comb is a higher-order map of type
        \begin{equation}
            \mathbf{n} \coloneqq (\mathbf{n-1}) \rightarrow (P \rightarrow F), \qquad \mathbf{0} \coloneqq A \rightarrow B.
         \end{equation}
\end{definition}

The distinctive feature of this class is that every element of $\mathsf{T}_1(\mathbf{n})$ can be implemented as a quantum circuit with $n$ open slots \cite{PhysRevA.69.022309}:
\begin{equation}
    \includegraphics[width=\columnwidth]{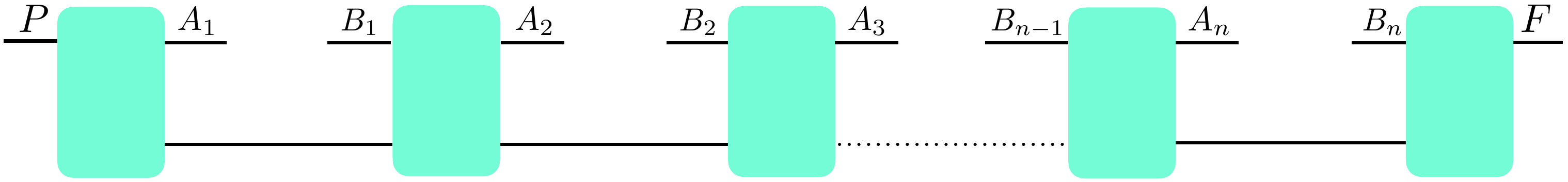}.\nonumber
\end{equation}
Each slot represents a local laboratory where the agent inserts an arbitrary quantum operation by connecting its input-output wires. This architecture enforces a fixed causal order among the operations and is known as a \emph{quantum network}.

The simplest case is the $1$-comb, a deterministic event of type $\mathbf{1} = (A_1 \rightarrow B_1) \rightarrow (P \rightarrow F)$, corresponding to a \textit{quantum supermap} transforming quantum operations into quantum operations:
\begin{equation*}
\begin{array}{l}
    \includegraphics[width=100pt]{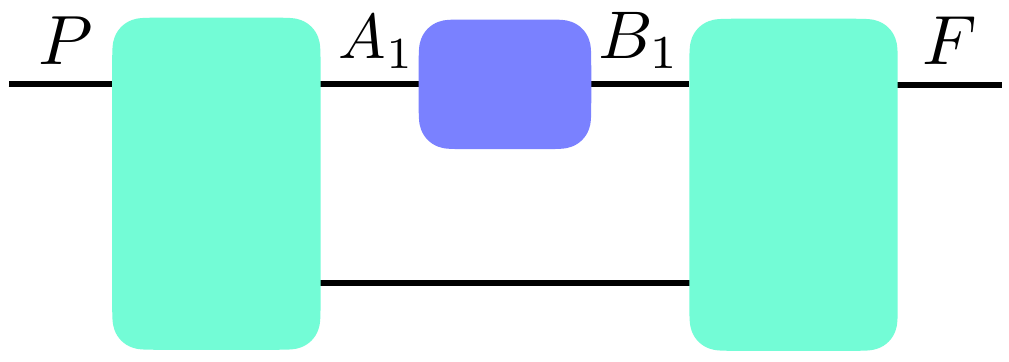}
\end{array}
=
\begin{array}{l}
\includegraphics[width=44pt]{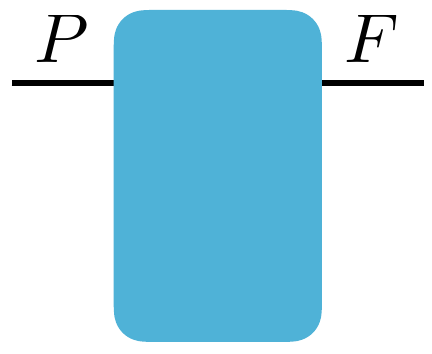}
\end{array}.
\end{equation*}

Having recalled the standard notion of quantum combs, it is natural to ask how this hierarchy can be extended when the local operations are bistochastic. Such an extension should describe local laboratories where the agents can implement operations in either input-output direction, while still preserving a well-defined global causal order. Clearly, this is a special instance of the more general scenario in which (globally) causally ordered agents can implement operations of a given type in their local laboratories. We are therefore led to consider a notion of a \emph{network of higher-order maps} (see Fig.~\ref{fig:networkhigh}).

\begin{figure}[t!]
    \centering
    \includegraphics[width=\columnwidth]{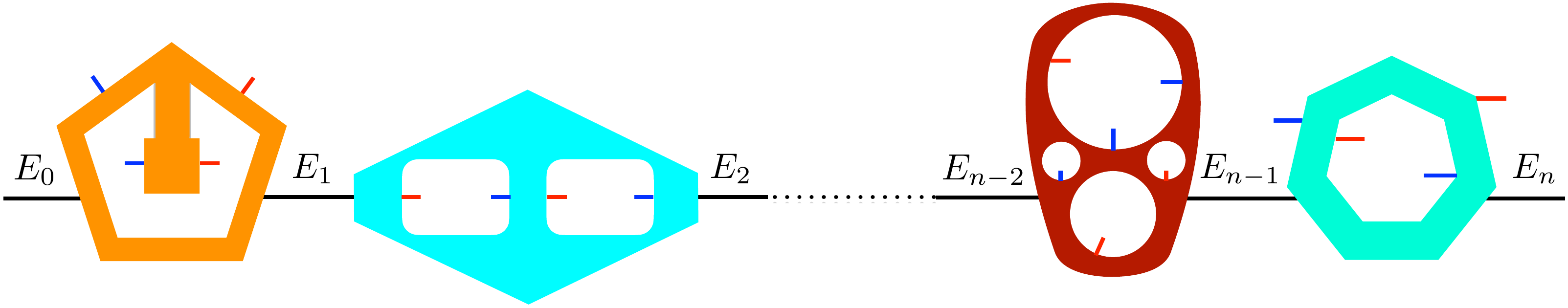}
    \caption{A network of higher-order maps allows local agents to implement more general quantum operations (namely, higher-order maps), thereby inducing an indefinite causal structure among the local quantum systems, while the corresponding local laboratories remain causally ordered with respect to one another.}
    \label{fig:networkhigh}
\end{figure}

\begin{definition}[Network of higher-order maps]
\label{def:nethigherordermaps}
  Let $n \in \mathbb{N}$, and $\{x_i\}_{i=1}^n$ be a set of types. An operator $R\in \mathcal{L}(\mathcal{H}_{E_0} \otimes \bigotimes_{i=1}^n \mathcal{H}_{x_i}\otimes  \mathcal{H}_{E_{n}})$ is a deterministic network of $n$ higher-order maps of types $\{x_i\}_{i=1}^n$ if
  \begin{align}
    \label{eq:26}
    R = M_1 * M_2 * \dots * M_n, \\
    M_i \in \mathsf{T}_1(\overline{x_i} \to (E_{i-1} \to E_{i})). \label{eq:26a}
  \end{align}
  We denote by $\mathtt{HN}_{\mathbf{n}}$ the set of deterministic networks of $n$ higher-order maps.
\end{definition}
Note that the family of networks of higher-order maps is not identified by a type within the hierarchy, but rather by a constructive definition based on the composition of higher-order maps. This raises the question of whether the composition rule introduced in Equation~\eqref{eq:26} always defines an admissible higher-order transformation. In Appendix~\ref{app:network-higher-order-welldef}, we answer this question in the affirmative.

The following result provides the mathematical characterization of the set of deterministic networks of $n$ higher-order maps.

\begin{proposition}\label{prop:higherordernetworkchar}
  An operator $R$ is a deterministic network of $n$ higher-order maps if and only if
  \begin{align}
    \label{eq:27}
   & \begin{aligned}
      R &\geq 0, \\
      R &= \lambda \mathds{1}_{E_0 x_1\ldots x_nE_n} + X, \\
    \lambda &= d^{-1}_{E_{n}}
    \prod_{i=1}^n \lambda_{x_i}, \quad X \in
    \Delta_{\mathtt{HN}_{\mathbf{n}}}, \\
    \end{aligned}
    \end{align}
    where
    \begin{align}
    \nonumber \Delta_{\mathtt{HN}_{\mathbf{n}}}& =
     \mathsf{Hrm}(\mathcal{H}_{E_0})
     \otimes \Biggl[ \bigotimes_{i=1}^n \mathsf{Hrm}(\mathcal{H}_{x_i}) \otimes \mathsf{Trl}(\mathcal{H}_{E_{n}}) \\
    & \oplus
    \bigoplus_{i=1}^{n}\Biggl(
    \bigotimes_{j=1}^{i-1}
    \mathsf{Hrm}(\mathcal{H}_{x_j}) \otimes {\Delta}_{x_i} \otimes \bigotimes_{j=i+1}^n \mathsf{I}_{x_j} \Biggr) \otimes \mathsf{I}_{E_{n}} \Biggr].
  \end{align}
\end{proposition}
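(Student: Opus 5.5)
We prove the statement by induction on $n$. We treat each $M_i$ through Theorem~\ref{theorem:T1} applied to the type $\overline{x_i}\to(E_{i-1}\to E_i)$. The first step is to compute this elementary building block explicitly. We have $\Delta_{\overline{x_i}}=\overline{\Delta}_{x_i}$ and $\lambda_{\overline{x_i}}=1/(\lambda_{x_i}d_{x_i})$. The type $E_{i-1}\to E_i$ is an ordinary channel type with $\Delta_{E_{i-1}\to E_i}=\mathsf{Hrm}(\mathcal H_{E_{i-1}})\otimes\mathsf{Trl}(\mathcal H_{E_i})$, whose orthogonal complement in the Hermitian operators is $\mathsf{Hrm}(\mathcal H_{E_{i-1}})^{\perp\text{-part}}=\mathsf{I}_{E_{i-1}}\otimes \mathsf I_{E_i}\oplus \mathsf{Trl}(\mathcal H_{E_{i-1}})\otimes\mathsf I_{E_i}$. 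Equation~\eqref{eq:11} then gives
\[
\Delta_{M_i}=\mathsf{Hrm}(\mathcal H_{x_i})\otimes\mathsf{Hrm}(\mathcal H_{E_{i-1}})\otimes\mathsf{Trl}(\mathcal H_{E_i})\oplus \Delta_{x_i}\otimes \mathsf{Hrm}(\mathcal H_{E_{i-1}})\otimes\mathsf I_{E_i},
\]
where we used $\overline{\overline{\Delta}}_{x_i}=\Delta_{x_i}$ inside $\mathsf{Trl}$. The normalisation is $\lambda_{M_i}=\lambda_{x_i}/d_{E_i}$. Equivalently, $M_i\ge0$ is a $\mathsf{T}_1(\mathbf{0})$-like comb tooth: $\Tr_{E_i}M_i=\mathds 1_{E_{i-1}}\otimes S_i$ with $S_i\in\mathsf{T}_1(x_i)$. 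This case $n=1$ already matches the claimed $\Delta_{\mathtt{HN}_{\mathbf 1}}$ and the claimed $\lambda$.

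\textbf{Necessity.} We write $R=R_{n-1}*M_n$ and contract over $E_{n-1}$. Positivity follows from the positivity of the link product of positive operators. For the linear constraints we take $\Tr_{E_n}$. Since $\Tr_{E_n}M_n=\mathds 1_{E_{n-1}}\otimes S_n$, we get $\Tr_{E_n}R=\Tr_{E_{n-1}}[R_{n-1}]\otimes S_n$. By induction, $\Tr_{E_{n-1}}R_{n-1}$ has identity on $x_{n-1}$-successors and so on. Unfolding this gives the recursive form: $\Tr_{E_n}R=\mathds 1_{E_0}\otimes$(sum of terms $\bigotimes_{j<i}\mathsf{Hrm}\otimes\Delta_{x_i}\otimes\mathsf I$) plus the identity component. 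This is exactly the statement that $R-\lambda\mathds 1$ has components outside $\mathsf{Trl}(\mathcal H_{E_n})$ only in the listed summands. The $\mathsf{Trl}(E_n)$ part is unconstrained, which accounts for the first summand. Here we need the identity $\mathsf{Hrm}(x_i)\cong \mathsf I_{x_i}\oplus\Delta_{x_i}\oplus\overline{\Delta}_{x_i}$. The $\overline\Delta_{x_i}$ components must vanish once $E_i$ is traced, because $S_i\in\mathsf{T}_1(x_i)$ has the form $\lambda\mathds 1+\Delta_{x_i}$.

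\textbf{Sufficiency.} This is the main obstacle. It is the analogue of the realization theorem for quantum combs. Given $R\ge0$ of the stated form, we define recursively $R^{(k)}$ as $R$ with the last $k$ teeth traced out and normalised. The structure of $\Delta_{\mathtt{HN}_{\mathbf n}}$ implies
\[
\Tr_{E_n}R=R^{(1)}\otimes S_n,\qquad R^{(1)}\in\mathtt{HN}_{\mathbf{n-1}}\text{ (with output }E_{n-1}\text{ trivialised)}.
\]
Here $S_n\in\mathsf{T}_1(x_n)$ comes from the $\mathsf I_{x_j}$ factors for $j>i$. We then dilate, as in the comb proof of \cite{chiribella2009theoretical}. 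We set $E_{n-1}':=$ a purification space and take $M_n:=(R^{(1)})^{-1/2}R(R^{(1)})^{-1/2}$, suitably transposed, on the support. This yields $\Tr_{E_n}M_n=\mathds 1_{E'_{n-1}}\otimes S_n$, and we must check that this operator lies in $\mathsf{T}_1(\overline{x_n}\to(E'_{n-1}\to E_n))$ by the $n=1$ computation above. Finally we apply the induction hypothesis to $R^{(1)}$, with output $E_{n-1}'$.

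The delicate point is factoring out $S_n$ exactly. We must show that the $\Delta_{x_n}$ component cannot be correlated with earlier Hermitian parts in a way that is not a product. We plan to handle this by noting that the summand with $i=n$ is $\bigotimes_{j<n}\mathsf{Hrm}\otimes\Delta_{x_n}$, which is generally non-product. If that is so, we will instead absorb $S_n$ into a correlated conditional, that is, an operator $\mathds 1_{E'}\otimes S_n$ with $E'$ carrying the memory, and verify membership using Theorem~\ref{thm:char_adm_events}.
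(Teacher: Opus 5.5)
Your overall architecture matches the paper's. You induct on $n$, contract the last slot and $E_n$ with the normalized identity to get $S$, purify $S$ into a memory $E'_{n-1}\cong\supp(S)$, and take $M_n=S^{-1/2}RS^{-1/2}$. Your base case and your formula for $\Delta_{M_i}$ are also correct.

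\textbf{The gap.} The problem is the tooth property you read off from $\Delta_{M_i}$, namely $\Tr_{E_i}M_i=\mathds{1}_{E_{i-1}}\otimes S_i$ with $S_i\in\mathsf{T}_1(x_i)$. This is false, and it contradicts your own $\Delta_{M_i}$: its $\mathsf{I}_{E_i}$-component is $\Delta_{x_i}\otimes\mathsf{Hrm}(\mathcal{H}_{E_{i-1}})$, not $\Delta_{x_i}\otimes\mathsf{I}_{E_{i-1}}$. What actually holds is
\[
\Tr_{E_i}M_i=\lambda_{x_i}\mathds{1}_{E_{i-1}x_i}+\sum_k G_k\otimes D_k,\qquad G_k\in\mathsf{Hrm}(\mathcal{H}_{E_{i-1}}),\quad D_k\in\Delta_{x_i}.
\]
For each memory state the slot sees an element of $\mathsf{T}_1(x_i)$, but which element may depend on the memory.

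A concrete counterexample: take $x_i=\hat A_i\to\hat B_i$ and let $M_i$ be the controlled unitary $\sum_k U_k\otimes|k\rangle\langle k|$, with the control carried from $E_{i-1}$ to $E_i$. This is a legitimate partially bistochastic tooth, yet $\Tr_{E_i}M_i=\sum_k|k\rangle\langle k|_{E_{i-1}}\otimes\dket{U_k}\dbra{U_k}$, which is not a product.

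This breaks both directions as written:
\begin{itemize}
\item From network to \eqref{eq:27}: you use $\Tr_{E_n}R=\Tr_{E_{n-1}}[R_{n-1}]\otimes S_n$. That identity would force $\Tr_{E_n}R$ to be a product. But the $i=n$ summand $\bigotimes_{j<n}\mathsf{Hrm}(\mathcal{H}_{x_j})\otimes\Delta_{x_n}$ is there precisely to allow such correlations, so the step is invalid.
\item From \eqref{eq:27} to network: your target $\Tr_{E_n}M_n=\mathds{1}_{E'_{n-1}}\otimes S_n$ is unattainable in general. Your last paragraph senses this but leaves the key verification open.
\end{itemize}

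\textbf{The repair.} Use the correct $n=1$ condition in both directions.
\begin{itemize}
\item From the network: $\Tr_{E_n}R=R_{n-1}*\Tr_{E_n}M_n=\lambda_{x_n}\Tr_{E_{n-1}}[R_{n-1}]\otimes\mathds{1}_{x_n}+\sum_k(R_{n-1}*G_k)\otimes D_k$. The first term yields the summands with $i<n$ by the inductive hypothesis. The second lies in $\mathsf{Hrm}(\mathcal{H}_{E_0x_1\cdots x_{n-1}})\otimes\Delta_{x_n}$, which is the $i=n$ summand.
\item Conversely: the $D_k$ are traceless and $\lambda_{\overline{x_n}}d_{x_n}\lambda_{x_n}=1$. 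So condition \eqref{eq:27} gives $\Tr_{E_n}R=\lambda_{x_n}S\otimes\mathds{1}_{x_n}+\sum_kP_k\otimes D_k$, with $S=R*(\lambda_{\overline{x_n}}\mathds{1}_{x_n}\otimes\mathds{1}_{E_n})$. Conjugating by $S^{-1/2}$ gives $\Tr_{E_n}M_n=\lambda_{x_n}\mathds{1}_{E'_{n-1}x_n}+\sum_kS^{-1/2}P_kS^{-1/2}\otimes D_k$. This is exactly membership of $M_n$ in $\mathsf{T}_1(\overline{x_n}\to(E'_{n-1}\to E_n))$, and no factor $S_n$ needs to be split off.
\end{itemize}
Two checks then remain. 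First, the purification of $S$ satisfies \eqref{eq:27} for $n-1$: its $\mathsf{Trl}(\mathcal{H}_{E'_{n-1}})$-part is unconstrained and its partial trace over $E'_{n-1}$ is $S$. Second, linking it with $M_n$ returns $R$, which uses $\supp(R)\subseteq\supp(S)\otimes\mathcal{H}_{x_nE_n}$.
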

\begin{proof}
  We prove the claim by induction on $n$. For $n = 1$, the statement reduces to the characterization of deterministic events of type $\overline{x_1} \to (E_0 \to E_1)$ via Theorem~\ref{theorem:T1}. Hence, the claim holds.

  \textbf{Necessity.}
  Assume the statement holds for $n-1$, and let $R$ be an operator satisfying~\eqref{eq:27} for $n$. We define:
    \begin{align}\label{eq:29}
        \nonumber S &\coloneqq R * \Bigl(\lambda_{\overline{x_n}} \mathds{1}_{x_n} \otimes \mathds{1}_{E_{n}}\Bigr) \\
        &= R * \Bigl(\frac{1}{\lambda_{x_n}d_{x_n}} \mathds{1}_{x_n} \otimes \mathds{1}_{E_{n}} \Bigr).
    \end{align}
    Using $\Tr_{x_nE_{n}}[\mathds{1}_{x_nE_{n}}] = d_{x_n}d_{E_{n}}$, we obtain
    \begin{align}\label{eq:29-1}
        S &= \alpha \mathds{1}_{E_0x_1\cdots x_{n-1}}  + Y,
    \end{align}
    where
    \begin{align}
        \nonumber \alpha &= \lambda \frac{1}{d_{x_n}\lambda_{x_n}}\Tr_{x_nE_{n}}[\mathds{1}_{x_nE_{n}}] \\
        &= \prod_{i=1}^{n-1}\lambda_{x_i}, \\
        \label{eq:defY} Y &= X*\Bigl(\frac{1}{d_{x_n}\lambda_{x_n}}\mathds{1}_{x_n}\otimes \mathds{1}_{E_{n}}\Bigr).
    \end{align}
    A straightforward calculation of \eqref{eq:defY} shows that the only non-vanishing components are ones with $\mathsf{I}_{x_n} \otimes \mathsf{I}_{E_{n}}$, proving that
    \begin{equation}
        Y \in \mathsf{Hrm}(\mathcal{H}_{E_0})
     \otimes \Biggl[\bigoplus_{i=1}^{n-1}\Biggl(
    \bigotimes_{j=1}^{i-1}
    \mathsf{Hrm}(\mathcal{H}_{x_j}) \otimes {\Delta}_{x_i} \otimes \bigotimes_{j=i+1}^{n-1} \mathsf{I}_{x_j} \Biggr) \Biggr],
    \end{equation}
    i.e., $Y \in \Delta_{\mathtt{HN}_{\mathbf{n-1}}}$ with trivial $E_{n-1}$. Therefore, $S$ obeys Equation~\eqref{eq:27} for $n-1$, and therefore it is a deterministic network of $n-1$ higher-order maps.

    Let us now define $\mathcal{H}_{E_{n-1}}\coloneqq\supp(S)$ and introduce the operator
    \begin{align}\label{eq:30}
        S' &\coloneqq \KetBra{S^{1/2*}}{S^{1/2*}},\\
        \Ket{S^{1/2*}} &\coloneqq (\mathds{1}_{E_{n-1}}\otimes S^{1/2})\Ket{\mathds{1}},
    \end{align}
    Then
    \begin{equation}
        S' = \alpha' \mathds{1} + Y' \in \mathcal{L}\Bigl( \mathcal{H}_{E_0}\otimes \bigotimes_{i=1}^{n-1}\mathcal{H}_{x_i} \otimes \mathcal{H}_{E_{n-1}} \Bigr),
    \end{equation}
    where $\alpha' > 0$ and $Y'$ is a Hermitian traceless operator. Since $\Tr[S'] = \Tr[S]$, we obtain
    \begin{equation}
        \alpha' = \frac{\alpha}{d_{E_{n-1}}}.
    \end{equation}
    We now expand $Y'$ with respect to the decomposition $\mathsf{Hrm}(E_{n-1})=\mathsf{Trl}(E_{n-1})\oplus\mathbb{C}\mathds{1}_{E_{n-1}}$:
    \begin{equation}
        Y' = \sum_k O_k\otimes A_k + W\otimes \mathds{1}_{E_{n-1}},
    \end{equation}
    where $\{A_k\}$ is a basis of traceless Hermitian operators on $\mathcal{H}_{E_{n-1}}$ and $O_k, W$ are Hermitian and Hermitian traceless operators on $\mathcal{H}_{E_0}\otimes\bigotimes_{i=1}^{n-1}\mathcal{H}_{x_i}$, respectively. Taking the partial trace over $E_{n-1}$ and using $\Tr_{E_{n-1}}[S']=S$, we obtain
    \begin{equation}
        W = Y.
    \end{equation}
    It follows that $S'$ satisfies Equation~\eqref{eq:27} for $n-1$ slots, and therefore $S'$ is a deterministic network of $n-1$ higher-order maps.

    We now define
    \begin{align}\label{eq:31}
        R' &\coloneqq (S^{-1/2}\otimes\mathds{1}_{x_nE_{n}}) R (S^{-1/2}\otimes\mathds{1}_{x_nE_{n}}) \\
        &= \beta \mathds{1} + Z \in \mathcal{L}(\mathcal{H}_{E_{n-1}}\otimes\mathcal{H}_{x_n}\otimes\mathcal{H}_{E_{n}}),
    \end{align}
    with $\beta > 0$ and $Z$ being a Hermitian traceless operator. Computing the trace yields
    \begin{equation}
        \beta = d^{-1}_{E_{n}}\lambda_{x_n}.
    \end{equation}
   Since $S$ acts trivially on $\mathcal{H}_{x_n}\otimes\mathcal{H}_{E_{n}}$, the operator $R'$ satisfies Equation~\eqref{eq:27} for $n=1$, i.e.
    \begin{equation}
        R' \in \mathsf{T}_1(\overline{x_n}\to(E_{n-1}\to E_{n})).
    \end{equation}
    Finally, using the associativity of the link product and the identity $S' * R' = R$, we conclude that $R$ is a deterministic network of $n$ higher-order maps.

   \textbf{Sufficiency.} Assume the statement holds for $n-1$, and suppose that $R$ is a deterministic network of $n$ higher-order maps. By definition, there exist deterministic maps $M_i \in \mathsf{T}_1(\overline{x_i}\to(E_{i-1}\to E_{i}))$ such that
   \begin{align}\label{eq:33}
        R &= S' * M_n, \qquad S' \coloneqq M_1 * M_2 * \dots * M_{n-1}.
    \end{align}
    Accordingly,
    \begin{align}
        S' &= \lambda' \mathds{1} + X, \qquad S' \in \mathcal{L}\Bigl( \mathcal{H}_{E_0}\otimes \bigotimes_{i=1}^{n-1}\mathcal{H}_{x_i}\otimes \mathcal{H}_{E_{n-1}}\Bigr), \\
        M_n &= \mu \mathds{1} + Y, \qquad M_n \in \mathcal{L}\Bigl(\mathcal{H}_{E_{n-1}} \otimes \mathcal{H}_{x_n} \otimes \mathcal{H}_{E_{n}} \Bigr),
    \end{align}
    where $\lambda',\mu>0$ and $X,Y$ are traceless Hermitian operators. Since $M_n \in \mathsf{T}_1(\overline{x_n} \rightarrow (E_{n-1}\rightarrow E_{n}))$, every $M_n$ is positive. The link product of positive operators is positive, hence $R\geq 0$.

    Since $M_n \in \mathsf{T}_1(\overline{x_n} \rightarrow (E_{n-1}\rightarrow E_{n}))$, it obeys Equation~\eqref{eq:27} for $n = 1$. By the inductive hypothesis, $S'$ obeys Equation~\eqref{eq:27} for $n-1$. Therefore,
    \begin{align}\label{eq:34}
        \lambda' &= d^{-1}_{E_{n-1}}\prod_{i=1}^{n-1}\lambda_{x_i}, \\
        \mu &= d^{-1}_{E_{n}}\lambda_{x_n},
    \end{align}
    and $X \in \Delta_{\mathtt{HN}_{\mathbf{n-1}}}$ and $Y \in \Delta_{\mathtt{HN}_{\mathbf{1}}}$. Expanding the link product, we obtain
   \begin{align}\label{eq:35}
        R &= (\lambda' \mathds{1}_{E_0x_1\ldots x_{n-1}E_{n-1}})*(\mu\mathds{1}_{E_{n-1} x_{n}E_{n}}) + Z \\
        &= (\lambda'\mu d_{E_{n-1}})\mathds{1}_{E_0x_1\ldots x_nE_{n}} + Z \\
        &= \Bigl(d^{-1}_{E_{n}}\prod_{i=1}^n \lambda_{x_i}\Bigr)\mathds{1}_{E_0x_1\ldots x_nE_{n}} + Z,
    \end{align}
    where
    \begin{equation}
        Z = \lambda'\mathds{1}_{E_0x_1\ldots x_{n-1}E_{n-1}}*Y + \mu\mathds{1}_{E_{n-1} x_{n}E_{n}}*X + X*Y.
    \end{equation}
    A straightforward calculation verifies that each term in $Z$ belongs to $\Delta_{\mathtt{HN}_{\mathbf{n}}}$; hence, $Z\in\Delta_{\mathtt{HN}_{\mathbf{n}}}$. Therefore, $R$ obeys Equation~\eqref{eq:27}, concluding the proof.
\end{proof}

The distinctive feature of this extended notion of network is that its local building blocks (namely, higher-order quantum maps) can represent processes with indefinite causal structure. The present characterization therefore provides a descriptive framework for the correlations that can arise in specific spacetime scenarios. In particular, it captures settings in which the causal structure may be indefinite locally, within each spacetime region or laboratory, while the causal relations between distinct regions remain definite. As a consequence, the spacetime regions themselves are constrained to follow a fixed sequential causal order. A pictorial representation of such a generalized network is shown in Fig.~\ref{fig:networkhigh}. Higher-order maps are depicted as polygonal shapes, reflecting the enlarged set of possible local interventions available to each agent, in contrast to the single open slot characterizing standard quantum combs.

A particularly relevant subclass of deterministic networks of higher-order maps is obtained by choosing each $x_i$ to correspond to a bistochastic channel.
\begin{definition}[Bi-tooth $n$-comb]\label{def:bicomb}
    Let $n\in\mathbb{N}$. An operator $R\in \mathcal{L}\!\left(\bigotimes_{i=1}^n \mathcal{H}_{A_i B_i}\right)$ is a \emph{bi-tooth $n$-comb} if it is a deterministic network of $n$ higher-order maps with $x_i = \hat{A}_{i} \rightarrow \hat{B}_{i}$ and with $E_0$ and $E_{n}$ being trivial systems. We denote by $\widehat{\mathbf{n}}_{\rm t}$ the set of bi-tooth $n$-combs.
\end{definition}

\begin{figure}[t!]
    \centering
    \includegraphics[width=\columnwidth]{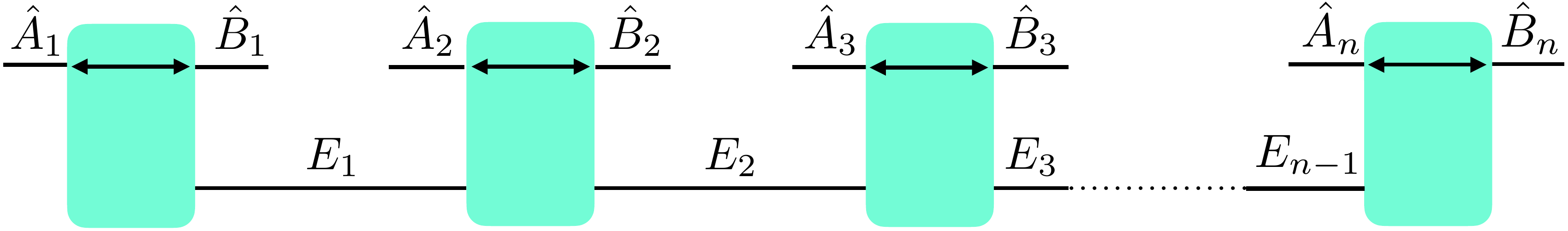}
    \caption{A bi-tooth quantum comb can be realized as a quantum circuit with open slots via sequential composition of partially bistochastic channels.}
    \label{fig:bi-tooth}
\end{figure}

Proposition~\ref{prop:higherordernetworkchar} immediately yields a complete characterization of bi-tooth $n$-combs. A positive linear operator $R \in \mathcal{L}(\mathcal{H}_{A_1B_1\ldots A_{n}B_{n}})$ is a bi-tooth $n$-comb if and only if
\begin{align}\label{eq:tCombChar}
    \begin{aligned}
        R &\geq 0, \\
        R &= \lambda_{\widehat{\mathbf{n}}_{\rm t}} \mathds{1}_{A_1B_1\ldots A_{n}B_{n}} + X, \\
        \lambda_{\widehat{\mathbf{n}}_{\rm t}} &= \frac{1}{\prod_{i=1}^{n}d_{B_i}}, \quad X \in \Delta_{\widehat{\mathbf{n}}_{\rm t}},
    \end{aligned}
\end{align}
where
\begin{align}\label{eq:deltaToothComb}
    \Delta_{\widehat{\mathbf{n}}_t} &= \bigoplus_{i=1}^{n} \Biggl[ \left(\bigotimes_{j=1}^{i-1}\mathsf{Hrm}(\mathcal{H}_{A_jB_j})\right) \nonumber \\
    & \qquad \otimes \Delta_{\hat A_i\to \hat B_i} \otimes \left(\bigotimes_{k=i+1}^{n} \mathsf{I}_{A_kB_k}\right)\Biggr].
\end{align}
Theorem~\ref{theorem:T1} suggests that $\mathsf{T}_1((\overline{\hat{A}_i \rightarrow \hat{B}_i}) \rightarrow (E_{i-1} \rightarrow E_i)) = \mathsf{T}_1(\hat{A}_i E_{i-1} \rightarrow \hat{B}_iE_i)$. Consequently, bi-tooth $n$-combs, similarly to their standard counterparts, admit an explicit realization within standard quantum circuit architectures, provided that each local transformation is allowed to be \emph{partially bistochastic}, as illustrated in Fig.~\ref{fig:bi-tooth}.

This observation naturally motivates a further generalization, namely higher-order transformations acting on bi-tooth combs and producing a quantum channel as output. Such transformations correspond to networks of higher-order maps in which each $x_i$ is taken to be a functional on bistochastic channels.

\begin{definition}[Bi-slot $n$-comb]
\label{def:biSlotComb}
    Let $n\in\mathbb{N}$. An operator $R\in \mathcal{L}(\mathcal{H}_{P} \otimes \bigotimes_{i=1}^n \mathcal{H}_{A_i B_i}\otimes \mathcal{H}_{F})$ is a \emph{bi-slot $n$-comb} if it is a deterministic network of $n$ higher-order maps with $x_i = (\hat{A}_{i} \rightarrow \hat{B}_{i} )\rightarrow I$, and with $E_0\coloneqq P$ and $E_{n} \coloneqq F$. We denote by $\widehat{\mathbf{n}}_{\rm s}$ the set of bi-slot $n$-combs.
\end{definition}

From Proposition~\ref{prop:higherordernetworkchar}, we have that $R \in \mathcal{L}(\mathcal{H}_{PA_1B_1\ldots A_{n}B_{n}F})$ belongs to $\mathsf{T}_1(\widehat{\mathbf{n}}_{\rm s})$ if and only if it is positive and
\begin{align}\label{eq:sCombChar}
    \begin{aligned}
        R &\geq 0, \\
        R &= \lambda_{\widehat{\mathbf{n}}_{\rm s}} \mathds{1}_{PA_1B_1\ldots A_{n}B_{n}F}+ X, \\
        \lambda_{\widehat{\mathbf{n}}_{\rm s}} &= \frac{1}{d_F\prod_{i=1}^{n}d_{A_i}}, \quad X \in \Delta_{\widehat{\mathbf{n}}_{\rm s}},
     \end{aligned}
\end{align}
where
\begin{align}\label{eq:deltaSlotComb}
    \nonumber \Delta_{\widehat{\mathbf{n}}_{\rm s}}& = 
     \mathsf{Hrm}(\mathcal{H}_{P})
     \otimes \Biggl[ \bigotimes_{i=1}^n \mathsf{Hrm}(\mathcal{H}_{A_i B_i}) \otimes \mathsf{Trl}(\mathcal{H}_{E_{n}}) \\
    & \oplus
    \bigoplus_{i=1}^{n}\Biggl(
    \bigotimes_{j=1}^{i-1}
    \mathsf{Hrm}(\mathcal{H}_{x_j}) \otimes \overline{\Delta}_{\hat{A}_i \rightarrow \hat{B}_i} \otimes \bigotimes_{j=i+1}^n \mathsf{I}_{x_j} \Biggr) \otimes \mathsf{I}_{F} \Biggr].
\end{align}
By direct inspection of the defining subspaces, one finds that $\widehat{\mathbf{n}}_{\rm s} = \widehat{\mathbf{n}}_{\rm t} \to (P \rightarrow F)$.

Crucially, unlike standard quantum combs, bi-slot $n$-combs do not admit a definite input-output direction for each individual slot. Nevertheless, when viewed as deterministic networks of higher-order maps, they preserve a well-defined global causal order among the local operations and admit a sequential realization analogous to that of ordinary quantum networks, now with bistochastic supermaps replacing physical channels (see Fig.~\ref{fig:realbistocomb}).

\begin{figure}[t!]
   \includegraphics[width=\columnwidth]{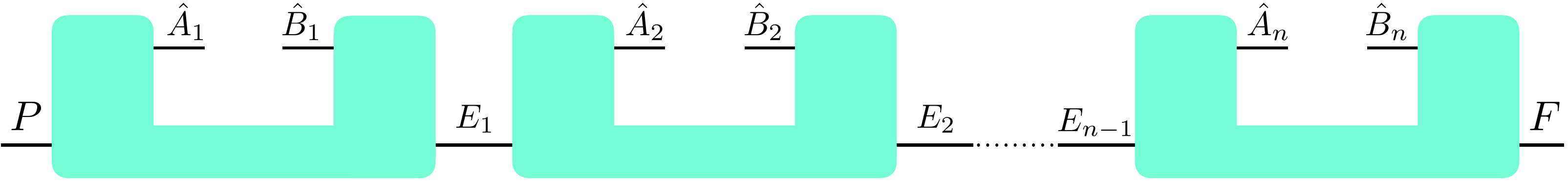}
   \caption{A bi-slot quantum comb can be realized as a sequential composition of bistochastic supermaps.}
   \label{fig:realbistocomb}
 \end{figure}

This structural characterization enables a generalization of the quantum time flip to an arbitrary number of ports, yielding the \textit{quantum $n$-time flip}. The resulting supermap preserves a well-defined global causal order among the local operations, while being incompatible with assigning a definite input-output direction to each individual slot, as detailed in Appendix~\ref{app:N-flip}.

% --------------------------------------------------------------------------
\subsection{Generalized and bistochastic process matrices}
 
The hierarchy of quantum combs characterizes higher-order maps that can be implemented as quantum circuits with causally ordered open slots. As we have seen, it extends naturally to scenarios where the local operations are higher-order maps, while retaining a global causal ordering. On the other hand, one can relax the assumption that such a global ordering exists. Higher-order transformations on local operations that do not necessarily admit a predefined causal order can be captured by the notion of \textit{process matrices}~\cite{oreshkov2012quantum}.
\begin{definition}
    For any $n\in \mathbb{N}$, an $n$-slot process matrix\footnote{In the literature, a higher-order map of type $\mathtt{P}_n$ is called ``process'', while the term ``process matrix'' commonly refers to the Choi operator associated with it. Since we work directly with Choi operators, we use the term ``process matrix'' throughout.} is a higher-order map of type
    \begin{equation}
        \mathtt{P}_n \coloneqq \Bigl(\bigotimes_{i=1}^{n} (A_i \rightarrow B_i)\Bigr) \rightarrow (P \rightarrow F), \quad \mathtt{P}_0 \coloneqq A \rightarrow B.
    \end{equation}
\end{definition}
 
For $n=1$, this definition reduces to the notion of a $1$-comb, as expected: causal order as a binary relation becomes nontrivial only when at least two operations are present. A paradigmatic example of a map of type $\mathtt{P}_2$ exhibiting indefinite causal order is the quantum \texttt{SWITCH}, which implements a coherent control of the causal order between two operations~\cite{chiribella2009beyond, chiribella2013quantum}. The construction generalizes to $n \geq 2$ operations, yielding higher-order transformations of type $\mathtt{P}_n$. The quantum \texttt{SWITCH} has been demonstrated experimentally on multiple platforms, most notably in photonic table-top experiments.

While process matrices enable indefinite causal order, they may also generate correlations between local agents that cannot arise in any scenario with fixed (or even probabilistic) causal order. Examples of such noncausal process matrices include the OCB process (type $\mathtt{P}_2$ with trivial $P$ and $F$) \cite{oreshkov2012quantum} and the AF/BW (also known as the Lugano) process (type $\mathtt{P}_3$) \cite{Baumeler_2016}.

Despite allowing indefinite causal order, standard process matrices assume that the local operation of a party is described by a channel. Similarly to what we did in Definition~\ref{def:nethigherordermaps}, we can straightforwardly generalize process matrices to the case in which a party's intervention is a generic higher-order map.
\begin{definition}
  Let $\{x_i\}_{i=1}^n$ be a set of types. A
  generalized $n$-slot process matrix is a higher-order map of type
    \begin{align}
        \mathtt{HPM}_n &\coloneqq \Bigl(\bigotimes_{i=1}^{n} x_i\Bigr) \rightarrow (P \rightarrow F), \\
        \mathtt{HPM}_0 &\coloneqq A \rightarrow B.
    \end{align}
\end{definition}

\begin{figure}[t!]
    \centering
    \includegraphics[width=110pt]{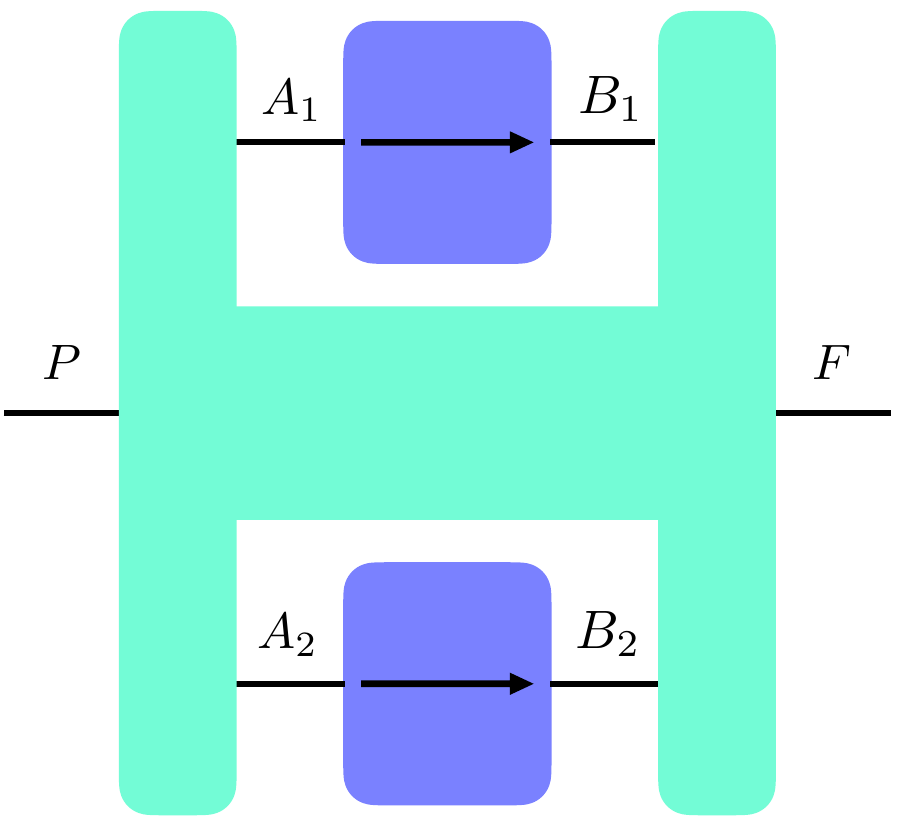}\quad\includegraphics[width=110pt]{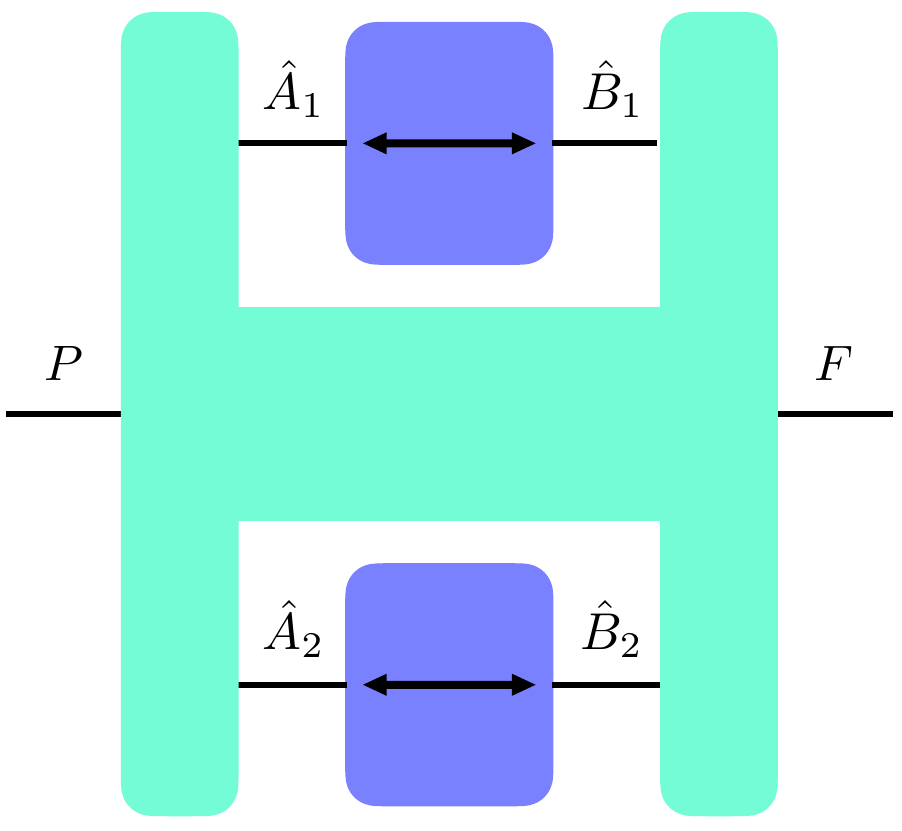}
    \caption{A standard process matrix and a bistochastic process matrix for $n=2$.}
    \label{fig:switch/biswitch}
\end{figure}

This definition encompasses scenarios in which local agents may apply bidirectional devices.

\begin{definition}
    For any $n\in \mathbb{N}$, an $n$-slot bistochastic process matrix is a higher-order map of type
    \begin{align}
        \mathtt{BSP}_n &\coloneqq \Bigl(\bigotimes_{i=1}^{n} (\hat{A}_i \rightarrow \hat{B}_i)\Bigr) \rightarrow (P \rightarrow F), \\
        \mathtt{BSP}_0 &\coloneqq A \rightarrow B.
    \end{align}
\end{definition}
It follows that a linear operator $R \in \mathcal{L}(\mathcal{H}_{PA_1B_1\ldots A_nB_nF})$ belongs to $\mathsf{T}_1(\mathtt{BSP}_n)$ (i.e., is a bistochastic process matrix) if and only if
\begin{align}\label{eq:procChar1}
  &R \geq 0, \\
  &R = \lambda_{\mathtt{BSP}_{n}} \mathds{1} + X, 
     \quad \lambda_{\mathtt{BSP}_{n}}
     = \frac{1}{d_F\prod_{i=1}^{n}d_{A_i}}, \quad X \in \Delta_{\mathtt{BSP}_{n}}, \label{eq:procChar2}
\end{align}
where
\begin{align}
            \nonumber \Delta_{\mathtt{BSP}_n} &= \mathsf{Hrm}(\mathcal{H}_P) \otimes \Biggl( \Bigl(\bigotimes_{i=1}^n (\mathsf{Hrm}(\mathcal{H}_{A_i B_i})) \Bigr)\otimes \mathsf{Trl}(\mathcal{H}_F) \\
            &\oplus \bigoplus_{i=1}^n\Biggl( \Bigl(\bigotimes_{j=1}^{i-1} (\mathsf{Hrm}(\mathcal{H}_{A_j B_j})) \Bigr) \otimes \overline{\Delta}_{\hat{A}_i \rightarrow \hat{B}_i}\nonumber\\
            & \otimes \Bigl( \bigotimes_{j'=i+1}^n (\Delta_{\hat{A}_{j'} \rightarrow \hat{B}_{j'}} \oplus \mathsf{I}_{A_j} \otimes \mathsf{I}_{B_j}) \Bigr) \Biggr) \otimes \mathsf{I}_F \Biggr),
\end{align}
where $\overline{\Delta}_{\hat{A}_i \rightarrow \hat{B}_i} = \mathsf{Trl}(\mathcal{H}_{A_i}) \otimes \mathsf{I}_{B_i} \oplus \mathsf{I}_{A_i} \otimes \mathsf{Trl}(\mathcal{H}_{B_i})$ and $\Delta_{\hat{A}_i \rightarrow \hat{B}_i} = \mathsf{Trl}(\mathcal{H}_{A_i}) \otimes \mathsf{Trl} (\mathcal{H}_{B_i})$.

Process matrices form a strict subclass of their bistochastic counterparts, indicating that bistochastic process matrices exhibit a strictly richer structure. In addition to exploiting different orders of the local operations, they also use both directions of the operations themselves (see Fig.~\ref{fig:switch/biswitch}). This enables higher-order transformations that simultaneously generalize both the quantum time flip and the quantum \texttt{SWITCH}. For instance, one can construct a bistochastic process matrix that applies two operations $\mathcal{A}$ and $\mathcal{B}$ in a superposition of the causal order $\mathcal{A}\prec\mathcal{B}$ and the reversed causal order $\mathcal{A}\succ\mathcal{B}$, where each operation is itself used in the opposite direction in the second branch (see Appendix~\ref{app:procFS}).

Moreover, bistochastic process matrices can generate correlations between local agents that exceed those achievable by ordinary process matrices. In particular, while standard process matrices obey constraints akin to Tsirelson's bound, their bistochastic counterparts can reach the algebraic maximum of such correlations \cite{Liu2025}. A concrete example of this situation is provided by a bistochastic process introduced in \cite{Liu2025}, hereafter called the ``Liu--Chiribella process'' (type $\mathtt{BSP}_2$ with trivial $P$ and $F$), which enables perfect two-way signaling (see Appendix~\ref{app:procLC}).

% --------------------------------------------------------------------------
% Sec. VI
% --------------------------------------------------------------------------

\section{Conclusions and outlook}\label{sec:conclusions}
In this work, we developed a framework for higher-order quantum theory based on bistochastic transformations, that is, transformations that preserve the maximally mixed state. The framework is grounded in an operational construction and provides a setting in which elementary devices can be applied in both temporal directions. The main goal is to formulate higher-order quantum theory without requiring a predefined assignment of input and output systems, thereby allowing bidirectional devices to be used in scenarios involving indefinite input-output direction. This is achieved through an operational notion of admissibility that leads to a recursive characterization of higher-order maps of arbitrary types involving bistochastic transformations.

From a pragmatic perspective, our framework allows for quantum information processing and computation with bidirectional devices whose input-output direction can be indefinite. Several examples of higher-order maps discussed in this work, such as the quantum time flip (already implemented experimentally in photonic setups), the flippable quantum \texttt{SWITCH}, and the Liu--Chiribella processes, can be described within our framework while lying outside ordinary higher-order quantum theory, including the standard process matrix formalism. Consequently, we introduced nontrivial bistochastic extensions of quantum combs and process matrices. In particular, by defining a hierarchy of bistochastic counterparts of quantum combs, which we call bi-slot quantum combs, we showed that although such maps are not necessarily implementable within the standard quantum circuit model, they can always be decomposed into sequences of local transformations acting on bidirectional devices. As a result, global causal order is preserved at the level of local devices, while their temporal direction is not fixed.

Several questions remain open. An important direction is the study of bidirectional operations as a resource for potential advantages in communication, computation, and channel discrimination. A particularly relevant line of research concerns quantum thermodynamics, where bistochastic transformations coincide with entropy-nondecreasing operations. Another promising direction concerns the signaling structure of higher-order maps involving bidirectional devices. In standard higher-order quantum theory, there is a fundamental correspondence between compositional structure and signaling relations \cite{Apadula2024}. However, allowing both ports of a device to act as input and output may alter this connection. This motivates further investigation into quantum causal modeling \cite{allen2017quantum}, where our framework may serve as the basis for a new class of models in which nodes do not have a fixed assignment of input and output systems.

\begin{acknowledgments}
L.A. acknowledges the support of l'Agence Nationale de la Recherche (ANR), project ANR-22-CE47-0012, the  Austrian Science Fund (FWF) [10.55776/F71] and [10.55776/COE1].

A.B. acknowledges support by European Union—Next Generation EU through the National Research Centre for HPC, Big Data and Quantum Computing, PNRR MUR Project CN0000013-ICSC. 

G.C. acknowledges support from the Chinese Ministry of Science and Technology (MOST) through grant 2023ZD0300600, from the Hong Kong Research Grant Council (RGC) through grants SRFS2021 7S02 and R7035-21F, and from the State Key Laboratory of Quantum Information Technologies and Materials, Chinese University of Hong Kong. Research at the Perimeter Institute is supported by the Government of Canada through the Department of Innovation, Science and Economic Development Canada and by the Province of Ontario through the Ministry of Research, Innovation and Science.

P.P. acknowledges financial support from European Union—Next Generation EU through the MUR project Progetti di Ricerca d’Interesse Nazionale (PRIN) QCAPP No. 2022LCEA9Y. 

K.S. thanks Jessica Bavaresco, Simon Milz, Ognyan Oreshkov, and Marco T\'ulio Quintino for the fruitful discussions and acknowledges: This research was funded in whole or in part by the Austrian Science Fund (FWF) 10.55776/PAT4559623. For open access purposes, the author applied a CC BY public copyright license to any author-accepted manuscript version arising from this submission.
\end{acknowledgments}

\bibliographystyle{apsrev4-1}
\bibliography{main}

% --------------------------------------------------------------------------
% APPENDICES
% --------------------------------------------------------------------------

\onecolumngrid
\appendix

% --------------------------------------------------------------------------
% App. A
% --------------------------------------------------------------------------

\section{Proof of Theorem \ref{thm:char_adm_events} and Theorem \ref{theorem:T1}}
\label{sec:proof-theorem-5.1}

These proofs closely follow those of Ref.~\cite{doi:10.1098/rspa.2018.0706}. The main difference is that the ground level of the hierarchy now includes partially bistochastic channels, namely the set of elementary types is now larger. If we denote by $\ETy_{q}$ the set of elementary types of higher-order quantum theory and by $\ETy_{i\text{-}o}$ the set of elementary types of higher-order quantum theory with indefinite input--output direction, we have $\ETy_{q} = \set{A}^+$, and $\ETy_{i\text{-}o} = \set{A}^+ \cup \set{B}$.

% --------------------------------------------------------------------------
\subsection{Proof of Equation \eqref{eq:20}}
Let us start with the proof of Equation \eqref{eq:20}. We will need some preliminary lemmas.

\begin{lemma} \label{lem:admEventCond}
    Let $x = y \to z$ and $X \in T_{\mathbb{R}}(x)$. Then $X \in T(x)$ if and only if it satisfies item~$(i)$ of Definition~\ref{def:AdmEvent}, and there exist $\{X_i\}_{i=1}^n \subseteq T(x)$ such that $X + \sum_{i=1}^n X_i \in \mathsf{T}_1(x)$. 
    \begin{proof}
        This statement is a restatement of Definition~\ref{def:AdmEvent}.
    \end{proof}
\end{lemma}

\begin{lemma}
    If $X,X' \in T(x)$, then $X+X' \in T(x)$ if and only if there exist $\{ X_i \} \subseteq T(x)$ such that $X + X' + \sum_i X_i \in \mathsf{T}_1(x)$.
    \begin{proof}
        The statement clearly holds for $X,X' \in T(A)$, $A \in \set{A}$. Let us now consider the case in which $X, X' \in T(\hat{A}B \to \hat{C}D)$. If $X+X' \in T(\hat{A}B \to \hat{C}D)$, then, recalling Definition~\ref{def:admis_elemen_aevent}, we have $X+X' \leq D$, where $D \in \mathsf{T}_1(\hat{A}B \to \hat{C}D)$. Then $\tilde{X} \coloneqq D - (X+X')$ satisfies $\tilde{X} \geq 0$ and $\tilde{X} \leq D$, i.e., $\tilde{X} \in T(\hat{A}B \to \hat{C}D)$. On the other hand, if there exist $\{ X_i \} \subseteq T(x)$ such that $D \coloneqq X + X' + \sum_i X_i \in \mathsf{T}_1(x)$, then $X + X' \leq D$, i.e., $X+X' \in T(\hat{A}B \to \hat{C}D)$. We can prove that the statement holds for a generic type $x \to y$ by induction, as in the proof of Lemma~A.2 of Ref.~\cite{doi:10.1098/rspa.2018.0706}.
    \end{proof}
\end{lemma}

\begin{lemma}
  Let $x = y \to z$ and $X \in T_{\mathbb{R}}(x)$. Then $X \in T(x)$ if and only if it satisfies item~\ref{item:1} of Definition~\ref{def:AdmEvent}, and there exists $X' \in T_{\mathbb{R}}(x)$ such that $X'$ satisfies item~\ref{item:1} of Definition~\ref{def:AdmEvent} and $X+X' \in \mathsf{T}_1(x)$.
\end{lemma}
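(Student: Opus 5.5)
The plan is to derive this lemma from Lemma~\ref{lem:admEventCond} together with the preceding lemma, by grouping the completing family into a single operator. Both directions reduce to a careful reading of Definition~\ref{def:AdmEvent}. Throughout, I call an operator \emph{i-admissible} if it satisfies item~(i) of Definition~\ref{def:AdmEvent}, meaning that for every $E\in\set A$ the map $\mathcal{M}\otimes\mathcal{I}_E$ sends $T(y\|E)$ into $T(z\|E)$.

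\emph{Necessity.} Let $X\in T(x)$. Lemma~\ref{lem:admEventCond} says that $X$ is i-admissible and that there are $X_1,\dots,X_n\in T(x)$ with $X+\sum_i X_i\in\mathsf T_1(x)$. I would set $X'\coloneqq\sum_i X_i$, with $X'=0$ when $n=0$. Then $X+X'\in\mathsf T_1(x)$ holds by construction. It remains to show that $X'$ is i-admissible. Each $X_i$ is i-admissible, and so is $X$. The sets $T(z\|E)$ are closed under sums whenever the sum stays below a deterministic element. So for every $\rho\in T(y\|E)$ I need $(\mathcal X'\otimes\mathcal I_E)(\rho)\in T(z\|E)$. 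The route is to note that $((\mathcal X+\mathcal X')\otimes\mathcal I_E)(\rho)\in T(z\|E)$ by item~(ii)(b). By Eq.~\eqref{eq:20}, which I assume at the lower type $z\|E$ as part of the simultaneous induction, this means $0\le (\mathcal X+\mathcal X')(\rho)\le D$ for some $D\in\mathsf T_1(z\|E)$. Each summand $(\mathcal X_i\otimes\mathcal I_E)(\rho)$ is positive, since it lies in $T(z\|E)$. Hence $0\le(\mathcal X'\otimes\mathcal I_E)(\rho)\le D$, and another application of Eq.~\eqref{eq:20} at the lower type gives $(\mathcal X'\otimes\mathcal I_E)(\rho)\in T(z\|E)$.

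\emph{Sufficiency.} Suppose $X$ is i-admissible and $X'$ is i-admissible with $X+X'\in\mathsf T_1(x)$. I would take the collection in item~(ii) of Definition~\ref{def:AdmEvent} to be the single element $N_1\coloneqq X'$. Condition~(a) is exactly the i-admissibility of $X'$. Conditions~(b) and~(c) both follow from $X+X'\in\mathsf T_1(x)$, because the definition of deterministic events of type $y\to z$ requires that $T(y\|E)$ be mapped into $T(z\|E)$ and that $\mathsf T_1(y\|E)$ be mapped into $\mathsf T_1(z\|E)$. Hence $X\in T(x)$.

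\emph{Main obstacle.} The only delicate point is the step in necessity that shows $\sum_i X_i$ is itself i-admissible. That step uses the characterization of $T(z\|E)$ as positive operators dominated by a deterministic one, and this characterization is only available inductively. I must make sure the induction is organised on the type structure so that $z\|E$ counts as a lower type. This is the same inductive scheme as in Ref.~\cite{doi:10.1098/rspa.2018.0706}. For the base case with $z\|E\in\set B$, the domination property is exactly Definition~\ref{def:admis_elemen_aevent}.
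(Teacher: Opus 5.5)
Your proof is correct, but it handles the key step differently from the paper. The paper defers to Lemma~A.3 of Ref.~\cite{doi:10.1098/rspa.2018.0706}. There, the i-admissibility of $X'=\sum_i X_i$ comes from the preceding lemma at the \emph{same} type $x$. Since $X+\sum_i X_i\in\mathsf{T}_1(x)$ with every term in $T(x)$, repeated use of the closure of $T(x)$ under sums that can be completed to a deterministic event gives $\sum_i X_i\in T(x)$, and item~(i) follows.

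Although your first sentence announces that lemma, you never actually use it. Instead you verify item~(i) for $X'$ pointwise, using the order-interval characterization Eq.~\eqref{eq:20} at the lower types $z\parallel E$:
$0\le(\mathcal{X}'\otimes\mathcal{I}_E)(\rho)\le((\mathcal{X}+\mathcal{X}')\otimes\mathcal{I}_E)(\rho)\le D$.
The first inequality uses positivity of the images under the $\mathcal{X}_i$, and the second uses positivity of the image under $\mathcal{X}$.

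This is shorter, and it makes the needed sum-closure at the lower level immediate. The cost is that it ties the lemma to Eq.~\eqref{eq:20}, so the two must be proved in one simultaneous induction on the nesting depth of types, with $z\parallel E$ counted as lower than $y\to z$. That induction is consistent: the lemma at $y\to z$ needs Eq.~\eqref{eq:20} only at strictly lower types, and Eq.~\eqref{eq:20} at $y\to z$ is proved afterwards from the lemma at that same type.

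The paper's route keeps the chain of lemmas logically prior to Eq.~\eqref{eq:20}. For the enlarged ground level $\set{B}$, one then only re-checks the sum-closure lemma (and its companions) at the elementary types. The step you are proving becomes a purely formal consequence which, as the paper remarks, needs no knowledge of the elementary admissible events.
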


\begin{corollary}\label{cor:completewithanadmissibletodeterminis}
    Let $x = y \to z$ and $X \in T_{\mathbb{R}}(x)$. Then $X \in T(x)$ if and only if it satisfies item~\ref{item:1} of Definition~\ref{def:AdmEvent}, and there exists $X' \in T(x)$ such that $X+X' \in \mathsf{T}_1(x)$.
    \begin{proof}
        The proof of these results does not involve any knowledge about the admissible elementary events. Therefore, it coincides with the proof of Lemma~A.3 of Ref.~\cite{doi:10.1098/rspa.2018.0706}.
    \end{proof}
\end{corollary}

\begin{lemma}
  Let $X \in T(x)$ and $\rho \in T(E)$, $E \in \set{A}$. Then $X \otimes \rho \in T(x \parallel E)$. Moreover, if $X \in \mathsf{T}_1(x)$ and $\rho \in \mathsf{T}_1(E)$, then $X \otimes \rho \in \mathsf{T}_1(x \parallel E)$.
 \begin{proof}
   The result holds for $x \in \set{A}^+$. Let us now consider the case in which $x \in \set{B}$, i.e., $X \in T(\hat{A}B \to \hat{C}D)$. Then $X \geq 0$, $X \leq D \in \mathsf{T}_1(\hat{A}B \to \hat{C}D)$ and $\rho \geq 0$, $\rho \leq \sigma \in \mathsf{T}_1(E)$. Recalling Definition~\ref{def:extensionelementtype}, we have $X \otimes \rho \in T_{\mathbb{R}}(\hat{A}B \to \hat{C}DE)$, $X \otimes \rho \geq 0$, and $X \otimes \rho \leq D \otimes \sigma$, which implies $X \otimes \rho \in T(\hat{A}B \to \hat{C}DE)$. Similarly, we prove that $X \in \mathsf{T}_1(x)$ and $\rho \in \mathsf{T}_1(E)$ imply $X \otimes \rho \in \mathsf{T}_1(x \parallel E)$. The statement therefore holds for $x \in \ETy$. The proof for non-elementary types is identical to that of Lemma~A.5 of Ref.~\cite{doi:10.1098/rspa.2018.0706}.
  \end{proof}
\end{lemma}

\begin{lemma}
    \label{lmm:positivecone}
    Let $x$ be a type and let us denote by $T_+(x)$ the cone $T_+(x) \coloneqq \{P \in T_{\mathbb{R}}(x) \mid \exists c \geq 0,\, P' \in T(x),\, P = cP' \}$. Then $T_+(x)$ is the cone of positive operators, namely $T_+(x) = \{ A \in \mathcal{L}(\mathcal{H}_x) \mid A \geq 0\}$. 
  \begin{proof}
    The statement holds for $x \in \set{A}$. Let us then consider $x \in \set{B}$. If $X \in T(\hat{A}B \to \hat{C}D)$, then $X \geq 0$, and therefore $T(\hat{A}B \to \hat{C}D) \subseteq P_>$, where $P_>$ denotes the cone of positive operators. On the other hand, let $A \in P_>$ be an arbitrary positive operator. Then, for sufficiently small $\epsilon > 0$, $\epsilon A \leq d_{C} d_D \mathds{1}$. Moreover, $d_{C} d_D \mathds{1} \in \mathsf{T}_1(\hat{A}B \to \hat{C}D)$. Hence, $\epsilon A \in T(\hat{A}B \to \hat{C}D)$ and $A \in T_+(x)$.

    We have thus proved that the statement holds for $x \in \ETy$. The proof for non-elementary types is identical to that of Lemma~A.6 of Ref.~\cite{doi:10.1098/rspa.2018.0706}.
  \end{proof}
\end{lemma}

Finally, we can prove that $M \in T(x)$ if and only if $M \geq 0$ and there exists $D \in \mathsf{T}_1(x)$ such that $M \leq D$. The statement holds for $x \in \ETy$. Let us consider the case in which $x = y \to z$. If $M \in T(x)$, then Corollary~\ref{cor:completewithanadmissibletodeterminis} implies that there exists $M' \in T(x)$ such that $D \coloneqq M + M' \in \mathsf{T}_1(x)$. From Lemma~\ref{lmm:positivecone} we then have that $M, M', D \geq 0$ and $M \leq D$.

On the other hand, let $M \geq 0$ and $M \leq D$ for some $D \in \mathsf{T}_1(x \parallel E)$, and suppose that the statement holds for any $y \parallel E$ with $y \preceq x$. Let us define $N \coloneqq D - M$ and let $\mathcal{M}$ and $\mathcal{N}$ be the linear maps having $M$ and $N$ as their Choi operators. Since $M, N \geq 0$, we have that $\mathcal{M}$ and $\mathcal{N}$ are completely positive. Then, from the inductive hypothesis, we have that $\mathcal{M}$ and $\mathcal{N}$ both satisfy item~$(i)$ of Definition~\ref{def:AdmEvent}, and therefore Lemma~\ref{lem:admEventCond} implies that $M, N \in T(x \parallel E)$.

% --------------------------------------------------------------------------
\subsection{Proof of Equation \eqref{eq:charadetprelimin}}

Let us now prove Equation \eqref{eq:charadetprelimin}. We will need two preliminary results.

\begin{lemma} \label{lmm:superpositive}
    For arbitrary $x$ and $y$, consider the type $x \to y$. Let $R \in \Ev{x \to y}$ be such that $R \geq 0$ and, for all $E$, $((\Ch^{-1})[R] \otimes \mathcal{I}_E)[\Evd{x \para E}] \subseteq \Evd{y \para E}$. Then $R \in \Evd{x \to y}$.
    \begin{proof}
        The proof of this result does not involve any knowledge about the admissible elementary events. Therefore, it is the same as the proof of Lemma~B.1 of Ref.~\cite{doi:10.1098/rspa.2018.0706}.
    \end{proof}
\end{lemma}

\begin{lemma}\label{lem:necesuffpart}
   For every $E,E' \in \ETy$, and for every $R \in T(x \para EE')$, one has
    \begin{align}\label{eq:partialtracedet}
        R \in \mathsf{T}_1(x \para EE') \iff \Tr_E[R] \in \mathsf{T}_1(x \para E').
    \end{align}
    \begin{proof}
        The statement is true for $x \in \mathsf{A}$. Let us now consider the case $x \in \mathsf{B}$. Fix an arbitrary $R \in T((\hat{A}B \to \hat{C}D) \para EE') = T(\hat{A}B \to \hat{C}DEE')$. First, suppose that $R \in \mathsf{T}_1(\hat{A}B \to \hat{C}DEE')$ and define $S \coloneqq \Tr_{E'}[R]$. Then $R \geq 0$, $\Tr_{CDEE'}[R] = \mathds{1}_{AB}$, and $\Tr_{ADEE'}[R] = \mathds{1}_{CB}$. Clearly, this implies that $S \geq 0$, $\Tr_{CDE}[S] = \mathds{1}_{AB}$, and $\Tr_{ADE}[S] = \mathds{1}_{CB}$, i.e., $S \in \mathsf{T}_1(\hat{A}B \to \hat{C}DE)$. On the other hand, suppose that $S \in \mathsf{T}_1(\hat{A}B \to \hat{C}DE)$, i.e., $S \geq 0$, $\Tr_{CDE}[S] = \mathds{1}_{AB}$, and $\Tr_{ADE}[S] = \mathds{1}_{CB}$. By substituting $S = \Tr_{E'}[R]$, we obtain $\Tr_{CDEE'}[R] = \mathds{1}_{AB}$ and $\Tr_{ADEE'}[R] = \mathds{1}_{CB}$. Since $R \in T(\hat{A}B \to \hat{C}DEE')$ by hypothesis, we also have $R \geq 0$, and therefore $R \in \mathsf{T}_1(\hat{A}B \to \hat{C}DEE')$. We have thus proved the statement for $x \in \ETy$. The proof for non-elementary types is identical to that of Lemma~B.2 of Ref.~\cite{doi:10.1098/rspa.2018.0706}.
    \end{proof}
\end{lemma}

We can now prove Equation \eqref{eq:charadetprelimin}. The $\Rightarrow$ direction is obvious. Let us therefore suppose that $M \geq 0$ and $\mathcal{M}[\mathsf{T}_1(x)] \subseteq \mathsf{T}_1(y)$. From Equation \eqref{eq:20} and Lemma~\ref{lmm:superpositive}, we have that $M \in \mathsf{T}_1(x \to y)$ if and only if $M \geq 0$ and $((\Ch^{-1})[M] \otimes \mathcal{I}_E)[\mathsf{T}_1(x \para E)] \subseteq \mathsf{T}_1(y \para E)$. Fix an arbitrary $X \in \mathsf{T}_1(x \para E)$ and define  $Y \coloneqq ((\Ch^{-1})[M] \otimes \mathcal{I}_E)[X]$. From Lemma~\ref{lem:necesuffpart}, we have that $Y \in \mathsf{T}_1(y \para E)$ if and only if $\Tr_E[Y] \in \mathsf{T}_1(y)$. Moreover,
\[
    \Tr_E[Y] = \Tr_E[((\Ch^{-1})[M] \otimes \mathcal{I}_E)[X]] = (\Ch^{-1})[M][\Tr_E[X]].
\]
From Lemma~\ref{lem:necesuffpart}, we have that $\Tr_E[X] \in \mathsf{T}_1(x)$, and since $\mathcal{M}[\mathsf{T}_1(x)] \subseteq \mathsf{T}_1(y)$, the claim follows.

\subsection{Proof of Theorem \ref{theorem:T1}}

We now conclude this section with the proof of Theorem~\ref{theorem:T1}. First, we have the following lemma.

\begin{lemma}\label{lmm:transposedeteffect}
    Let $x$ be a type. Then $R \in \mathsf{T}_1(x) \iff R^T \in \mathsf{T}_1(x)$.  
    \begin{proof}
        The statement is obvious for $x \in \set{A}$. It is also easy to check that $R \in \mathsf{T}_1(\hat{A}B \to \hat{C}D) \iff R^T \in \mathsf{T}_1(\hat{A}B \to \hat{C}D)$. Hence, the statement holds for any $x \in \ETy$. The proof for non-elementary types is the same as that of Lemma~5.5 of Ref.~\cite{doi:10.1098/rspa.2018.0706}. 
    \end{proof}
\end{lemma}

We can now complete the proof of Theorem~\ref{theorem:T1}. The validity of Equations \eqref{eq:19} and \eqref{eq:21} can be checked by direct inspection. Let us consider the case in which $x = y \to z$ is not elementary. From Equation~\eqref{eq:charadetprelimin}, we know that $M \in \mathsf{T}_1(y \to z)$ if and only if $M \geq 0$ and $\mathcal{M}(\mathsf{T}_1(y)) \subseteq \mathsf{T}_1(z)$, i.e., $\Tr_{y}[M (\mathds{1}_z \otimes S_y^T)] \in \mathsf{T}_1(z)$ for any $S_y \in \mathsf{T}_1(y)$. From Lemma~\eqref{eq:partialtracedet}, we have
\begin{equation}\label{eq:23}
    \forall S_y \in \mathsf{T}_1(y): \Tr_{y}[M (\mathds{1}_z \otimes S_y^T)] \in \mathsf{T}_1(z) \iff \forall S_y \in \mathsf{T}_1(y): \Tr_{y}[M (\mathds{1}_z \otimes S_y)] \in \mathsf{T}_1(z).
\end{equation}
Then Equation~\eqref{eq:11} can be proved by induction from Equation~\eqref{eq:23} (the detailed steps are the same as those in the proof of Proposition~5.6 of Ref.~\cite{doi:10.1098/rspa.2018.0706}).

% --------------------------------------------------------------------------
% App. B
% --------------------------------------------------------------------------

\section{Quantum time flip}\label{app:QTF}

\begin{definition}
    Let $\mathcal{C}\in \mathcal{L}(\mathcal{L}(\mathcal{H}_A), \mathcal{L}(\mathcal{H}_B))$ be a completely positive map with Kraus operators $\{C_i\}_i$. The quantum time flip is a supermap that transforms $\mathcal{C}$ into a new operation $\mathcal{F}(\mathcal{C}) \in \mathcal{L}(\mathcal{L}(\mathcal{H}_{P_t} \otimes \mathcal{H}_{P_c}), \mathcal{L}(\mathcal{H}_{F_t} \otimes \mathcal{H}_{F_c}))$, defined as
    \begin{align}\label{app:eq:QTF-supermap}
        \mathcal{F}(\mathcal{C})[\rho] &= \sum_i F_i (\rho\otimes\omega) F_i^\dagger, \qquad F_i \coloneqq C_i \otimes |0\rangle\langle 0| + C_i^T \otimes |1\rangle\langle 1|,
    \end{align}
    where $\mathcal{H}_{P_t} \simeq \mathcal{H}_{A} \simeq \mathcal{H}_{B} \simeq \mathcal{H}_{F_t}$ and $\mathcal{H}_{P_c} \simeq \mathcal{H}_{F_c} \simeq \mathbb{C}^2$, $\rho \in \mathcal{L}(\mathcal{H}_{P_t})$ and $\omega \in \mathcal{L}(\mathcal{H}_{P_c})$, with indices $t$ and $c$ corresponding to target and control systems, respectively.
\end{definition}

A straightforward calculation shows that the quantum time flip is defined by a Choi operator
\begin{align}
    \label{eq:QTF-app}
    R &= \dket{\mathcal{F}}\dbra{\mathcal{F}}, \\
    \nonumber \dket{\mathcal{F}} &= \dket{\mathds{1}}_{P_t A} \otimes \dket{\mathds{1}}_{B F_t} \otimes |0\rangle_{P_c}\otimes|0\rangle_{F_c} \\
    & \qquad + \dket{\mathds{1}}_{P_t B} \otimes \dket{\mathds{1}}_{A F_t} \otimes |1\rangle_{P_c}\otimes|1\rangle_{F_c},
\end{align}
and the supermap~\eqref{app:eq:QTF-supermap} can be obtained as
\begin{align}
    \nonumber \mathcal{F}(\mathcal{C})[\rho] &= R * \rho * \omega * C  \\
    &= \mathrm{Tr}_{ABP}\!\left[\dket{\mathcal{F}}\dbra{\mathcal{F}} \bigl(\mathds{1}_F\otimes \rho^T\otimes\omega^T\otimes C^T\bigr) \right], \\
    \nonumber F_i &= \dbra{C_i}^{T}\dket{\mathcal{F}} \\
    &= C_i \otimes |0\rangle\langle 0| + C_i^T \otimes |1\rangle\langle 1|,
\end{align}
for the corresponding $C \in \mathsf{T}_1(\hat A \rightarrow \hat B)$ with Kraus operators $\{C_i\}_i$.

In order to verify that $R \in \mathsf{T}_1((\hat{A} \rightarrow \hat{B}) \rightarrow (P \rightarrow F))$, we take into account that, by Theorem~\ref{theorem:T1}, this is equivalent to:
\begin{equation}
    \label{app:eq:sCond1}
    R = \lambda_R \mathds{1}_{PABF} + X_R, \qquad \lambda_R = \frac{1}{d_F d_{A}}, \qquad X_R \in \Delta_{(\hat{A} \rightarrow \hat{B}) \rightarrow (P \rightarrow F)},
\end{equation}
where
\begin{equation}
    \Delta_{(\hat{A} \rightarrow \hat{B}) \rightarrow (P \rightarrow F)} = \mathsf{Hrm}(\mathcal{H}_{PAB})\otimes \mathsf{Trl}(\mathcal{H}_F) \oplus \mathsf{Hrm}(\mathcal{H}_P)\otimes \bigl(\mathsf{Trl}(\mathcal{H}_{A})\otimes \mathsf{I}_{B} \oplus \mathsf{I}_{A}\otimes \mathsf{Trl}(\mathcal{H}_{B})\bigr) \otimes \mathsf{I}_F.
\end{equation}
Equivalently,
\begin{equation}\label{app:eq:sCond3}
    X_R \notin \mathsf{Hrm}(\mathcal{H}_P)\otimes \bigl(\mathsf{Trl}(\mathcal{H}_{A})\otimes \mathsf{Trl}(\mathcal{H}_{B}) \oplus \mathsf{I}_{A}\otimes\mathsf{I}_{B} \bigr) \otimes \mathsf{I}_F.
\end{equation}

For comparison, the corresponding ordinary supermaps are characterized by
\begin{align}
    \label{app:eq:snCond1}
    R &= \lambda_R\,\mathds{1}_{PABF} + X_R, \qquad \lambda_R = \frac{1}{d_F d_{A}},\\
    \label{app:eq:sCond2}
    X_R &\notin \mathsf{Hrm}(\mathcal{H}_P)\otimes \bigl(\mathsf{Trl}(\mathcal{H}_{A})\otimes \mathsf{Trl}(\mathcal{H}_{B}) \oplus \mathsf{I}_{A}\otimes\mathsf{Trl}(\mathcal{H}_{B}) \oplus \mathsf{I}_{A}\otimes\mathsf{I}_{B} \bigr) \otimes \mathsf{I}_F.
\end{align}
Hence, a bistochastic but non-ordinary supermap must contain traceless components in the difference
\begin{align}
    \Delta' &= \mathsf{Hrm}(\mathcal{H}_P)\otimes \mathsf{I}_{A}\otimes\mathsf{Trl}(\mathcal{H}_{B}) \otimes \mathsf{I}_{F}. \label{app:eq:diffS}
\end{align}

First, expand $R$:
\begin{align}
    R &= \dket{\mathds{1}}\!\dbra{\mathds{1}}_{P_t A_I} \otimes \dket{\mathds{1}}\!\dbra{\mathds{1}}_{A_O F_t}
       \otimes |0\rangle\langle 0|_{P_c} \otimes |0\rangle\langle 0|_{F_c} \\
    &\quad + \dket{\mathds{1}}\!\dbra{\mathds{1}}_{P_t B_O} \otimes \dket{\mathds{1}}\!\dbra{\mathds{1}}_{A_I F_t}
       \otimes |1\rangle\langle 1|_{P_c} \otimes |1\rangle\langle 1|_{F_c} \\
    &\quad + \ldots,
\end{align}
where $\ldots$ denotes the off-diagonal terms in the control subsystems $P_c$ and $F_c$. These off-diagonal terms are traceless on $F_c$ and therefore belong to the subspace
\begin{equation}
\mathsf{Hrm}(\mathcal{H}_{PAB}) \otimes \mathsf{Trl}(\mathcal{H}_F) \subsetneq \Delta_{(\hat{A} \rightarrow \hat{B}) \rightarrow (P \rightarrow F)}.
\end{equation}

Positivity of $R$ is immediate from its definition as a rank-one operator:
\begin{equation}
    R = \dket{\mathcal{F}}\!\dbra{\mathcal{F}} \;\geq\; 0.
\end{equation}
Moreover, using $\Tr(\dket{\mathds{1}}\!\dbra{\mathds{1}}_{XY}) = d_X$ for each maximally entangled pair, one checks that
\begin{equation}
    \lambda_{R}
    = \frac{\Tr[R]}{\dim(\mathcal{H}_{PABF})}
    = \frac{1}{2 d^2},
\end{equation}
in agreement with the general normalization for a bistochastic supermap.

For each pair of isomorphic systems $X,Y$, we use the standard decomposition
\begin{equation}
    \dket{\mathds{1}}\!\dbra{\mathds{1}}_{XY}
    = \frac{1}{d}\,\mathds{1}_{XY}
      + \sum_{\mu = 1}^{d^2-1} F_{\mu,X}\otimes F_{\mu,Y}^*,
\end{equation}
where $\{F_{\mu,X}\}_{\mu=1}^{d^2-1}\subset\mathsf{Trl}(\mathcal{H}_X)$ is a traceless orthonormal operator basis on $X$. Inserting this into each maximally entangled link in~\eqref{eq:QTF-app} and expanding $R$, we now focus on those contributions whose local factor on $F = F_t F_c$ is proportional to the identity operator $\mathds{1}_F$. Denote their sum by $R^{(0)}$, which is given by
\begin{align}\label{app:eq:flipDecomp}
    R^{(0)}
    &= \frac{1}{2d^2}\,\mathds{1}_{P_tABF_tP_cF_c} + \frac{1}{2d}\sum_{\mu} \Bigl(
       F_{\mu,P_t}\otimes F_{\mu,A}^*\otimes \mathds{1}_{B F_t} + 
       F_{\mu,P_t}\otimes \mathds{1}_{A} \otimes F_{\mu,B}^*\otimes \mathds{1}_{F_t}
        \Bigr) \otimes\mathds{1}_{P_c F_c}.
\end{align}
No term in $R^{(0)}$ matches \eqref{app:eq:sCond3}, so the quantum time flip belongs to the class of bistochastic supermaps. On the other hand, the last term in~\eqref{app:eq:flipDecomp} is forbidden for ordinary supermaps, as~\eqref{app:eq:diffS} suggests. Therefore, the quantum time flip cannot be implemented within the quantum circuit model.

% --------------------------------------------------------------------------
% App. C
% --------------------------------------------------------------------------

\section{Proof of Lemma \ref{lem:funcState}}
\label{app:funcState}

In what follows, we adopt the strategy used in the proof of Lemma~2 in~\cite{Guo2024}. Let $R \in \mathsf{T}_1((\hat{A} \rightarrow \hat{B}) \rightarrow I)$, so that it satisfies Eq.~\eqref{eq:17}. We decompose $R$ as
\begin{equation}
    R = R_{\rm fwd} + R_{\rm bwd} - \frac{1}{d}\mathds{1}_{AB},
\end{equation}
where
\begin{equation}
    R_{\rm fwd} \coloneqq \frac{1}{d}\mathds{1}_{AB} + X_A \otimes \mathds{1}_B, \qquad
    R_{\rm bwd} \coloneqq \frac{1}{d}\mathds{1}_{AB} + \mathds{1}_A \otimes X_B.
\end{equation}
Since $X_A\otimes\mathds{1}_B$ and $\mathds{1}_A\otimes X_B$ commute, the operators $R_{\rm fwd}$ and $R_{\rm bwd}$ commute and therefore admit a common eigenbasis $\{|u_i\rangle\}_i$, with
\begin{equation}
    R_{\rm fwd} = \sum_i \mu_i |u_i\rangle \langle u_i |, \qquad
    R_{\rm bwd} = \sum_i \tilde{\mu}_i |u_i\rangle \langle u_i |,
\end{equation}
and positivity of $R$ implies $\mu_i + \tilde{\mu}_i \geq \frac{1}{d}$ for all $i$. Let $\mu_{\rm min} \coloneqq \min_i \mu_i$. Because $\sum_i \mu_i = \operatorname{Tr}(R_{\rm fwd}) = d$ and the average eigenvalue equals $\frac{1}{d^2}\sum_i \mu_i = \frac{1}{d}$, we have $\mu_{\rm min} \leq \frac{1}{d}$. Moreover, defining
\begin{equation}
    R'_{\overline{A\rightarrow B}} = \Bigl(\frac{1}{d} - \mu_{\rm min}\Bigr)\mathds{1}_{AB} + X_A \otimes \mathds{1}_B, \qquad
    R'_{\overline{B\rightarrow A}} = \mu_{\rm min} \mathds{1}_{AB} + \mathds{1}_A \otimes X_B,
\end{equation}
the eigenvalues satisfy
\begin{equation}
    \mu_i - \mu_{\rm min} \geq 0, \qquad
    \tilde{\mu}_i - \frac{1}{d} + \mu_{\rm min} \geq 0,
\end{equation}
where the second inequality follows from $\mu_i + \tilde{\mu}_i \geq \frac{1}{d}$. Thus $R'_{\overline{A\rightarrow B}}$ and $R'_{\overline{B\rightarrow A}}$ are positive, and
\begin{equation}
    R = R'_{\overline{A\rightarrow B}} + R'_{\overline{B\rightarrow A}}.
\end{equation}
Define
\begin{equation}
    p \coloneqq d \Bigl(\frac{1}{d} - \mu_{\rm min}\Bigr) = 1 - d\mu_{\rm min},
\end{equation}
so $p \in [0,1]$. If $\mu_{\rm min} < \frac{1}{d}$, define
\begin{eqnarray}
    \nonumber R_{\overline{A\rightarrow B}} &\coloneqq& \frac{1}{p}R'_{\overline{A\rightarrow B}} \\
    &=& \frac{1}{d}\mathds{1}_{AB} + Y_A\otimes\mathds{1}_B,
\end{eqnarray}
where $Y_A \in \mathsf{Trl}(\mathcal H_A)$. By Theorem~\ref{theorem:T1}, any operator $R_{\overline{A\rightarrow B}} \in \mathsf{T}_1((A\rightarrow B)\rightarrow I)$ must be of the form
\begin{eqnarray}
    \nonumber R_{\overline{A\rightarrow B}} &=& \frac{1}{d}\mathds{1}_{AB}+Y_A\otimes\mathds{1}_B \\
    \nonumber &=&\Bigl(\frac{1}{d}\mathds{1}_A+Y_A\Bigr)\otimes\mathds{1}_B \\
    &\coloneqq&\rho_A\otimes\mathds{1}_B,
\end{eqnarray}
where $Y_A\in\mathsf{Trl}(\mathcal H_A)$ and therefore $\rho_A$ is Hermitian, positive (because $R_{\overline{A\rightarrow B}} = \rho_A\otimes \mathds{1}_B \geq 0$), and satisfies $\mathrm{Tr}[\rho_A]=1$. Thus $\rho_A$ is a density operator. By symmetry, if $\mu_{\rm min} > 0$, then $\frac{1}{1-p}R'_{\overline{B\rightarrow A}}$ is an element of $\mathsf{T}_1((B\rightarrow A)\rightarrow I)$ and must be of the form $\mathds{1}_A\otimes\sigma_B$ for some density operator $\sigma_B$. Finally,
\begin{equation}
    R = p(\rho_A \otimes \mathds{1}_B) + (1-p)(\mathds{1}_A \otimes \sigma_B),
\end{equation}
proving the thesis.

% --------------------------------------------------------------------------
% App. D
% --------------------------------------------------------------------------

\section{Networks of higher-order maps are well-defined}
\label{app:network-higher-order-welldef}
In this appendix, we prove that a network of higher-order maps given by Definition~\ref{def:nethigherordermaps} corresponds to an admissible event. For this purpose, we recall the notion of admissible composition from Ref.~\cite{Apadula2024}.

\begin{definition}
  \label{def:admiscomp}
Let $x$ and $y$ be two types such that they share a set of elementary types. We say that the composition of $x$ and $y$ is \emph{admissible} if
\begin{align}
  \begin{split}
  \label{eq:7-1}
  \forall R \in \mathsf{T}_1(x), \, \forall S \in \mathsf{T}_1(y), \; \exists
  z \ \mbox{s.t.}\ R * S \in \mathsf{T}_1(z).
  \end{split}
\end{align}
\end{definition}

First, we prove the following result.
\begin{lemma}\label{app:lem:composition}
Let $x$ and $y$ be arbitrary types, and let $R \in \mathsf{T}_1(x\to(A\to B))$ and $S \in \mathsf{T}_1(y \to (B\to C))$. Then
\begin{align}
    R * S \in \mathsf{T}_1((x\otimes y) \to (A \to C)).
\end{align}
\end{lemma}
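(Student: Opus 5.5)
We will use the characterization of deterministic events in Equation~\eqref{eq:charadetprelimin} of Theorem~\ref{thm:char_adm_events}. Write $z \coloneqq (x\otimes y)\to(A\to C)$. It suffices to show two things: that $R*S\geq 0$, and that the associated map sends $\mathsf{T}_1(x\otimes y)$ into $\mathsf{T}_1(A\to C)$. Positivity is immediate, since $R,S\geq 0$ by Theorem~\ref{thm:char_adm_events} and the link product of positive operators is positive. The operator $R*S$ lives on $\mathcal{H}_x\otimes\mathcal{H}_y\otimes\mathcal{H}_A\otimes\mathcal{H}_C$, with $B$ contracted. Here we assume, as is implicit, that $x$ and $y$ share no elementary systems with each other or with $A,B,C$ beyond $B$.

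The core step is to check the action on deterministic events of $x\otimes y$. We first treat product inputs $W=W_x\otimes W_y$ with $W_x\in\mathsf{T}_1(x)$ and $W_y\in\mathsf{T}_1(y)$. By associativity and commutativity of the link product,
\begin{equation*}
(R*S)*(W_x\otimes W_y) = (R*W_x)*(S*W_y).
\end{equation*}
Because $R\in\mathsf{T}_1(x\to(A\to B))$, the factor $R*W_x$ is (up to the transpose convention, which Lemma~\ref{lmm:transposedeteffect} lets us ignore) a channel $A\to B$. Likewise $S*W_y$ is a channel $B\to C$. Their link product is then a channel $A\to C$.

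For general $W\in\mathsf{T}_1(x\otimes y)$ we cannot rely on product form. Instead we use the recursive characterization of Theorem~\ref{theorem:T1}, together with the lemma giving $\Delta_{x\otimes y}=(\mathsf{I}_y\otimes\Delta_x)\oplus(\Delta_y\otimes\Delta_x)\oplus(\Delta_y\otimes\mathsf{I}_x)$ and $\lambda_{x\otimes y}=\lambda_x\lambda_y$. The key observation is that the affine hull of $\mathsf{T}_1(x\otimes y)$ coincides with the affine hull of the product set $\mathsf{T}_1(x)\otimes\mathsf{T}_1(y)$. Indeed, $\lambda_x\lambda_y\mathds{1}$ plus the three summands of $\Delta_{x\otimes y}$ are exactly what one obtains by expanding $(\lambda_x\mathds{1}+X_x)\otimes(\lambda_y\mathds{1}+X_y)$. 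Here one needs that $\mathsf{T}_1(x)$ spans the affine space $\lambda_x\mathds{1}+\Delta_x$, which holds because $\lambda_x\mathds{1}$ is interior, so small perturbations in $\Delta_x$ stay positive.

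With this in hand, the map $W\mapsto (R*S)*W$ is linear, and the conditions defining $\mathsf{T}_1(A\to C)$ other than positivity are affine: namely $\Tr_C[\cdot]=\mathds{1}_A$. These conditions therefore extend from the product set to its affine hull, which contains $\mathsf{T}_1(x\otimes y)$. Positivity of the output for general positive $W$ follows from complete positivity, since $R*S\geq0$. Hence $(R*S)*W\in\mathsf{T}_1(A\to C)$, and Equation~\eqref{eq:charadetprelimin} gives $R*S\in\mathsf{T}_1(z)$.

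The main obstacle is this affine-span step. It requires a careful argument that $\lambda_x\mathds{1}+\Delta_x$ is exactly the affine hull of $\mathsf{T}_1(x)$, using the interior point $\lambda_x\mathds{1}$ and the scaling freedom $\lambda_x\mathds{1}+\epsilon X$. It also requires matching the tensor-product decomposition of $\Delta_{x\otimes y}$ with the expansion of products. A secondary concern is bookkeeping of transposes in the link product and of system orderings; Lemma~\ref{lmm:transposedeteffect} handles this.
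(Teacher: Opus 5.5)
Your proof is correct, but it takes a different route from the paper's.

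The paper argues by brute force. It expands $R$ and $S$ along the subspaces $\Delta_{x\to(A\to B)}$ and $\Delta_{y\to(B\to C)}$ from Theorem~\ref{theorem:T1} and evaluates $R*S$ term by term. It then checks that the identity coefficient is $\lambda\lambda' d_B=(d_xd_yd_C\lambda_x\lambda_y)^{-1}$ and matches each remaining term against $\Delta_{(x\otimes y)\to(A\to C)}$, written using $\overline{\Delta}_{x\otimes y}$.

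You instead use the functional criterion \eqref{eq:charadetprelimin}. On product inputs $W_x\otimes W_y$ the network reduces to the sequential composition $(R*W_x)*(S*W_y)$ of two channels. You then extend to all of $\mathsf{T}_1(x\otimes y)$ by affinity. This works because $\mathsf{T}_1(x\otimes y)\subseteq\lambda_x\lambda_y\mathds{1}+\Delta_{x\otimes y}$, and that affine space is exactly the affine hull of the product deterministic events. The constraint $\Tr_C[\cdot]=\mathds{1}_A$ is affine, and positivity of the output follows from $R*S\geq 0$.

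The step you flag as the main obstacle is routine. Take $\lambda_x\mathds{1}\in\mathsf{T}_1(x)$, which is strictly positive since $\lambda_x>0$. Form $(\lambda_x\mathds{1}+\epsilon X)\otimes(\lambda_y\mathds{1}+\epsilon' Y)$ with $X\in\Delta_x$, $Y\in\Delta_y$ and $\epsilon,\epsilon'$ small. Its first differences generate $\Delta_x\otimes\mathsf{I}_y$ and $\mathsf{I}_x\otimes\Delta_y$, and its second mixed difference generates $\Delta_x\otimes\Delta_y$. No transpose bookkeeping is needed either, since $\mathcal{M}[W]=M*W$ holds exactly with the paper's conventions.

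Your route is shorter and shows why the lemma is true: it is composition of channels, extended by linearity from product inputs. It needs only the tensor-product lemma for $\Delta_{x\otimes y}$ and $\lambda_{x\otimes y}$ plus the interior point $\lambda_x\mathds{1}$, and the normalization constant comes out automatically.

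The paper's computation is heavier. In return, it gives an explicit decomposition of $R*S$ into the $\Delta$-subspaces, in the same style as the term-by-term analysis in the proof of Proposition~\ref{prop:higherordernetworkchar}.
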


\begin{proof}
The proof is a direct computation. From Theorem~\ref{theorem:T1} we have
\begin{align}
    \Delta_{x\to(A\to B)} &= 
    \mathsf{Hrm}(\mathcal{H}_{x}) \otimes \mathsf{Hrm}(\mathcal{H}_{A})
    \otimes \mathsf{Trl}(\mathcal{H}_{B})
    \oplus
    \overline{\Delta}_{x} \otimes \mathsf{Hrm}(\mathcal{H}_{A})
    \otimes \mathsf{I}_B, \\
    \overline{\Delta}_{x \otimes y} &= 
    \overline{\Delta}_{x}\otimes (\mathsf{I}_y \oplus \Delta_{y}) 
    \oplus
    \mathsf{Hrm}(\mathcal{H}_{x}) \otimes
    \overline{\Delta}_{y}.
\end{align}
Then, for any $R \in \mathsf{T}_1(x\to (A\to B))$ and 
$S \in \mathsf{T}_1(y\to (B\to C))$, we have 
\begin{align}
    &R = \lambda \mathds{1}_x \otimes \mathds{1}_{AB}
      + \sum_i O_i \otimes Q_i
      + \sum_{m} D_{m} \otimes K_{m} \otimes \mathds{1}_{B}, \\
    &S = \lambda' \mathds{1}_y \otimes \mathds{1}_{BC}
      + \sum_j O'_j \otimes Q'_j
      + \sum_n D_n' \otimes K'_n \otimes \mathds{1}_{C},
\end{align}
where $\lambda = \frac{1}{d_x d_B \lambda_x}$, $\lambda' = \frac{1}{d_y d_C \lambda_y}$, and
\begin{align}
    &O_i \in \mathsf{Hrm}(\mathcal{H}_{x}), \quad
    Q_i \in \mathsf{Hrm}(\mathcal{H}_{A}) \otimes \mathsf{Trl}(\mathcal{H}_{B}),
    \quad
    D_m \in \overline{\Delta}_x, \quad K_m \in \mathsf{Hrm}(\mathcal{H}_A), \\
    &O'_j \in \mathsf{Hrm}(\mathcal{H}_{y}), \quad
    Q'_j \in \mathsf{Hrm}(\mathcal{H}_{B}) \otimes \mathsf{Trl}(\mathcal{H}_{C}),
    \quad
    D'_n \in \overline{\Delta}_y, \quad K'_n \in \mathsf{Hrm}(\mathcal{H}_B).
\end{align}
A direct evaluation of the link product yields
\begin{align}
R * S
&=
\lambda \lambda' d_B \,
\mathds{1}_x \otimes \mathds{1}_y \otimes \mathds{1}_{AC}
\nonumber\\
&\quad
+ \sum_{i,j} O_i \otimes O'_j \otimes (Q_i * Q'_j)
+ \sum_j \mathds{1}_x \otimes O'_j \otimes \mathds{1}_A \otimes \operatorname{Tr}_B[Q'_j]
+ \sum_{m,j} D_m \otimes O'_j \otimes K_m \otimes \operatorname{Tr}_B[Q'_j]
\nonumber\\
&\quad
+ \Bigl(
\lambda \sum_n \operatorname{Tr}[K'_n]\,
\mathds{1}_x \otimes D'_n \otimes \mathds{1}_A
+ \lambda' d_B \sum_m D_m \otimes \mathds{1}_y \otimes K_m
\nonumber\\
&\qquad\quad
+ \sum_{m,n} \operatorname{Tr}[K'_n]\,
D_m \otimes D'_n \otimes K_m
+ \sum_{i,n} O_i \otimes D'_n \otimes (Q_i * K'_n)
\Bigr) \otimes \mathds{1}_C .
\label{app:eq:tensor_from_star}
\end{align}
By Theorem~\ref{theorem:T1}, an operator $W \in \mathcal{L}(\mathcal{H}_x \otimes \mathcal{H}_y \otimes \mathcal{H}_{AC})$ belongs to
$\mathsf{T}_1((x \otimes y) \to (A \to C))$ if and only if it is positive and can be written as
\[
W = \tilde{\lambda}\,
\mathds{1}_x \otimes \mathds{1}_y \otimes \mathds{1}_{AC}
+ X,
\qquad
\tilde{\lambda} = \frac{1}{d_x d_y d_C \lambda_x \lambda_y}, \qquad X \in \Delta_{(x \otimes y) \to (A \to C)},
\]
where
\begin{equation}
    \Delta_{(x \otimes y) \to (A \to C)} =
    \mathsf{Hrm}(\mathcal{H}_x) \otimes \mathsf{Hrm}(\mathcal{H}_y) \otimes \mathsf{Hrm}(\mathcal{H}_A) \otimes \mathsf{Trl}(\mathcal{H}_C)
    \;\oplus\;
    \overline{\Delta}_{x \otimes y} \otimes \mathsf{Hrm}(\mathcal{H}_A) \otimes \mathsf{I}_C .
\label{app:eq:tensor_map}
\end{equation}
Comparing it with \eqref{app:eq:tensor_from_star}, we conclude that
\[
R * S \in \mathsf{T}_1\!\bigl((x \otimes y) \to (A \to C)\bigr),
\]
which proves the claim.
\end{proof}

Recalling Equations~\eqref{eq:26} and \eqref{eq:26a}, we can now show that a network of higher-order maps defines an admissible composition, and therefore an admissible event. Indeed, each link product appearing in Equation~\eqref{eq:26} involves operators
\[
R \in \mathsf{T}_1(\overline{x_i} \to (E_{i-1} \to E_i)),
\qquad
S \in \mathsf{T}_1(\overline{x_{i+1}} \to (E_i \to E_{i+1})),
\]
for $i \in \{1,\ldots,n-1\}$. By Lemma~\ref{app:lem:composition}, each such link product is an admissible composition. Applying Lemma~\ref{app:lem:composition} iteratively to all links in the network therefore proves that the entire composition is admissible.

% --------------------------------------------------------------------------
% App. E
% --------------------------------------------------------------------------

\section{Quantum $n$-time flip}\label{app:N-flip}

A bi-slot $n$-comb can be realized as a sequential
composition of $n$ bistochastic supermaps. As a special case of this construction, we can
consider the generalization of the quantum time flip from a single bistochastic operation to an arbitrary number $n$ of them, implemented coherently in
sequence. Concretely, we define
\begin{equation}
    R_n = \bigast_{k=1}^n \dket{\mathcal{F}_k}\dbra{\mathcal{F}_k},
\end{equation}
where each $\dket{\mathcal{F}_k}\dbra{\mathcal{F}_k} \in \mathsf{T}_1\bigl((\hat{A}_k \rightarrow \hat{B}_k) \rightarrow (E_{k-1} \rightarrow E_k)\bigr)$ is defined analogously to~\eqref{eq:QTF-app}, with $E_0 = P$ and $E_{n} = F$. A straightforward calculation yields the following definition.

\begin{definition}
     Let $\mathcal{C}_k \in \mathcal{L}(\mathcal{L}(\mathcal{H}_{A_k}), \mathcal{L}(\mathcal{H}_{B_k}))$ for $k \in \{1,\ldots,n\}$ be completely positive maps with Kraus operators $\{C^{(k)}_i\}_i$. The quantum $n$-time flip is a supermap transforming the set $\{\mathcal{C}_k\}_k$ into an operation $\mathcal{F}^{(n)} \in \mathcal{L}(\mathcal{L}(\mathcal{H}_{P_t} \otimes \mathcal{H}_{P_c}) , \mathcal{L}(\mathcal{H}_{F_t} \otimes \mathcal{H}_{F_c}))$, defined as follows:
    \begin{align}\label{app:eq:nQTF-supermap}
        \mathcal{F}^{(n)}(\mathcal{C})[\rho] &= \sum_{i_1,\ldots,i_n} F_{i_1\ldots i_n} (\rho\otimes\omega) F_{i_1\ldots i_n}^\dagger, \\
        F_{i_1\ldots i_n} &\coloneqq \prod_{k=1}^n \Bigl( C_{i_k}^{(k)} \otimes |0\rangle\langle 0|_{P_{c^{(k)}}} + C_{i_k}^{(k)T} \otimes |1\rangle\langle 1|_{P_{c^{(k)}}} \Bigr) \nonumber\\
        &= \sum_{\mathbf{b}} \mathcal{T}_{\mathbf{b}}\Bigl(\prod_{k=1}^n  C_{i_k}^{(k)}\Bigr) \otimes |\mathbf{b}\rangle\langle\mathbf{b}|_{P_c},
    \end{align}
    where $\mathcal{H}_{P_t} \simeq \mathcal{H}_{A} \simeq \mathcal{H}_{B} \simeq \mathcal{H}_{F_t}$ and $\mathcal{H}_{P_c} \simeq \mathcal{H}_{F_c} \simeq \mathbb{C}^{2^n}$, with $\mathcal{H}_{P_c} = \otimes_{k=1}^n \mathcal{H}_{P_{c^{(k)}}}$ and $\mathcal{H}_{F_c} = \otimes_{k=1}^n \mathcal{H}_{F_{c^{(k)}}}$, $\mathcal{H}_{P_{c^{(k)}}} \simeq \mathcal{H}_{F_{c^{(k)}}} \simeq \mathbb{C}^2$ for every $k \in \{1, \ldots, n\}$, $\rho \in \mathcal{L}(\mathcal{H}_{P_t})$, and $\omega \in \mathcal{L}(\mathcal{H}_{P_c})$, with indices $t$ and $c$ corresponding to target and control systems, respectively. Moreover, $\mathbf{b}$ is an $n$-bit string encoding the state of the overall control system $\mathcal{H}_{P_c}$, and $\mathcal{T}_{\mathbf{b}}$ applies a transpose to the $k$-th operator if and only if the $k$-th bit of $\mathbf{b}$ equals $1$.
\end{definition}
Thus, the quantum $n$-time flip is a higher-order operation that coherently controls the input-output direction for each of the $n$ bidirectional devices (rendering each locally indefinite) while preserving a definite global causal order.

% --------------------------------------------------------------------------
% App. F
% --------------------------------------------------------------------------

\section{Bistochastic process matrices}
\label{app:BSP}

In this appendix, we provide several bipartite examples of bistochastic process matrices that cannot arise as ordinary process matrices. Therefore, we focus on deterministic higher-order operations of type $\mathtt{BSP}_2$. By Theorem~\ref{theorem:T1}, an operator $R \in \mathcal{L}(\mathcal{H}_{PA_1B_1A_2B_2F})$ is a deterministic event of this type if and only if
\begin{equation}
    \label{app:eq:lcCond1}
    R = \lambda_R \mathds{1}_{PA_1B_1A_2B_2F} + X_R, \qquad \lambda_R = \frac{1}{d_F d_{A_1} d_{A_2}}, \qquad X_R \in \Delta_{\mathtt{BSP}_2},
\end{equation}
where
\begin{align}
    \Delta_{\mathtt{BSP}_2} &= \mathsf{Hrm}(\mathcal{H}_{PA_1B_1A_2B_2}) \otimes \mathsf{Trl}(\mathcal{H}_F) \nonumber\\
    &\qquad \oplus \mathsf{Hrm}(\mathcal{H}_P) \otimes \Bigl(\mathsf{Hrm}(\mathcal{H}_{A_1B_1}) \otimes \bigl(\mathsf{Trl}(\mathcal{H}_{A_2}) \otimes \mathsf{I}_{B_2} \oplus \mathsf{I}_{A_2} \otimes \mathsf{Trl}(\mathcal{H}_{B_2})\bigr) \nonumber\\
    &\qquad \oplus \bigl(\mathsf{Trl}(\mathcal{H}_{A_1}) \otimes \mathsf{I}_{B_1} \oplus \mathsf{I}_{A_1} \otimes \mathsf{Trl}(\mathcal{H}_{B_1})\bigr) \otimes \bigl(\mathsf{Trl}(\mathcal{H}_{A_2}) \otimes \mathsf{Trl}(\mathcal{H}_{B_2}) \oplus \mathsf{I}_{A_2} \otimes \mathsf{I}_{B_2}\bigr)\Bigr) \otimes \mathsf{I}_F.
\end{align}
Equivalently,
\begin{align} 
    X_R &\notin \mathsf{Hrm}(\mathcal{H}_P) \otimes \Bigl(\mathsf{I}_{A_1B_1} \otimes \mathsf{Trl}(\mathcal{H}_{A_2})\otimes \mathsf{Trl}(\mathcal{H}_{B_2}) 
    \oplus \mathsf{Trl}(\mathcal{H}_{A_1})\otimes \mathsf{Trl}(\mathcal{H}_{B_1})\otimes \mathsf{I}_{A_2B_2}\nonumber\\
    &\qquad \oplus \mathsf{Trl}(\mathcal{H}_{A_1})\otimes \mathsf{Trl}(\mathcal{H}_{B_1})\otimes \mathsf{Trl}(\mathcal{H}_{A_2})\otimes \mathsf{Trl}(\mathcal{H}_{B_2}) \Bigr) \otimes \mathsf{I}_F. \label{app:eq:lcCond3}
\end{align}
For comparison, the corresponding ordinary bipartite process matrices of type $\mathtt{P}_2$ are characterized by
\begin{align}
    \label{app:eq:lcPCond1}
    R &= \lambda_R\,\mathds{1}_{PA_1B_1A_2B_2F} + X_R, \qquad \lambda_R = \frac{1}{d_F d_{A_1} d_{A_2}},\\
    \label{app:eq:lcPCond2}
    X_R &\notin \mathsf{Hrm}(\mathcal{H}_P) \otimes \Bigl(\mathsf{I}_{A_1B_1} \otimes \mathsf{Hrm}(\mathcal{H}_{A_2})\otimes \mathsf{Trl}(\mathcal{H}_{B_2})
    \oplus \mathsf{Hrm}(\mathcal{H}_{A_1})\otimes \mathsf{Trl}(\mathcal{H}_{B_1})\otimes \mathsf{I}_{A_2B_2} \nonumber\\
    &\qquad \oplus \mathsf{Hrm}(\mathcal{H}_{A_1})\otimes \mathsf{Trl}(\mathcal{H}_{B_1})\otimes \mathsf{Hrm}(\mathcal{H}_{A_2})\otimes \mathsf{Trl}(\mathcal{H}_{B_2})\Bigr) \otimes \mathsf{I}_{F}.
\end{align}
Hence, a bistochastic but non-ordinary process matrix must contain traceless components in the difference
\begin{align}
\Delta_{\mathtt{BSP}_2 \setminus \mathtt{P}_2} 
&= \mathsf{Hrm}(\mathcal{H}_{P}) \otimes \Bigl(\mathsf{I}_{A_1B_1A_2}\otimes \mathsf{Trl}(\mathcal{H}_{B_2})
\oplus \mathsf{I}_{A_1}\otimes \mathsf{Trl}(\mathcal{H}_{B_1})\otimes \mathsf{I}_{A_2B_2}\nonumber\\
&\quad \oplus \mathsf{I}_{A_1}\otimes \mathsf{Trl}(\mathcal{H}_{B_1})\otimes \mathsf{I}_{A_2}\otimes \mathsf{Trl}(\mathcal{H}_{B_2})\nonumber\\
&\quad \oplus \mathsf{Trl}(\mathcal{H}_{A_1})\otimes \mathsf{Trl}(\mathcal{H}_{B_1})\otimes \mathsf{I}_{A_2}\otimes \mathsf{Trl}(\mathcal{H}_{B_2})\nonumber\\
&\quad \oplus \mathsf{I}_{A_1}\otimes \mathsf{Trl}(\mathcal{H}_{B_1})\otimes \mathsf{Trl}(\mathcal{H}_{A_2})\otimes \mathsf{Trl}(\mathcal{H}_{B_2}) \Bigr) \otimes \mathsf{I}_{F}. \label{app:eq:diffBSP}
\end{align}

% --------------------------------------------------------------------------
\subsection{Flippable quantum \texttt{SWITCH}}
\label{app:procFS}

\begin{definition}
    Let $\mathcal{A} \in \mathcal{L}(\mathcal{L}(\mathcal{H}_{A_1}), \mathcal{L}(\mathcal{H}_{B_1}))$ and $\mathcal{B} \in \mathcal{L}(\mathcal{L}(\mathcal{H}_{A_2}), \mathcal{L}(\mathcal{H}_{B_2}))$ be completely positive maps with Kraus operators $\{A_i\}_i$ and $\{B_j\}_j$, respectively. The flippable quantum \texttt{SWITCH} is a supermap transforming the set $\{\mathcal{A}, \mathcal{B}\}$ into an operation $\mathcal{FS}(\mathcal{A}, \mathcal{B}) \in \mathcal{L}(\mathcal{L}(\mathcal{H}_{P_t} \otimes \mathcal{H}_{P_c}) , \mathcal{L}(\mathcal{H}_{F_t} \otimes \mathcal{H}_{F_c}))$ defined as follows:
    \begin{eqnarray}
        \mathcal{FS}(\mathcal{A}, \mathcal{B})[\rho] &=& \sum_{ij} K_{ij} (\rho \otimes \omega) K_{ij}^\dagger, \\
        K_{ij} &=& B_j A_i \otimes |0\rangle\langle 0| + (B_j A_i)^T \otimes |1\rangle\langle 1|,
    \end{eqnarray}
     where $\mathcal{H}_{P_t} \simeq \mathcal{H}_{A_1} \simeq \mathcal{H}_{B_1} \simeq \mathcal{H}_{A_2} \simeq \mathcal{H}_{B_2} \simeq \mathcal{H}_{F_t}$ and $\mathcal{H}_{P_c} \simeq \mathcal{H}_{F_c} \simeq \mathbb{C}^2$, $\rho \in \mathcal{L}(\mathcal{H}_{P_t})$ and $\omega \in \mathcal{L}(\mathcal{H}_{P_c})$, with indices $t$ and $c$ corresponding to target and control systems, respectively.
\end{definition}

A straightforward calculation shows that the flippable quantum \texttt{SWITCH} is defined by the Choi operator
\begin{align}
    \label{eq:QTF-app-Choi}
    R &= \dket{\mathcal{FS}}\dbra{\mathcal{FS}}, \\
    \nonumber \dket{\mathcal{FS}} &= |\mathds{1}\rangle\!\rangle_{P_t A_1} \otimes |\mathds{1}\rangle\!\rangle_{B_1 A_2} \otimes |\mathds{1}\rangle\!\rangle_{B_2 F_t} \otimes |0\rangle_{P_c} \otimes |0\rangle_{F_c} \\
        &\qquad + |\mathds{1}\rangle\!\rangle_{P_t B_2} \otimes |\mathds{1}\rangle\!\rangle_{A_2 B_1} \otimes |\mathds{1}\rangle\!\rangle_{A_1 F_t}  \otimes |1\rangle_{P_c} \otimes |1\rangle_{F_c},
\end{align}
which can be expanded as follows:
\begin{align}
    R &= |\mathcal{FS}\rangle\!\rangle \langle\!\langle\mathcal{FS}| \\
    &= |\mathds{1}\rangle\!\rangle\langle\!\langle\mathds{1}|_{P_t A_1} \otimes |\mathds{1}\rangle\!\rangle\langle\!\langle\mathds{1}|_{B_1 A_2} \otimes |\mathds{1}\rangle\!\rangle\langle\!\langle\mathds{1}|_{B_2 F_t} \otimes |0\rangle\langle 0|_{P_c} \otimes |0\rangle\langle 0|_{F_c} \\
    &\quad + |\mathds{1}\rangle\!\rangle\langle\!\langle\mathds{1}|_{P_t B_2} \otimes |\mathds{1}\rangle\!\rangle\langle\!\langle\mathds{1}|_{A_2 B_1} \otimes |\mathds{1}\rangle\!\rangle\langle\!\langle\mathds{1}|_{A_1 F_t} \otimes |1\rangle\langle 1|_{P_c} \otimes |1\rangle\langle 1|_{F_c} \\
    &\quad + \ldots,
\end{align}
where $\ldots$ denotes the off-diagonal terms in the control subsystems $P_c$ and $F_c$. These off-diagonal terms are traceless on $F_c$ and therefore belong to the subspace
\begin{equation}
    \mathsf{Hrm}(\mathcal{H}_{PA_1B_1A_2B_2}) \otimes \mathsf{Trl}(\mathcal{H}_F) \subsetneq \Delta_{\mathtt{BSP}_2}.
\end{equation}
Positivity of $R$ is immediate from its definition as a rank-one operator,
\begin{equation}
    R = |\mathcal{FS}\rangle\!\rangle \langle\!\langle\mathcal{FS}| \geq 0.
\end{equation}
Moreover,
\begin{equation}
    \lambda_R = \frac{\Tr[R]}{\dim(\mathcal{H}_{PA_1B_1A_2B_2F})} = \frac{\langle\!\langle\mathcal{FS}|\mathcal{FS}\rangle\!\rangle}{\dim(\mathcal{H}_{PA_1B_1A_2B_2F})} = \frac{1}{2d^3},
\end{equation}
where the identity $\mathrm{Tr}(|\mathds{1}\rangle\!\rangle\langle\!\langle\mathds{1}|_{XY}) = d_X$ is used for each maximally entangled pair. Hence, $\lambda_R$ agrees with the general form~\eqref{app:eq:lcCond1} for the given local dimensions.

To relate $R$ to the structure of $\Delta_{\mathtt{BSP}_2}$, we use the standard decomposition
\begin{equation}
|\mathds{1}\rangle\!\rangle\langle\!\langle\mathds{1}|_{XY} = \frac{1}{d} \mathds{1}_{XY} + \sum_{\mu = 1}^{d^2-1} F_{\mu, X} \otimes F^*_{\mu, Y},
\end{equation}
where each $F_{\mu, X} \in \mathsf{Trl}(\mathcal{H}_X)$ is traceless. Inserting this decomposition into each maximally entangled link and expanding $R$, we now focus on those contributions $R^{(0)}$ whose local factor on $F = F_t F_c$ is proportional to the identity operator $\mathds{1}_F$. The result is
\begin{align}
    R^{(0)} &= \frac{1}{2d^3} \mathds{1}_{P_tA_1B_1A_2B_2F_tP_cF_c} \nonumber\\
    &\quad + \frac{1}{2d^2}\sum_{\mu} \Bigl( F_{\mu, P_t} \otimes F_{\mu, A_1}^* \otimes \mathds{1}_{B_1A_2B_2F_t} + 2\mathds{1}_{P_tA_1} \otimes F_{\mu, B_1} \otimes F_{\mu, A_2}^* \otimes \mathds{1}_{B_2F_t} \nonumber\\
    &\qquad\qquad + F_{\mu, P_t} \otimes \mathds{1}_{A_1B_1A_2} \otimes F_{\mu, B_2}^* \otimes \mathds{1}_{F_t} \Bigr) \otimes \mathds{1}_{P_cF_c} \nonumber\\
    &\quad + \frac{1}{2d}\sum_{\mu,\mu'} \Bigl( F_{\mu, P_t} \otimes F_{\mu, A_1}^* \otimes F_{\mu', B_1} \otimes F_{\mu', A_2}^* \otimes \mathds{1}_{B_2F_tP_cF_c} \nonumber\\
    &\qquad\qquad + F_{\mu, P_t} \otimes \mathds{1}_{A_1} \otimes F_{\mu, B_1} \otimes F_{\mu', A_2}^* \otimes F_{\mu', B_2}^* \otimes \mathds{1}_{F_tP_cF_c}\Bigr).
\end{align}
No term in $R^{(0)}$ matches \eqref{app:eq:lcCond3}, so the flippable quantum \texttt{SWITCH} belongs to the class of bistochastic process matrices.

Finally, one finds explicit traceless contributions forbidden by ordinary process matrices. For example,
\begin{equation}
    \frac{1}{2d^2} F_{\mu, P_t} \otimes \mathds{1}_{A_1B_1A_2} \otimes F_{\mu, B_2}^* \otimes \mathds{1}_{F_tP_cF_c}
    \in \mathsf{Hrm}(\mathcal{H}_{P})\otimes\mathsf{I}_{A_1B_1A_2}\otimes\mathsf{Trl}(\mathcal{H}_{B_2})\otimes\mathsf{I}_F,
\end{equation}
as in \eqref{app:eq:diffBSP}. Therefore, the flippable quantum \texttt{SWITCH} is a valid bistochastic process matrix of type $\mathtt{BSP}_2$, but it cannot be implemented within ordinary process matrices.

% --------------------------------------------------------------------------
\subsection{Liu--Chiribella (LC) processes}
\label{app:procLC}

LC processes are higher-order operations that transform two maps into unit probability~\cite{Liu2025}. Therefore, we consider the bipartite bistochastic process-matrix type with trivial global input and output,
\begin{equation}
    \mathtt{BSP}_2^{P,F=I} 
    = \overline{(\hat{A}_1 \rightarrow \hat{B}_1) \otimes (\hat{A}_2 \rightarrow \hat{B}_2)}.
\end{equation}

\subsubsection{LC $(2,3,?)$-process}

\begin{definition}
For $n\in\{2,3\}$, the LC $(2,3,?)$ process is defined by the operator
\begin{equation}
    R_n = \sum_{i,j,k,l=0}^{n-1} p_{ijkl}\;
        |i\rangle\langle i|_{A_1}\otimes|j\rangle\langle j|_{B_1}
        \otimes|k\rangle\langle k|_{A_2}\otimes|l\rangle\langle l|_{B_2},
\end{equation}
where
\begin{equation}
    p_{ijkl} =
    \begin{cases}
      1, & \text{if } j,l\in\{0,\dots,n-1\},\ 
               i = (j+l)\!\!\!\pmod n,\ 
               k = (j-l)\!\!\!\pmod n,\\[0.2em]
      0, & \text{otherwise.}
    \end{cases}
\end{equation}
\end{definition}

In particular, there are exactly $n^2$ nonzero coefficients $p_{ijkl}$, and each contributes a rank-one projector. Hence
\begin{equation}
    \Tr[R_n] = n^2.
\end{equation}
Since $\dim(\mathcal{H}_{A_1B_1A_2B_2}) = n^4$, we have
\begin{equation}
    R_n = \lambda_{R_n}\,\mathds{1}_{A_1B_1A_2B_2} + X_{R_n},\qquad
    \lambda_{R_n} = \frac{\Tr[R_n]}{n^4} = \frac{1}{n^2},
\end{equation}
so the normalization~\eqref{app:eq:lcCond1} holds for $R_n$:
\begin{equation}
    R_n = \frac{1}{n^2}\,\mathds{1}_{A_1B_1A_2B_2} + X_{R_n}.
\end{equation}
To characterize $X_{R_n}$, we introduce traceless local diagonal operators and expand $R_n$ explicitly.

\paragraph{Case $n=2$.}

Here the nonzero coefficients correspond to the four tuples
\begin{equation}
 (i,j,k,l)\in\{(0,0,0,0),(1,1,0,0),(1,0,1,1),(0,1,1,1)\},
\end{equation}
so
\begin{equation}
\begin{aligned}
R_2
&= |0\rangle\langle 0|_{A_1}\otimes|0\rangle\langle 0|_{B_1}
   \otimes|0\rangle\langle 0|_{A_2}\otimes|0\rangle\langle 0|_{B_2}\\
&\quad + |1\rangle\langle 1|_{A_1}\otimes|1\rangle\langle 1|_{B_1}
   \otimes|0\rangle\langle 0|_{A_2}\otimes|0\rangle\langle 0|_{B_2}\\
&\quad + |1\rangle\langle 1|_{A_1}\otimes|0\rangle\langle 0|_{B_1}
   \otimes|1\rangle\langle 1|_{A_2}\otimes|1\rangle\langle 1|_{B_2}\\
&\quad + |0\rangle\langle 0|_{A_1}\otimes|1\rangle\langle 1|_{B_1}
   \otimes|1\rangle\langle 1|_{A_2}\otimes|1\rangle\langle 1|_{B_2}.
\end{aligned}
\end{equation}
Define, for each subsystem $K\in\{A_1,B_1,A_2,B_2\}$,
\begin{equation}
    X_K \coloneqq |0\rangle\langle 0|_K - \frac{1}{2}\mathds{1}_K = \frac{1}{2}\sigma^z_K ,\qquad
    Y_K \coloneqq |1\rangle\langle 1|_K - \frac{1}{2}\mathds{1}_K = -\frac{1}{2}\sigma^z_K,
\end{equation}
where $\sigma^z_K$ is the Pauli-$Z$ operator on $K$, so
\begin{equation}
|0\rangle\langle 0|_K = \frac{1}{2}\mathds{1}_K + X_K,\qquad 
|1\rangle\langle 1|_K = \frac{1}{2}\mathds{1}_K + Y_K.
\end{equation}
Writing
\begin{equation}
    R_2 = \sum_{k=1}^4 E_k\otimes F_k,
\end{equation}
with $E_k$ acting on $\mathcal{H}_{A_1}\otimes\mathcal{H}_{B_1}$ and $F_k$ on $\mathcal{H}_{A_2}\otimes\mathcal{H}_{B_2}$,
we decompose
\begin{equation}
    E_k = E_k^{(0)} + E_k^{(1)},\qquad
    F_k = F_k^{(0)} + F_k^{(1)},
\end{equation}
where the terms $E_k^{(0)} \otimes F_k^{(0)}$ collect only the components appearing in~\eqref{app:eq:lcCond3}:
\begin{equation}
\begin{aligned}
E_1^{(0)} &= \frac{1}{4}\mathds{1}_{A_1B_1} 
  - Y_{A_1}\otimes X_{B_1}, &
F_1^{(0)} &= \frac{1}{4}\mathds{1}_{A_2B_2}
  + X_{A_2}\otimes X_{B_2}, \\
E_2^{(0)} &= \frac{1}{4}\mathds{1}_{A_1B_1} 
  - Y_{A_1}\otimes Y_{B_1}, &
F_2^{(0)} &= \frac{1}{4}\mathds{1}_{A_2B_2}
  + X_{A_2}\otimes Y_{B_2}, \\
E_3^{(0)} &= \frac{1}{4}\mathds{1}_{A_1B_1} 
  + Y_{A_1}\otimes X_{B_1}, &
F_3^{(0)} &= \frac{1}{4}\mathds{1}_{A_2B_2}
  - X_{A_2}\otimes Y_{B_2}, \\
E_4^{(0)} &= \frac{1}{4}\mathds{1}_{A_1B_1} 
  + Y_{A_1}\otimes Y_{B_1}, &
F_4^{(0)} &= \frac{1}{4}\mathds{1}_{A_2B_2}
  - X_{A_2}\otimes X_{B_2}.
\end{aligned}
\end{equation}
A direct substitution then yields
\begin{equation}
    \sum_{k=1}^4 E_k^{(0)}\otimes F_k^{(0)} = \frac{1}{4}\,\mathds{1}_{A_1B_1A_2B_2},
\end{equation}
so that
\begin{equation}
R_2 = \frac{1}{4}\,\mathds{1}_{A_1B_1A_2B_2} + X_{R_2},\qquad
X_{R_2}\in \overline{\Delta}_{(\hat{A}_1 \rightarrow \hat{B}_1) \otimes (\hat{A}_2 \rightarrow \hat{B}_2)},
\end{equation}
i.e., $R_2$ is a bistochastic process matrix. Finally, one can identify explicit traceless contributions of $X_{R_2}$ that lie in~\eqref{app:eq:diffBSP}. In particular, one finds a nonzero term of the form
\begin{equation}
   \frac{1}{4}\sigma^z_{A_1}\otimes \sigma^z_{B_1}\otimes \mathds{1}_{A_2}\otimes \sigma^z_{B_2} \in \mathsf{Trl}(\mathcal{H}_{A_1})\otimes\mathsf{Trl}(\mathcal{H}_{B_1})\otimes\mathsf{I}_{A_2}\otimes\mathsf{Trl}(\mathcal{H}_{B_2}).
\end{equation}
This subspace is contained in $\Delta_{\mathtt{BSP}_2^{P,F=I} \setminus \mathtt{P}_2^{P,F=I}}$; hence, $R_2$ is a valid bistochastic process matrix, but it cannot arise as an ordinary process matrix.

\paragraph{Case $n=3$.}

The nonzero coefficients correspond to all pairs $(j,l)\in\{0,1,2\}^2$ with
\begin{equation}
i = (j+l)\!\!\!\pmod 3,\qquad k = (j-l)\!\!\!\pmod 3.
\end{equation}
Thus $R_3$ contains $9$ rank-one terms:
\begin{equation}
\begin{aligned}
R_3
&= |0\rangle\langle 0|_{A_1}\otimes|0\rangle\langle 0|_{B_1}\otimes|0\rangle\langle 0|_{A_2}\otimes|0\rangle\langle 0|_{B_2}\\
&\quad + |1\rangle\langle 1|_{A_1}\otimes|1\rangle\langle 1|_{B_1}\otimes|2\rangle\langle 2|_{A_2}\otimes|1\rangle\langle 1|_{B_2}\\
&\quad + |2\rangle\langle 2|_{A_1}\otimes|2\rangle\langle 2|_{B_1}\otimes|1\rangle\langle 1|_{A_2}\otimes|2\rangle\langle 2|_{B_2}\\
&\quad + |1\rangle\langle 1|_{A_1}\otimes|0\rangle\langle 0|_{B_1}\otimes|1\rangle\langle 1|_{A_2}\otimes|2\rangle\langle 2|_{B_2}\\
&\quad + |2\rangle\langle 2|_{A_1}\otimes|1\rangle\langle 1|_{B_1}\otimes|0\rangle\langle 0|_{A_2}\otimes|1\rangle\langle 1|_{B_2}\\
&\quad + |0\rangle\langle 0|_{A_1}\otimes|2\rangle\langle 2|_{B_1}\otimes|2\rangle\langle 2|_{A_2}\otimes|0\rangle\langle 0|_{B_2}\\
&\quad + |2\rangle\langle 2|_{A_1}\otimes|0\rangle\langle 0|_{B_1}\otimes|2\rangle\langle 2|_{A_2}\otimes|1\rangle\langle 1|_{B_2}\\
&\quad + |0\rangle\langle 0|_{A_1}\otimes|1\rangle\langle 1|_{B_1}\otimes|1\rangle\langle 1|_{A_2}\otimes|0\rangle\langle 0|_{B_2}\\
&\quad + |1\rangle\langle 1|_{A_1}\otimes|2\rangle\langle 2|_{B_1}\otimes|0\rangle\langle 0|_{A_2}\otimes|2\rangle\langle 2|_{B_2}.
\end{aligned}
\end{equation}

For each subsystem $K\in\{A_1,B_1,A_2,B_2\}$, define three traceless diagonal operators
\begin{equation}
X_K \coloneqq |0\rangle\langle 0|_K - \frac{1}{3}\mathds{1}_K,\quad
Y_K \coloneqq |1\rangle\langle 1|_K - \frac{1}{3}\mathds{1}_K,\quad
Z_K \coloneqq |2\rangle\langle 2|_K - \frac{1}{3}\mathds{1}_K,
\end{equation}
so that
\begin{equation}
|0\rangle\langle 0|_K = \frac{1}{3}\mathds{1}_K + X_K,\quad
|1\rangle\langle 1|_K = \frac{1}{3}\mathds{1}_K + Y_K,\quad
|2\rangle\langle 2|_K = \frac{1}{3}\mathds{1}_K + Z_K,\quad
X_K + Y_K + Z_K = 0.
\end{equation}

We write the decomposition
\begin{equation}
    R_3 = \sum_{k=1}^9 E_k\otimes F_k,
\end{equation}
where $E_k$ acts on $A_1B_1$ and $F_k$ acts on $A_2B_2$, and decompose each operator as
\begin{equation}
    E_k = E_k^{(0)} + E_k^{(1)},\qquad
    F_k = F_k^{(0)} + F_k^{(1)},
\end{equation}
where the terms $E_k^{(0)} \otimes F_k^{(0)}$ collect only the components appearing in~\eqref{app:eq:lcCond3}:
\begin{equation}
\begin{aligned}
E_1^{(0)} &= \frac{1}{9}\mathds{1}_{A_1B_1} + X_{A_1}\otimes X_{B_1}, &
F_1^{(0)} &= \frac{1}{9}\mathds{1}_{A_2B_2} + X_{A_2}\otimes X_{B_2},\\
E_2^{(0)} &= \frac{1}{9}\mathds{1}_{A_1B_1} + Y_{A_1}\otimes Y_{B_1}, &
F_2^{(0)} &= \frac{1}{9}\mathds{1}_{A_2B_2} + Z_{A_2}\otimes Y_{B_2},\\
E_3^{(0)} &= \frac{1}{9}\mathds{1}_{A_1B_1} + Z_{A_1}\otimes Z_{B_1}, &
F_3^{(0)} &= \frac{1}{9}\mathds{1}_{A_2B_2} + Y_{A_2}\otimes Z_{B_2},\\
E_4^{(0)} &= \frac{1}{9}\mathds{1}_{A_1B_1} + Y_{A_1}\otimes X_{B_1}, &
F_4^{(0)} &= \frac{1}{9}\mathds{1}_{A_2B_2} + Y_{A_2}\otimes Z_{B_2},\\
E_5^{(0)} &= \frac{1}{9}\mathds{1}_{A_1B_1} + Z_{A_1}\otimes Y_{B_1}, &
F_5^{(0)} &= \frac{1}{9}\mathds{1}_{A_2B_2} + X_{A_2}\otimes Y_{B_2},\\
E_6^{(0)} &= \frac{1}{9}\mathds{1}_{A_1B_1} + X_{A_1}\otimes Z_{B_1}, &
F_6^{(0)} &= \frac{1}{9}\mathds{1}_{A_2B_2} + Z_{A_2}\otimes X_{B_2},\\
E_7^{(0)} &= \frac{1}{9}\mathds{1}_{A_1B_1} + Z_{A_1}\otimes X_{B_1}, &
F_7^{(0)} &= \frac{1}{9}\mathds{1}_{A_2B_2} + Z_{A_2}\otimes Y_{B_2},\\
E_8^{(0)} &= \frac{1}{9}\mathds{1}_{A_1B_1} + X_{A_1}\otimes Y_{B_1}, &
F_8^{(0)} &= \frac{1}{9}\mathds{1}_{A_2B_2} + Y_{A_2}\otimes X_{B_2},\\
E_9^{(0)} &= \frac{1}{9}\mathds{1}_{A_1B_1} + Y_{A_1}\otimes Z_{B_1}, &
F_9^{(0)} &= \frac{1}{9}\mathds{1}_{A_2B_2} + X_{A_2}\otimes Z_{B_2}.
\end{aligned}
\end{equation}

Summing all contributions, we obtain
\begin{equation}
    \sum_{t=1}^9 E_t^{(0)}\otimes F_t^{(0)} = \frac{1}{9}\,\mathds{1}_{A_1B_1A_2B_2}.
\end{equation}

Thus
\begin{equation}
R_3 = \frac{1}{9}\mathds{1}_{A_1B_1A_2B_2} + X_{R_3},\qquad
X_{R_3}\in \overline{\Delta}_{(\hat{A}_1 \rightarrow \hat{B}_1) \otimes (\hat{A}_2 \rightarrow \hat{B}_2)},
\end{equation}
so $R_3$ is a valid deterministic bistochastic process matrix of type $\mathtt{BSP}_2^{P,F=I}$. On the other hand, one finds a nonzero term of the form
\begin{equation}
   \frac{1}{4} (X_{A_1} - Y_{A_1}) \otimes (X_{B_1} - Y_{B_1}) \otimes \mathds{1}_{A_2}\otimes (X_{B_2} - Y_{B_2}) \in \mathsf{Trl}(\mathcal{H}_{A_1})\otimes\mathsf{Trl}(\mathcal{H}_{B_1})\otimes\mathsf{I}_{A_2}\otimes\mathsf{Trl}(\mathcal{H}_{B_2}).
\end{equation}
This subspace is contained in $\Delta_{\mathtt{BSP}_2^{P,F=I} \setminus \mathtt{P}_2^{P,F=I}}$; hence, $R_3$ cannot arise as an ordinary process matrix.

\subsubsection{LC $(2,2,?)$-process}
\begin{definition}
The LC $(2,2,?)$ process is defined by the operator
\begin{equation}\label{eq:R-ABAB-short}
\begin{aligned}
R
&= (\mathds{1}-|y\rangle\langle y|)_{A_1}\otimes |x\rangle\langle x|_{B_1}\otimes |x\rangle\langle x|_{A_2}\otimes |x\rangle\langle x|_{B_2}\\
&\quad + (\mathds{1}-|y\rangle\langle y|)_{A_1}\otimes |y\rangle\langle y|_{B_1}\otimes |x\rangle\langle x|_{A_2}\otimes |y\rangle\langle y|_{B_2}\\
&\quad + |y\rangle\langle y|_{A_1}\otimes |x\rangle\langle x|_{B_1}\otimes (\mathds{1}-|x\rangle\langle x|)_{A_2}\otimes |y\rangle\langle y|_{B_2}\\
&\quad + |y\rangle\langle y|_{A_1}\otimes |y\rangle\langle y|_{B_1}\otimes (\mathds{1}-|x\rangle\langle x|)_{A_2}\otimes |x\rangle\langle x|_{B_2}\\
&\quad + |y\rangle\langle y|_{A_1}\otimes (\mathds{1}-|x\rangle\langle x|-|y\rangle\langle y|)_{B_1}
\otimes |x\rangle\langle x|_{A_2}\otimes (\mathds{1}-|x\rangle\langle x|-|y\rangle\langle y|)_{B_2},
\end{aligned}
\end{equation}
where $x,y\in\{0,\dots,d-1\}$.
\end{definition}

To express $R$ in the form~\eqref{app:eq:lcCond1}, define for each subsystem
\begin{equation}
    X_K \coloneqq |x\rangle\langle x|_K - \frac{1}{d}\mathds{1}_K,\qquad
    Y_K \coloneqq |y\rangle\langle y|_K - \frac{1}{d}\mathds{1}_K,
\end{equation}
so
\begin{equation}
|x\rangle\langle x|_K = \frac{1}{d}\mathds{1}_K + X_K,\qquad 
|y\rangle\langle y|_K = \frac{1}{d}\mathds{1}_K + Y_K,
\end{equation}
and similarly for the complements. Writing
\begin{equation}
    R = \sum_{k=1}^5 E_k\otimes F_k,
\end{equation}
with $E_k$ acting on $\mathcal{H}_{A_1} \otimes \mathcal{H}_{B_1}$ and $F_k$ on $\mathcal{H}_{A_2} \otimes \mathcal{H}_{B_2}$, we decompose
\begin{equation}
E_k = E_k^{(0)} + E_k^{(1)},\qquad F_k = F_k^{(0)} + F_k^{(1)},
\end{equation}
where the components $E_k^{(0)}\otimes F_k^{(0)}$ lie in the subspaces appearing in~\eqref{app:eq:lcCond3}:
\begin{equation}
\begin{aligned}
E_1^{(0)} &= \frac{d-1}{d^2}\mathds{1}_{A_1B_1} - Y_{A_1} \otimes X_{B_1},&
F_1^{(0)} &= \frac{1}{d^2}\mathds{1}_{A_2B_2} + X_{A_2} \otimes X_{B_2},\\
E_2^{(0)} &= \frac{d-1}{d^2}\mathds{1}_{A_1B_1} - Y_{A_1} \otimes Y_{B_1},&
F_2^{(0)} &= \frac{1}{d^2}\mathds{1}_{A_2B_2} + X_{A_2} \otimes Y_{B_2},\\
E_3^{(0)} &= \frac{1}{d^2}\mathds{1}_{A_1B_1} + Y_{A_1} \otimes X_{B_1},&
F_3^{(0)} &= \frac{d-1}{d^2}\mathds{1}_{A_2B_2} - X_{A_2} \otimes Y_{B_2},\\
E_4^{(0)} &= \frac{1}{d^2}\mathds{1}_{A_1B_1} + Y_{A_1} \otimes Y_{B_1},&
F_4^{(0)} &= \frac{d-1}{d^2}\mathds{1}_{A_2B_2} - X_{A_2} \otimes X_{B_2},\\
E_5^{(0)} &= \frac{d-2}{d^2}\mathds{1}_{A_1B_1} - Y_{A_1} \otimes X_{B_1} - Y_{A_1} \otimes Y_{B_1},&
F_5^{(0)} &= \frac{d-2}{d^2}\mathds{1}_{A_2B_2} - X_{A_2} \otimes X_{B_2} - X_{A_2} \otimes Y_{B_2}.
\end{aligned}
\end{equation}
A direct substitution yields
\begin{equation}
    \sum_{k=1}^5 E_k^{(0)}\otimes F_k^{(0)} = \frac{1}{d^2}\,\mathds{1}_{A_1B_1A_2B_2},
\end{equation}
thus establishing the required form:
\begin{equation}
R = \frac{1}{d^2}\mathds{1}_{A_1B_1A_2B_2} + X_R,\qquad X_R \in \overline{\Delta}_{(\hat{A}_1\to \hat{B}_1)\otimes(\hat{A}_2\to\hat{B}_2)}.
\end{equation}
Finally, one finds explicit traceless contributions forbidden by ordinary process matrices, e.g.
\begin{equation}
    \frac{1}{d^2}\mathds{1}_{A_1B_1A_2}\otimes(X_{B_2}+Y_{B_2}) \in \mathsf{I}_{A_1B_1A_2}\otimes\mathsf{Trl}(\mathcal{H}_{B_2}),
\end{equation}
which lies in $\Delta_{\mathtt{BSP}_2^{P,F=I} \setminus \mathtt{P}_2^{P,F=I}}$. Thus, the LC $(2,2,?)$ process is a valid bistochastic process matrix but cannot arise as an ordinary process matrix.

\end{document}